\newtheorem{definition}{Definition}[section]
\newtheorem{theorem}{Theorem}[section]
\newtheorem{remark}{Remark}[section]
\newtheorem{lemma}{Lemma}[section]
\newtheorem{corollary}{Corollary}[section]
\begin{document}

\title{Instance-optimal Mean Estimation Under Differential Privacy}




\author{%
  Ziyue Huang, Yuting Liang, Ke Yi\\
  \texttt{\{zhuangbq,yliangbs,yike\}@cse.ust.hk}\\
  Department of Computer Science and Engineering\\
  Hong Kong University of Science and Technology
}

\date{}

\maketitle

\begin{abstract}
Mean estimation under differential privacy is a fundamental problem, but worst-case optimal mechanisms do not offer meaningful utility guarantees in practice when the global sensitivity is very large.  Instead, various heuristics have been proposed to reduce the error on real-world data that do not resemble the worst-case instance.  This paper takes a principled approach, yielding a mechanism that is instance-optimal in a strong sense.  In addition to its theoretical optimality, the mechanism is also simple and practical, and adapts to a variety of data characteristics without the need of parameter tuning.  It easily extends to the local and shuffle model as well.
\end{abstract}

\section{Introduction}
Mean estimation is one of the most fundamental problems in statistics, optimization, and machine learning. 
However, privacy concerns forbid us from using the exact mean in these applications, and the problem of how to achieve the smallest error under a given privacy model has received considerable attention in the literature.  \textit{Differential privacy} (DP) is a rigorous mathematical definition for protecting individual privacy and has emerged as the golden standard in privacy-preserving data analysis nowadays, which has been deployed by Apple \cite{app17}, Google \cite{erlingsson2014rappor}, and Microsoft \cite{ding2017collecting}. 

Given a data set $\mathcal{D} := \{ x_i \}_{i \in [n]} \subset \mathcal{U}^d$, where $\mathcal{U} = [u]$, i.e., each coordinate of the input vector is an integer (real-valued coordinates can be handled by quantization; see remark \ref{rem:real}), our goal is to obtain a differentially private estimation $M(\mathcal{D})$ for the mean $f(\mathcal{D})={1\over n}\sum_{i=1}^n x_i$ with small $\ell_2$ error $\|M(\mathcal{D}) - f(\mathcal{D})\|_2$. Because $f(\cdot)$ has global $\ell_2$ sensitivity $\mathrm{GS}_f = \sqrt{d} u /n$, the standard DP mechanism just adds Gaussian noise scaled to $\mathrm{GS}_f$ to each coordinate of $f(\mathcal{D})$, which results in an $\ell_2$ error proportional to $du/n$.  This simple mechanism is worst-case optimal \cite{kamath2020primer}, but it is certainly undesirable in practice, as people often conservatively use a large $u$ (e.g., $u=2^{32}$) but the actual dataset $\mathcal{D}$ may have much smaller coordinates. Instead, the \textit{clipped-mean estimator} \cite{abadi2016deep} (see Section \ref{sec:clip} for details) has been widely used as an effective heuristic, but two questions remain unresolved: (1) how to choose the clipping threshold $C$; and (2) if it can yield any optimality guarantees. We answer these questions in a fairly strong sense in this paper. 

\subsection{Instance Optimality}
\label{sec:inst_opt}
As worst-case optimality is theoretically trivial and practically meaningless for the mean estimation problem when the global sensitivity is too large, one may aim at instance optimality.  More precisely, let $\mathcal{M}$ be the class of DP mechanisms and let 
\[\mathcal{R}_{\text{ins}}(\mathcal{D}) := \inf_{M' \in \mathcal{M}} \inf\{ \xi \mid \Pr[\|M'(\mathcal{D}) - f(\mathcal{D})\|_2 \le \xi]\ge 2/3\} \]be the smallest error any $M'$ can achieve (with constant probability) on $\mathcal{D}$, then the standard definition of instance optimality requires us to design an $M$ such that
\begin{equation}
\label{eq:insopt}
\Pr[\|M(\mathcal{D}) - f(\mathcal{D})\|_2  \le c\cdot \mathcal{R}_{\text{ins}}(\mathcal{D})] \ge 2/3 
\end{equation}
for every $\mathcal{D}$, where $c$ is called the \textit{optimality ratio}.  Unfortunately, for any $\mathcal{D}$, one can design a trivial $M'(\cdot) \equiv f(\mathcal{D})$ that has $0$ error on $\mathcal{D}$ (but fails miserably on other instances), so $\mathcal{R}_{\text{ins}}(\cdot)\equiv 0$, which rules out instance-optimal DP mechanisms by a standard argument \cite{dwork2014algorithmic}. 

Since $\mathcal{R}_{\text{ins}}(\cdot)$ is unachievable, relaxed versions can be considered.  The above trivial $M'$ exists because it is only required to work well on one instance $\mathcal{D}$.  Imposing higher requirements on $M'$ would yield relaxed notions of instance optimality. One natural requirement is that $M'$ should work well not just on $\mathcal{D}$, but also on its neighbors, i.e., we raise the target error from $\mathcal{R}_{\text{ins}}(\mathcal{D})$ to
\[\mathcal{R}_{\text{nbr}}(\mathcal{D}) := \inf_{M' \in \mathcal{M}} \sup_{\mathcal{D}': d_{\text{ham}}(\mathcal{D}, \mathcal{D}') \le 1}\inf\{ \xi \mid \Pr[\|M'(\mathcal{D}') - f(\mathcal{D}')\|_2 \le \xi]\ge 2/3\}. \]
Vahdan \cite{vadhan2017complexity} observes that $\mathcal{R}_{\text{nbr}}(\mathcal{D})$ is exactly $\mathrm{LS}_f(\mathcal{D})$, the \textit{local sensitivity} of $f$ at $\mathcal{D}$, up to constant factors.  However, $\mathrm{LS}_f(\cdot)$ may not be an appropriate target to shoot at, depending on what $f$ is.  For the {\sc Median} problem, $\mathrm{LS}_f(\mathcal{D})=0$ for certain $\mathcal{D}$'s and no DP mechanisms can achieve this error \cite{nissim2007smooth}, while for mean estimation, $\mathrm{LS}_f(\mathcal{D}) = \Theta(\mathrm{GS}_f) = \Theta(\sqrt{d} u/n)$ for all $\mathcal{D}$, so this relaxation turns instance optimality into worst-case optimality. 

The reason why the above relaxation is ``too much'' for the mean estimation problem is that $\mathcal{D}'$ may change one vector of $\mathcal{D}$ \textit{arbitrarily}, e.g., from $(0,\dots,0)$ to $(u,\dots,u$). We restrict this.  More precisely, letting $\mathrm{supp}(\mathcal{D})$ denote the set of distinct vectors in $\mathcal{D}$, we consider the target error
\[\mathcal{R}_{\text{in-nbr}}(\mathcal{D}) := \inf_{M' \in \mathcal{M}} \sup_{\mathcal{D}': d_{\text{ham}}(\mathcal{D}, \mathcal{D}') \le 1,\mathrm{supp}(\mathcal{D}') \subseteq \mathrm{supp}(\mathcal{D}) }\inf\{ \xi \mid \Pr[\|M'(\mathcal{D}') - f(\mathcal{D}')\|_2 \le \xi]\ge 2/3\}, \]
namely, we require $M'$ to work well only on $\mathcal{D}$ and its \textit{in-neighbors}, in which a vector can only be changed to another one already existing in $\mathcal{D}$.  Correspondingly, an instance-optimal $M$ (w.r.t. the in-neighborhood) is one such that \eqref{eq:insopt} holds where $\mathcal{R}_{\text{ins}}$ is replaced by $\mathcal{R}_{\text{in-nbr}}$. 

We make a few notes on this notion of instance optimality: (1) This optimality is only about the utility of the mechanism, not its privacy. We still require the mechanism to satisfy the DP requirement between \textit{any} $\mathcal{D}, \mathcal{D}'$ such that $d_{\text{ham}}(\mathcal{D}, \mathcal{D}') = 1$, not necessarily one and its in-neighbors. (2)  In general, a smaller neighborhood leads to a stronger notion of instance optimality.  Thus, the optimality using in-neighbors is stronger than that using all neighbors, which is in turn stronger than worst-case optimality (i.e., $\mathcal{D}'$ can be any instance), while the latter two are actually the same for the mean estimation problem. (3) For an instance-optimal $M$ (by our notion), there still exist  $\mathcal{D}, M'$ such that $M'$ does better on $\mathcal{D}$  than $M$, but it is not possible for $M'$ to achieve a smaller error than the error of $M$ on $\mathcal{D}$ over all in-neighbor of $\mathcal{D}$. This is more meaningful than ranging over all neighbors of $\mathcal{D}$, some of which (e.g., one with $(u,\dots,u)$ as a datum) are unlikely to be the actual instances encountered in practice. 

\subsection{Our Results}
To design an $M(\mathcal{D})$ for the mean function $f(\mathcal{D})={1\over n}\sum_{i=1}^n x_i$ that achieves an error w.r.t.\ $\mathcal{R}_{\text{in-nbr}}(\mathcal{D})$ for all $\mathcal{D}$, we need an upper bound and a lower bound. For the lower bound, we show that $\mathcal{R}_{\text{in-nbr}}(\mathcal{D}) = \Omega(w(\mathcal{D})/n)$, where $w(\mathcal{D}):= \max_{1\le i<j \le n} \|x_i - x_j\|_2$ is the \textit{diameter} of $\mathcal{D}$.  Thus, from the upper bound side, it suffices to show that the mechanism's error is bounded by $c\cdot w(\mathcal{D})/n$. This is achieved in two steps.  First, we use the clipped-mean estimator, but find the clipping threshold $C$ that optimizes its bias-variance trade-off, which is a certain quantile of the norms of the vectors in $\mathcal{D}$.  However, we cannot use the optimal $C$ directly, as it would violate DP.  Thus, we use a simple binary search based algorithm that can find any specific quantile privately with an optimal rank error. This results in a DP mechanism with error $\tilde{O}(\sqrt{d/\rho})\cdot r(\mathcal{D})/n$, where $r(\mathcal{D}) := \max_i \| x_i \|_2$ and $\rho$ is the privacy parameter (formal definition given in Section \ref{sec:prelim}).  To reduce the error from $r(\mathcal{D})$ to $w(\mathcal{D})$, in the second step, we rotate and shift $\mathcal{D}$ into a $\tilde{\mathcal{D}}$ such that $r(\tilde{\mathcal{D}}) = O(w(\mathcal{D}))$ w.h.p., and apply the clipped-mean estimator (with our privatized optimal clipping threshold) on $\tilde{\mathcal{D}}$, leading to an error of  $\tilde{O}(\sqrt{d/\rho})\cdot w(\mathcal{D})/n$ for $n = \tilde{\Omega}(\sqrt{d/\rho})$.  We also show that the optimality ratio $c=\tilde{O}(\sqrt{d/\rho})$ is optimal, i.e., any mechanism $M(\mathcal{D})$ having error $c\cdot w(\mathcal{D})/n$ for all $\mathcal{D}$ must have $c=\tilde{\Omega}(\sqrt{d/\rho})$ for $\rho < \tilde{O}(\sqrt{d/n})$.


Our mechanism has the following applications: (1) It can be applied directly to \textit{statistical mean estimation}, where the vectors in $\mathcal{D}$ are i.i.d.\ samples from a certain distribution and one would like to estimate the mean of the distribution (in contrast, the version defined above is referred to as \textit{empirical mean estimation}). For concreteness, we show how this is done for the multivariate Gaussian distributions $\mathcal{N}(\mu, \Sigma)$.
For the case $\Sigma = \mathbf{I}$, our algorithm achieves an $\ell_2$ error of $\alpha$ using $n=\tilde{O}({d\over \alpha^2} +{d\over \alpha \sqrt{\rho}})$ samples for $\alpha\le O(1)$, matching the optimal bound in the statistical setting \cite{biswas2020coinpress}.  For a non-identity, unknown $\Sigma$, the error is proportional to $\|\Sigma^{1/2}\|_2$ as in \cite{biswas2020coinpress}. Our mechanism requires only crude \textit{a priori} bounds on $\mu$ and $\Sigma$ (i.e., the error depends on these bounds logarithmically), while \cite{biswas2020coinpress} needs a constant-factor approximation of $\Sigma$, which can be obtained using $n = \tilde{\Omega}(d^{3/2}/\sqrt{\rho})$ samples \cite{kamath2019privately}. Note that this can be a $\sqrt{d}$-factor higher than the sample complexity of mean estimation.  Fundamentally, estimating $\Sigma$ is harder than estimating $\mu$, and we bypass the former so as to retain the same sample complexity of the latter. In practice, estimating $\Sigma$ first would consume the privacy budget from the mean estimation problem itself. On the other hand, the benefit of estimating $\Sigma$ first is that one can obtain an error guarantee under the Mahalanobis distance \cite{kamath2019privately}, which cannot be achieved by our method.
(2) By simply changing the primitive operations, our mechanism easily extends to the local and shuffle model of differential privacy. In doing so, we also extend the one-dimensional summation/mean estimation protocol in the shuffle model \cite{balle2020private} to high dimensions.

In addition to the theoretical optimality, our mechanism is also simple and practical.  Most importantly, there is no (internal) parameter to tune.  Yet, our experimental results demonstrate that our mechanism outperforms the state-of-the-art algorithm \cite{biswas2020coinpress} with the best parameters tuned for each specific setting.

\subsection{Related Work}
Asi and Duchi \cite{asi2020instance} recently initialized the study on instance optimality under DP.  They propose two ways to relax (equivalently, strengthen the requirement on $M'$) the strict instance optimality, which is unachievable.  The first is to require $M'$ to be unbiased.  This is not appropriate for mean estimation, since many estimators, including clipped-mean, is not unbiased.  The second is to require $M'$ to work well over all the $r$-distance neighbors of $\mathcal{D}$ for $r\ge 1$. Thus, their optimality is weaker than using $\mathcal{R}_{\text{nbr}}(\cdot)$, hence not appropriate for the mean estimation problem (i.e., their optimality is the same as worst-case optimality). Instance optimality has not been studied in the local or shuffle model; existing protocols in these two models \cite{bhowmick2018protection, duchi2013local, balle2020private} all have errors proportional to the global sensitivity. 

How to choose the clipping threshold $C$ for the clipped mean estimator has been extensively studied \cite{amin2019bounding, andrew2019differentially, pichapati2019adaclip, mcmahan2017learning}, but existing methods do not offer any optimality guarantees.
In particular, Andrew et al. \cite{andrew2019differentially} also use a quantile (actually, median) as $C$, but as we shall see, median is actually not the optimal choice.  Furthermore, they use online gradient descent to find a privatized quantile, which does not have any theoretical error guarantees.  Amin et al. \cite{amin2019bounding} attempt to select an optimal quantile as the clipping threshold to truncate the number of contributions from each user, instead of clipping the actual samples in high dimensions as in our paper.

In the statistical setting, where the data are i.i.d.\ samples from some specific distribution, there are numerous methods \cite{kamath2019privately, biswas2020coinpress, kamath2020private, karwa2018finite} that can avoid an error proportional to the global sensitivity, by exploiting the concentration property of the distribution.  In particular, Biswas et al.~\cite{biswas2020coinpress} provide a simple and practical mechanism for multivariate Gaussian data. Levy et al. \cite{levy2021learning} propose a private mean estimator with error scaling with the concentration radius $\tau$ of the distribution rather
than the entire range, but their algorithm requires $\tau$ to be publicly known in advance. In the local model, the algorithm in \cite{gaboardi2019locally} uses a quantile estimation procedure based on binary search as a subroutine for one-dimensional Gaussian data.

Very recently, the relationship between the error of mean estimation and the diameter of the dataset has been exploited in \cite{davies2021new} for low-communication protocols, but they do not consider privacy.  Our DP protocols in the local and shuffle models have communication cost $\tilde{O}(d)$ per user (we do not state the communication costs in the theorems as they are not our major concern); it would be interesting to see if ideas from \cite{davies2021new} can be used to reduce it further.

\section{Preliminaries}
\label{sec:prelim}
\subsection{Differential Privacy in the Central Model}
\begin{definition}
[Differential Privacy (DP) \cite{dwork2014algorithmic}]
\label{def:dp}
For $\varepsilon > 0$ and $\delta \ge 0$, a randomized algorithm $M : \mathcal{X}^n \rightarrow \mathcal{Y}$ is $(\varepsilon, \delta)$-differentially private if for any neighboring datasets $\mathcal{D} \sim \mathcal{D}^\prime$ (i.e., $d_{\text{ham}}(\mathcal{D}, \mathcal{D}')=1$) and any $E \subseteq \mathcal{Y}$, 
\begin{align*}
\mathsf{Pr}[M(\mathcal{D}) \in E] \le e^\varepsilon \cdot \mathsf{Pr}[M(\mathcal{D}^\prime) \in E] + \delta.
\end{align*}
\end{definition}

\begin{definition}
[Concentrated Differential Privacy (zCDP) \cite{bun2016concentrated}]
\label{def:cdp}
For $\rho > 0$, a randomized algorithm $M : \mathcal{X}^n \rightarrow \mathcal{Y}$ is $\rho$-zCDP if for any $\mathcal{D} \sim \mathcal{D}^\prime$,
\begin{align*}
D_{\alpha} (M(\mathcal{D}) || M(\mathcal{D}^\prime)) \le \rho \alpha
\end{align*}
for all $\alpha > 1$, where $D_{\alpha}(M(\mathcal{D}) || M(\mathcal{D}^\prime))$ is the $\alpha$-R\'enyi divergence between $M(\mathcal{D})$ and $M(\mathcal{D}^\prime)$. 
\end{definition}

Note that $(\varepsilon, 0)$-DP implies $\frac{\varepsilon^2}{2}$-zCDP, which implies $(\frac{\varepsilon^2}{2} + \varepsilon \sqrt{2 \log \frac{1}{\delta}}, \delta)$-DP for any $\delta>0$. To release a numeric function $f(\mathcal{D})$ taking values in $\mathbb{R}^d$, the most common technique for achieving zCDP is by masking the result with Gaussian noise calibrated to the $\ell_2$-sensitivity of $f$.


\begin{lemma}[Gaussian Mechanism \cite{bun2016concentrated}]
\label{lem:gauss}
Let $f : \mathcal{X}^n \rightarrow \mathbb{R}^d$ be a function with global $\ell_2$-sensitivity $\mathrm{GS}_f := \max_{\mathcal{D} \sim \mathcal{D}'} \| f(\mathcal{D}) - f(\mathcal{D}') \|_2$. For a given data set $\mathcal{D} \in \mathcal{X}^n$, the mechanism that releases
$f(\mathcal{D}) + \mathcal{N}\left(0, \frac{\mathrm{GS}_f^2}{2\rho} \cdot I_{d\times d}\right)$
satisfies $\rho$-zCDP.
\end{lemma}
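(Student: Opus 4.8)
The plan is to reduce the claim to the closed-form expression for the R\'enyi divergence between two spherical Gaussians with equal covariance. Fix any pair of neighboring data sets $\mathcal{D} \sim \mathcal{D}'$ and write $\sigma^2 := \mathrm{GS}_f^2/(2\rho)$, so that on input $\mathcal{D}$ the mechanism outputs a sample from $P := \mathcal{N}(f(\mathcal{D}), \sigma^2 I_{d\times d})$ and on input $\mathcal{D}'$ a sample from $Q := \mathcal{N}(f(\mathcal{D}'), \sigma^2 I_{d\times d})$. Let $\Delta := f(\mathcal{D}) - f(\mathcal{D}')$; by the definition of global $\ell_2$-sensitivity, $\|\Delta\|_2 \le \mathrm{GS}_f$. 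It suffices to show $D_\alpha(P \| Q) \le \rho\alpha$ for every $\alpha > 1$; the symmetric bound with $P, Q$ swapped is automatic since the final estimate depends only on $\|\Delta\|_2$.

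The next step is to compute $D_\alpha(P\|Q)$ explicitly. Because $P$ and $Q$ are isotropic with the same variance, a rotation of coordinates that aligns one axis with $\Delta$ makes the two distributions product measures whose factors agree in all but one coordinate; the $d-1$ matching coordinates contribute $0$ to the divergence, so the computation collapses to the one-dimensional case of $\mathcal{N}(0,\sigma^2)$ versus $\mathcal{N}(\|\Delta\|_2,\sigma^2)$. For the univariate case I would substitute the Gaussian densities into $D_\alpha(P\|Q) = \frac{1}{\alpha-1}\log \int_{\mathbb{R}} p(x)^\alpha q(x)^{1-\alpha}\,dx$, complete the square in the exponent, and evaluate the resulting Gaussian integral (which converges for every finite $\alpha$ when the variances coincide); this routine calculation yields $D_\alpha(P\|Q) = \frac{\alpha \|\Delta\|_2^2}{2\sigma^2}$.

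Finally I would plug in $\sigma^2 = \mathrm{GS}_f^2/(2\rho)$ together with $\|\Delta\|_2 \le \mathrm{GS}_f$ to obtain
\[ D_\alpha(P \| Q) = \frac{\alpha \|\Delta\|_2^2}{2\sigma^2} \le \frac{\alpha \,\mathrm{GS}_f^2}{2\sigma^2} = \rho\alpha , \]
which is exactly the condition in Definition \ref{def:cdp}. Since $\mathcal{D} \sim \mathcal{D}'$ was an arbitrary neighboring pair, the mechanism is $\rho$-zCDP.

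There is no genuine obstacle here: the only substantive ingredient is the standard closed-form R\'enyi divergence of Gaussians, and the reduction from $d$ dimensions to one via rotational invariance. The only points needing a moment of care are the convergence of the integral defining $D_\alpha$ (fine because the covariances are equal) and that the sensitivity bound is applied in the $\ell_2$ norm so as to match the spherical noise.
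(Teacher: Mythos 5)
The paper does not prove this lemma itself; it cites it to Bun and Steinke (2016). Your argument is correct and is essentially the same one given there: reduce to the one-dimensional case by rotational invariance (or, equivalently, note the Gaussian factors over coordinates and only the coordinate along $\Delta$ contributes), invoke the closed form $D_\alpha\bigl(\mathcal{N}(\mu_1,\sigma^2)\,\big\|\,\mathcal{N}(\mu_2,\sigma^2)\bigr)=\alpha(\mu_1-\mu_2)^2/(2\sigma^2)$, and bound $\|\Delta\|_2\le \mathrm{GS}_f$.
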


\begin{lemma}
[Composition Theorem \cite{bun2016concentrated, dwork2014algorithmic}]
\label{lem:composition}
If $M$ is an adaptive composition of differentially private algorithms $M_1, M_2, \ldots, M_k$, then
\begin{enumerate}
\item If each $M_i$ satisfies $(\varepsilon_i, \delta_i)$-DP, then $M$ satisfies $(\sum_i \varepsilon_i, \sum_i \delta_i)$-DP.
\item For all $\varepsilon, \delta, \delta' \ge 0$, if each $M_i$ satisfies $(\varepsilon, \delta)$-DP, then $M$ satisfies $(\varepsilon', k\delta + \delta')$-DP, where $\varepsilon' = \sqrt{2k\log\frac{1}{\delta'}} \varepsilon + k\varepsilon(e^\varepsilon - 1)$.
\item If each $M_i$ satisfies $\rho_i$-zCDP, then $M$ satisfies $(\sum_i \rho_i)$-zCDP.
\end{enumerate}
\end{lemma}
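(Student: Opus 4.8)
The three parts are classical, and the plan is to dispatch them in increasing order of difficulty. For part 1 I would proceed by induction on $k$, conditioning at each step on the transcript $y_{<i}=(y_1,\dots,y_{i-1})$ produced so far; by adaptivity the conditional law of $M_i$ is still $(\varepsilon_i,\delta_i)$-DP between any neighbours $\mathcal{D}\sim\mathcal{D}'$. The cleanest route uses the mixture view of approximate DP: one can write $M_i(\mathcal{D})=(1-\delta_i)P_i+\delta_i Q_i$ with $P_i$ having likelihood ratio pointwise in $[e^{-\varepsilon_i},e^{\varepsilon_i}]$ against the matching component of $M_i(\mathcal{D}')$. A union bound over the $k$ ``bad'' events that some $Q_i$ component is selected has total mass at most $\sum_i\delta_i$, and on the complement the product of per-step likelihood ratios is at most $\prod_i e^{\varepsilon_i}=e^{\sum_i\varepsilon_i}$; this is exactly $(\sum_i\varepsilon_i,\sum_i\delta_i)$-DP.

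For part 3, the tool is the chain rule (equivalently, additivity under adaptive composition) for R\'enyi divergence applied to the joint transcript $M(\mathcal{D})=(M_1,\dots,M_k)$. Conditioning on the prefix $y_{<i}$, the $i$-th step is $\rho_i$-zCDP between $\mathcal{D}$ and $\mathcal{D}'$ by hypothesis, so $D_\alpha(M(\mathcal{D})\,\|\,M(\mathcal{D}'))\le\sum_i\rho_i\alpha=(\sum_i\rho_i)\alpha$ for every $\alpha>1$, which is the definition of $(\sum_i\rho_i)$-zCDP.

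The real work — and the step I expect to be the main obstacle — is part 2, the advanced composition theorem. Here I would work with the privacy-loss random variable $L_i=\log\frac{\Pr[M_i(\mathcal{D})=y_i\mid y_{<i}]}{\Pr[M_i(\mathcal{D}')=y_i\mid y_{<i}]}$ for $y_i$ drawn from $M_i(\mathcal{D})$, first in the pure-DP case and then absorbing the $\delta$'s via the same mixture decomposition as in part 1 (which is what contributes the additive $k\delta$ in the final slack). Two facts drive the argument: on the good event each $L_i$ lies in $[-\varepsilon,\varepsilon]$, and $\mathbb{E}[L_i\mid y_{<i}]\le\varepsilon(e^\varepsilon-1)$ by a second-order expansion of the relevant KL-type quantity. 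The partial sums of $L_i-\mathbb{E}[L_i\mid y_{<i}]$ form a bounded-increment martingale, so an Azuma/Hoeffding-type inequality yields $\Pr[\sum_i L_i>k\varepsilon(e^\varepsilon-1)+\sqrt{2k\log\frac{1}{\delta'}}\,\varepsilon]\le\delta'$. Finally, the standard fact that a pointwise bound on the total privacy loss holding except with probability $\beta$ implies $(\varepsilon',\beta)$-DP converts this into the claimed $(\varepsilon',k\delta+\delta')$-DP guarantee. The delicate points are pinning down the exact constants in the martingale increment bound and the mean bound so that the concentration inequality produces precisely $\sqrt{2k\log\frac1{\delta'}}\,\varepsilon$, and carefully bookkeeping the $k\delta$ term when passing from pure to approximate DP.
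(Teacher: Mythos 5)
The paper offers no proof of this lemma: it is stated as a citation to Bun--Steinke and Dwork--Roth, and used as a black box. So there is nothing in the paper for your attempt to diverge from. That said, your three sketches reproduce the standard proofs from precisely those sources: the mixture-decomposition argument for basic composition, the chain rule for R\'enyi divergence under adaptive composition for zCDP, and the privacy-loss martingale with Azuma--Hoeffding for advanced composition (where, as you anticipate, the constant $\sqrt{2k\log(1/\delta')}$ comes from applying Hoeffding's lemma with increments confined to an interval of length $2\varepsilon$, and the drift $\varepsilon(e^\varepsilon-1)$ is the KL bound between $(\varepsilon,0)$-indistinguishable distributions). No gaps; the sketch is faithful to the cited proofs.
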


\subsection{Differential Privacy in the Local Model and Shuffle Model}
The above definitions of DP and zCDP assume that $\mathcal{D}$ is handled by a trusted curator and only the output of the mechanism will be released to the public. Therefore, if the curator is corrupted, the privacy of all users will be breached. For weaker trust assumptions, the most popular models are the local model and the shuffle model, where each user holds their datum and locally privatizes (by some randomized mechanism) the message before sending it out for analysis. Hence, there is no third-party who has direct access to $\mathcal{D}$. Formally, each user holds one datum $x_i \in \mathcal{D}$, and the protocol interacts with the dataset using some local randomizer $R : \mathcal{X} \rightarrow \mathcal{Y}$, and the privacy guarantee is defined over the transcript (all messages sent during the protocol). For simplicity, we only present the definition for one-round protocols; the privacy guarantee of multi-round protocols can be composed across all rounds by the composition theorem.  The definition below uses zCDP; other DP notions can be defined similarly.

\begin{definition}
[Local Model (LDP)]
A protocol using $R(\cdot)$ as the local randomizer satisfies $\rho$-zCDP in the local model if for any $x, x^\prime \in \mathcal{X}$, any $\alpha>1$, $D_{\alpha} (R(x) || R(x')) \le \rho \alpha$.
\end{definition}

Due to the much stronger privacy requirement, the best accuracy guarantee of LDP protocols for several fundamental problems \cite{chan2012optimal, bassily2015local, duchi2019lower, mcgregor2010limits} is a $\sqrt{n}$-factor worse than that in the central model. The shuffle model is established on an intermediary level of trust assumption between the local model and the central model and aims for obtaining errors closer to the central model. The key feature of the shuffle model is a trusted shuffler $\mathcal{S}$, which can permute all messages randomly before sending them to the analyzer, so that an adversary cannot identify the source of any message. Specifically, we consider the multi-message shuffle model, where each local randomizer $R : \mathcal{X} \rightarrow \mathcal{Y}^m$ outputs $m$ messages, and the transcript of the protocol $\Pi_{P}(\mathcal{D})$ is a random permutation of all $mn$ messages.  The following definition uses $(\varepsilon, \delta)$-DP; the other two DP notions can also be defined similarly, but they do not offer the improvements that we want over LDP protocols.

\begin{definition}
[Shuffle Model]
A protocol $P$ satisfies $(\varepsilon, \delta)$-DP in the shuffle model if for any $\mathcal{D} \sim \mathcal{D}'$, and any set $E \subseteq \mathcal{Y}^{mn}$, $\mathsf{Pr}[\Pi_P(\mathcal{D}) \in E] \le e^\varepsilon \cdot \mathsf{Pr}[\Pi_P(\mathcal{D}') \in E] + \delta$.
\end{definition}


\section{Our Method}

\subsection{Clipped-Mean Estimator}
\label{sec:clip}


In the rest of the paper, we focus on the mean function $f(\mathcal{D})={1\over n}\sum_{i=1}^n x_i$.  Since $\mathrm{GS}_f$ is large, a very natural idea is to clip each vector in its $\ell_2$ norm by some threshold $C$.  This reduces $\mathrm{GS}_f$ to $2C/n$, leading to the clipped-mean estimator \cite{abadi2016deep}:
\begin{equation}
\label{eq:mest}
    M_C(\mathcal{D}) = \frac{1}{n} \sum_{i=1}^n \min \left\{ \frac{C}{\| x_i \|_2}, 1 \right\} \cdot x_i + \mathcal{N}\left(\mathbf{0}, \frac{2C^2}{\rho n^2}  \mathbf{I}\right).
\end{equation}
\begin{lemma}
\label{lem:c_err}
For any given $C$, $M_C(\mathcal{D})$ satisfies $\rho$-zCDP, and has an expected $\ell_2$ error at most
\begin{align*}
    \mathsf{E} \left[ \left\| M_C(\mathcal{D}) - f(\mathcal{D})\right\|_2 \right] \le \mathcal{E}(C; \mathcal{D}) := \frac{1}{n} \sum_{i=1}^n \max \{ \| x_i \|_2 - C, 0 \} + \frac{C}{n} \cdot \sqrt{\frac{2d}{\rho}}.
\end{align*}
\end{lemma}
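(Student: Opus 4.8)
The plan is to split the bound into two independent pieces: the privacy claim, and the utility claim, and within the utility claim to separate the bias (from clipping) from the variance (from the Gaussian noise). For the privacy part, I would observe that the map $x \mapsto \min\{C/\|x\|_2, 1\}\cdot x$ is the Euclidean projection onto the ball of radius $C$, so each summand has $\ell_2$ norm at most $C$; hence changing one datum changes $\frac1n\sum_i \min\{C/\|x_i\|_2,1\}\cdot x_i$ by at most $2C/n$ in $\ell_2$ norm, i.e. this pre-noise quantity has global $\ell_2$-sensitivity at most $2C/n$. Then I would invoke Lemma~\ref{lem:gauss} (the Gaussian mechanism): adding $\mathcal{N}(\mathbf0, \frac{(2C/n)^2}{2\rho}\mathbf{I}) = \mathcal{N}(\mathbf0, \frac{2C^2}{\rho n^2}\mathbf{I})$ gives $\rho$-zCDP, which matches the noise in \eqref{eq:mest}.

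For the utility part, write $g(\mathcal{D}) := \frac1n\sum_i \min\{C/\|x_i\|_2,1\}\cdot x_i$ for the clipped (noiseless) mean, so that $M_C(\mathcal{D}) = g(\mathcal{D}) + Z$ with $Z \sim \mathcal{N}(\mathbf0,\frac{2C^2}{\rho n^2}\mathbf{I})$. By the triangle inequality and linearity of expectation, $\mathsf{E}\|M_C(\mathcal{D}) - f(\mathcal{D})\|_2 \le \|g(\mathcal{D}) - f(\mathcal{D})\|_2 + \mathsf{E}\|Z\|_2$. For the first term, $g(\mathcal{D}) - f(\mathcal{D}) = \frac1n\sum_i \big(\min\{C/\|x_i\|_2,1\} - 1\big)x_i$, and each vector $\big(\min\{C/\|x_i\|_2,1\}-1\big)x_i$ is zero when $\|x_i\|_2 \le C$ and otherwise has norm $\|x_i\|_2 - C$; so by the triangle inequality this term is at most $\frac1n\sum_i \max\{\|x_i\|_2 - C, 0\}$, the bias term in $\mathcal{E}(C;\mathcal{D})$. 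For the second term, $\mathsf{E}\|Z\|_2 \le \sqrt{\mathsf{E}\|Z\|_2^2} = \sqrt{d\cdot \frac{2C^2}{\rho n^2}} = \frac{C}{n}\sqrt{\frac{2d}{\rho}}$ by Jensen's inequality and the fact that $\mathsf{E}\|Z\|_2^2$ is the sum of $d$ coordinate variances. Adding the two bounds gives exactly $\mathcal{E}(C;\mathcal{D})$.

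There is no real obstacle here — the lemma is essentially a bias–variance decomposition dressed up with the standard Gaussian-mechanism analysis. The only points requiring a little care are: (i) checking that clipping in $\ell_2$ norm is a $1$-Lipschitz (nonexpansive) map, so that the per-datum sensitivity is genuinely $\le 2C/n$ rather than something larger; and (ii) using $\mathsf{E}\|Z\|_2 \le \sqrt{\mathsf{E}\|Z\|_2^2}$ rather than attempting the exact expected norm of a Gaussian (which would introduce a $\Theta(\sqrt d)$ factor with a slightly different constant) — Jensen is what makes the clean bound $\frac{C}{n}\sqrt{2d/\rho}$ come out. Everything else is routine.
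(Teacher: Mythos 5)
Your proposal is correct and follows essentially the same route as the paper's proof: a bias--variance decomposition where the clipping bias is bounded by noting that $\ell_2$-clipping preserves direction (so each term contributes at most $\max\{\|x_i\|_2-C,0\}$), and the noise term is bounded via Jensen's inequality applied to $\mathsf{E}\|Z\|_2 \le \sqrt{\mathsf{E}\|Z\|_2^2}$. You simply spell out the sensitivity computation and the per-term bias bound in more detail than the paper does.
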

\ifthenelse{\boolean{long}}{
\begin{proof}
The privacy guarantee easily follows from Lemma~\ref{lem:gauss}. The error of $M_C(\mathcal{D})$ is composed of two parts: the bias from clipping and the (square root of the) variance from the Gaussian noise $\mathcal{N}(0, 2C^2/(\rho n)\cdot \mathbf{I})$. 
Because the $\ell_2$ clipping does not change the direction of the input vector, the bias introduced by clipping is at most $\frac{1}{n} \sum_{i} \max \{ \| x_i \|_2 - C, 0 \}$. The variance introduced by the Gaussian noise is at most $C^2/n^2 \cdot 2d/\rho$ by Jensen inequality.
\end{proof}
}{}
An important remaining question is how to set the clipping threshold $C$. Setting it too low will result in a large bias, while setting it too high will introduce a large amount of noise. We show how to choose the optimal $C$ to balance this bias-variance trade-off. It is easy to see that the error $\mathcal{E}(C; \mathcal{D})$ is a convex function w.r.t.\ $C$, thus the optimal $C$ can be found by setting the derivative of $\mathcal{E}(C; \mathcal{D})$ to zero, i.e.,
\begin{align*}
    \frac{\partial \mathcal{E}(C; \mathcal{D})}{\partial C}  =  \frac{1}{n} | \{ i \in [n] \mid \| x_i \|_2 > C \} | - \frac{1}{n} \cdot \sqrt{2d/\rho} = 0.
\end{align*}
Therefore, the optimal choice of $C$ is the $(n - \sqrt{2d/\rho})$-th quantile of $\{ \| x_i \|_2 \}_{i \in [n]}$. 


\subsection{Private Quantile Selection}

However, we cannot use the optimal $C$ directly, as it would violate DP.  Instead, we find a privatized quantile with small rank error. Specifically, for this problem, $\mathcal{D}$ consists of a sequence of ordered integers $0\le x_{(1)}\le  \cdots \le x_{(n)}  \le u$.  We would like to design a DP mechanism that, for a given $m$, returns an $x$ (which is not necessarily an element in $\mathcal{D}$) such that $x_{(m - \tau)} \le x \le x_{(m + \tau)}$\footnote{Define $x_{(j)}=0$ for $j< 1$ and $x_{(j)}=u$ for $j>n$.} w.h.p. Here $\tau$ is referred to as the \textit{rank error}.
Existing methods on private range counting queries \cite{chan11continual,dwork2015pure} can be used for this purpose, but they actually find all quantiles, which is an overkill. Instead, we use a simple binary search algorithm \cite{gilbert2002summarize, cormode2005improved}, which not only simplifies the algorithm, but also reduces the rank error (by $\text{polylog}(u)$ factors) to nearly optimal.
\ifthenelse{\boolean{long}}{Our algorithm \texttt{PrivQuant} makes use of a function \texttt{NoisyRC}$([a,b], \mathcal{D})$ that returns a noisy count of $|\mathcal{D} \cap [a,b]|$.}{Our algorithm \texttt{PrivQuant} makes use of a function \texttt{NoisyRC}$([a,b], \mathcal{D})$ that returns a noisy count of $|\mathcal{D} \cap [a,b]|$, and we present the details in the supplementary material.}

\ifthenelse{\boolean{long}}{
\begin{algorithm}[htbp]
\caption{DP Quantile Selection by Binary Search; \texttt{PrivQuant}}
\label{alg:quantile}
\begin{algorithmic}[1]
    \REQUIRE the data set $\mathcal{D}: 0\le x_{(1)}\le  \cdots \le x_{(n)}  \le u$; $m \in [n]$.
    \ENSURE a DP approximation to $x_{(m)}$.
    
    \STATE $\text{left} \leftarrow 0, \text{right} \leftarrow u$
    
    \STATE \algorithmicwhile \  $\text{left} < \text{right}$ \algorithmicdo
    
    \STATE \hspace{\algorithmicindent} $\text{mid} \leftarrow \lfloor (\text{left} + \text{right})/2 \rfloor$
    
    
    \STATE \hspace{\algorithmicindent} $\tilde{c} \leftarrow \texttt{NoisyRC}([0, \text{mid}], \mathcal{D})$
    
    \STATE \hspace{\algorithmicindent} \algorithmicif \  $\tilde{c} \le m$ \algorithmicthen
    
    \STATE \hspace{\algorithmicindent} \hspace{\algorithmicindent} $\text{left} \leftarrow \text{mid} + 1$
    
    \STATE \hspace{\algorithmicindent} \algorithmicelse
    
    \STATE \hspace{\algorithmicindent} \hspace{\algorithmicindent} $\text{right} \leftarrow \text{mid}$
    
    \RETURN{$\lfloor (\text{left} + \text{right})/2 \rfloor$}
\end{algorithmic}
\end{algorithm}

The following lemma is straightforward:
\begin{lemma}
\label{lem:qt_rank_err}
If $\left| \texttt{NoisyRC}([0, \text{mid}],\mathcal{D}) - | \mathcal{D}\cap[0, \text{mid}]|\right| \le \tau$ for every call to $\texttt{NoisyRC}([0, \text{mid}],\mathcal{D})$, then Algorithm \ref{alg:quantile} returns a quantile with rank error $\tau$.
\end{lemma}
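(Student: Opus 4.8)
Since the hypothesis fixes the input and conditions on the deterministic event that every \texttt{NoisyRC} call is accurate to within $\tau$, no probabilistic reasoning is needed; the plan is a standard loop-invariant argument over the while-loop of Algorithm~\ref{alg:quantile}. Write $\mathrm{rank}(y):=|\mathcal{D}\cap[0,y]|=|\{i\in[n]:x_{(i)}\le y\}|$, a nondecreasing step function of $y$, and recall the conventions $x_{(j)}=0$ for $j<1$ and $x_{(j)}=u$ for $j>n$. I would show that the invariant ``$\mathrm{rank}(\mathrm{left}-1)\le m+\tau$ and $\mathrm{rank}(\mathrm{right})\ge m-\tau$'' holds before and after every iteration.

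The base case is immediate: $\mathrm{left}=0$ gives $\mathrm{rank}(-1)=0\le m+\tau$, and $\mathrm{right}=u$ gives $\mathrm{rank}(u)=n\ge m\ge m-\tau$ (using $m\in[n]$). For the inductive step, whenever the body runs we have $\mathrm{left}\le\mathrm{mid}<\mathrm{right}$, and by hypothesis $|\tilde c-\mathrm{rank}(\mathrm{mid})|\le\tau$ where $\tilde c=\texttt{NoisyRC}([0,\mathrm{mid}],\mathcal{D})$. If $\tilde c\le m$ the update sets $\mathrm{left}\leftarrow\mathrm{mid}+1$ and leaves $\mathrm{right}$ fixed; since $\mathrm{rank}(\mathrm{mid})\le\tilde c+\tau\le m+\tau$ and the new value of $\mathrm{left}-1$ equals $\mathrm{mid}$, the invariant is preserved. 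If $\tilde c>m$ the update sets $\mathrm{right}\leftarrow\mathrm{mid}$ and leaves $\mathrm{left}$ fixed; since $\mathrm{rank}(\mathrm{mid})\ge\tilde c-\tau\ge m-\tau$ and the new $\mathrm{right}$ equals $\mathrm{mid}$, the invariant is again preserved. Termination holds because $\mathrm{right}-\mathrm{left}$ is a nonnegative integer that strictly decreases each iteration (either $\mathrm{left}$ rises past $\mathrm{mid}$, or $\mathrm{right}$ drops to $\mathrm{mid}<\mathrm{right}$), so the loop ends with $\mathrm{left}=\mathrm{right}=:x$, which is the returned value.

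It then remains to read the rank-error bound off the invariant at $\mathrm{left}=\mathrm{right}=x$: from $\mathrm{rank}(x)\ge m-\tau$ at least $m-\tau$ data points are $\le x$, so $x\ge x_{(m-\tau)}$; from $\mathrm{rank}(x-1)\le m+\tau$ at most $m+\tau$ data points are $<x$, so $x\le x_{(m+\tau+1)}$, with out-of-range order statistics interpreted via the stated convention. I expect the only delicate point to be matching these $\pm1$ offsets to the precise rank-error convention in the statement: the argument as written yields $x_{(m-\tau)}\le x\le x_{(m+\tau+1)}$, and replacing the comparison ``$\tilde c\le m$'' by ``$\tilde c<m$'' in the loop tightens the first half of the invariant to $\mathrm{rank}(\mathrm{left}-1)\le m+\tau-1$ and gives exactly $x_{(m-\tau)}\le x\le x_{(m+\tau)}$. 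Everything else is a routine induction, which is why the lemma is labelled ``straightforward''.
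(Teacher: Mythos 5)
Your loop-invariant argument is correct and is the natural way to make the claim rigorous; the paper itself simply declares the lemma ``straightforward'' and provides no proof, so there is nothing to compare against on the paper's side. The base case, the two-case inductive step, and the termination argument are all sound, and reading the invariant at $\mathrm{left}=\mathrm{right}=x$ gives exactly what you say.

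You are also right to flag the off-by-one: as written, the invariant only yields $x_{(m-\tau)} \le x \le x_{(m+\tau+1)}$, and this slack is real. For instance, with $\mathcal{D}=\{1,2,\dots,8\}$, $u=8$, $m=5$, and exact counts ($\tau=0$), the loop goes $(\mathrm{left},\mathrm{right}) = (0,8) \to (5,8) \to (5,6) \to (6,6)$ and returns $6 = x_{(6)}$, so the rank error is $1$, not $0$. Your two suggested fixes are both fine: replace the test $\tilde c \le m$ by $\tilde c < m$ to tighten the invariant to $\mathrm{rank}(\mathrm{left}-1)\le m+\tau-1$, or just read the lemma with a $+1$ that is absorbed into the constants (the $\tau$ used downstream in the paper is $\Theta(\sqrt{\log u \log(\log u/\beta)/\rho})$, so a $+1$ is immaterial). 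Either way, the essential content of the lemma and of the paper's subsequent use of it is unaffected.
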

}{}

In the central DP model, we simply use $\texttt{NoisyRC}([0, \text{mid}], \mathcal{D}) = | \mathcal{D} \cap [0, \text{mid}]| + \mathcal{N}(0, \log u / (2\rho))$.

\begin{theorem}
\label{thm:rank_err}
The algorithm \texttt{PrivQuant} preserves $\rho$-CDP, and it returns a quantile with rank error $\tau$ with probability at least $1 - \beta$ for  $\tau = \sqrt{\log u \log \frac{\log u}{\beta} / (2\rho)}$.
\end{theorem}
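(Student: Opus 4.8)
The plan is to combine the privacy and rank-error analysis through the binary-search structure of \texttt{PrivQuant}, invoking Lemma~\ref{lem:qt_rank_err} to reduce everything to controlling the error of the individual \texttt{NoisyRC} calls. First, for privacy: the binary search on an integer domain $[0,u]$ terminates after at most $\lceil \log_2 u \rceil$ iterations, since each iteration halves the length of the interval $[\text{left},\text{right}]$. Each iteration makes exactly one call to \texttt{NoisyRC}$([0,\text{mid}],\mathcal{D})$, which in the central model adds $\mathcal{N}(0,\log u/(2\rho))$ to a counting query. A counting query $|\mathcal{D}\cap[0,\text{mid}]|$ has global $\ell_2$-sensitivity $1$ (changing one record changes the count by at most $1$), so by Lemma~\ref{lem:gauss} this single call satisfies $(\rho/\log u)$-zCDP. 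The sequence of calls is an adaptive composition (the choice of $\text{mid}$ in one round depends on the noisy outputs of previous rounds), so by part~3 of Lemma~\ref{lem:composition}, the whole algorithm satisfies $(\log u \cdot \rho/\log u) = \rho$-zCDP. (Minor care is needed: the number of iterations is data-independent, bounded by $\log u$, so we may pessimistically charge $\log u$ rounds regardless of early termination.)

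Second, for the rank error: by Lemma~\ref{lem:qt_rank_err}, it suffices to show that with probability at least $1-\beta$, every call to \texttt{NoisyRC} satisfies $|\texttt{NoisyRC}([0,\text{mid}],\mathcal{D}) - |\mathcal{D}\cap[0,\text{mid}]|| \le \tau$. Each such error is the absolute value of a Gaussian $\mathcal{N}(0,\sigma^2)$ with $\sigma^2 = \log u/(2\rho)$. Using the standard Gaussian tail bound $\Pr[|\mathcal{N}(0,\sigma^2)| > t] \le 2\exp(-t^2/(2\sigma^2))$ — or, more cleanly, a one-sided bound since the algorithm's correctness only requires $|\mathcal{D}\cap[0,\text{mid}]|$ to lie in the right direction, but two-sided is cleanest — I would set $t = \tau$ so that each individual failure probability is at most $\beta/\log u$. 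Then a union bound over the (at most) $\log u$ calls gives total failure probability at most $\beta$. Solving $2\exp(-\tau^2/(2\sigma^2)) \le \beta/\log u$ (or its one-sided analogue absorbing the factor $2$) for $\tau$ yields
\[
\tau = \sqrt{2\sigma^2 \log\frac{\log u}{\beta}} = \sqrt{\frac{\log u}{\rho}\log\frac{\log u}{\beta}},
\]
which matches the claimed $\tau = \sqrt{\log u \log(\log u/\beta)/(2\rho)}$ up to the constant inside the square root (the stated bound presumably uses a tighter one-sided Gaussian tail, or absorbs the factor $2$ via $\log(\log u/\beta)$ versus $\log(2\log u/\beta)$).

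The main obstacle — and really the only subtle point — is handling the adaptivity correctly in both analyses simultaneously: the value $\text{mid}$ queried in round $k$ is a function of the noisy counts from rounds $1,\dots,k-1$, so the \texttt{NoisyRC} calls are not independent as random variables with fixed query sets. For privacy this is exactly what the adaptive composition theorem (Lemma~\ref{lem:composition}, part~3) is designed for, so it is not really an obstacle. For the rank-error union bound, one must observe that conditioned on any realization of the first $k-1$ noisy counts, the query set $[0,\text{mid}]$ in round $k$ is fixed, and the fresh Gaussian noise in round $k$ is independent of everything before; hence the event "$|\text{error in round }k| \le \tau$" has probability at least $1-\beta/\log u$ conditionally, and a standard chaining/union-bound argument over rounds (or simply a union bound over the at most $\log u$ rounds, each conditioned appropriately) gives the claim. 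Everything else is the routine Gaussian tail computation sketched above.
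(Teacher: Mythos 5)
Your proof is correct and follows essentially the same route the paper takes: sensitivity-1 range counts with $\mathcal{N}(0,\log u/(2\rho))$ noise give $(\rho/\log u)$-zCDP per call, adaptive composition over $\log u$ iterations yields $\rho$-zCDP, and a Gaussian tail bound plus a union bound over the $\log u$ noise draws, combined with Lemma~\ref{lem:qt_rank_err}, gives the rank-error guarantee. Your explicit handling of the adaptivity in the union bound, and your observation that the stated $\tau$ is tight only up to a constant factor inside the square root, go slightly beyond the paper's terse proof but do not change the argument.
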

\ifthenelse{\boolean{long}}{
\begin{proof}
It is clear that the range query $|[0, \text{mid}] \cap \mathcal{D}|$ has sensitivity 1, thus adding noise drawn from $\mathcal{N}(0, \log u / (2\rho))$ preserves $\frac{\rho}{\log u}$-CDP for each invocation. Because there are $\log u$ iterations in the while-loop, the privacy guarantee follows from the composition theorem of CDP.

In the algorithm, we draw at most $\log u$ Gaussian noises whose absolute values are simultaneously bounded by $\tau$ with probability $1 - \beta$ by a union bound. Conditioned upon this event, the theorem follows from Lemma~\ref{lem:qt_rank_err}.
\end{proof}
}{}

In Section~\ref{sec:lb}, we prove an $\Omega(\sqrt{\log u / \rho})$ lower bound (Corollary \ref{cor:quantileLB}) on the rank error under zCDP for constant $\beta$. Thus the algorithm is optimal up to just an $O(\sqrt{\log\log u})$-factor.


We can now use \texttt{PrivQuant} to find an approximately optimal clipping threshold. Specifically, we invoke \texttt{PrivQuant} with $\rho^\prime = \rho/4$ to find the $\max\{n - \max\{\sqrt{2d/\rho}, \tau\}, 1\}$-th quantile of $\{ \| x_i \|_2^2 \}_{i \in [n]}$. 
They are integers no more than $du^2$, so replacing $u$ by $du^2$ in Theorem \ref{thm:rank_err} yields a rank error of $\tau = 2\sqrt{\log (du) \log \frac{\log (du)}{\beta}/\rho}$. Then we set $\tilde{C}$ as the square root of the returned quantile. Finally, we return the clipped mean estimator $M_{\tilde{C}}(\mathcal{D})$ with $\rho^\prime = 3\rho/4$. The following theorem analyzes its error.

\begin{theorem}
\label{thm:cm_err}
Our mean estimation mechanism is $\rho$-zCDP and has $\ell_2$ error $O(\sqrt{d/\rho} + \tau) \cdot r(\mathcal{D})/n$ with probability $1-\beta$, where $\tau = 2\sqrt{\log (du) \log \frac{\log (du)}{\beta} / \rho}$.
\end{theorem}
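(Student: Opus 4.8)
The plan is to combine the three ingredients already assembled: the privacy-composition accounting, the rank-error guarantee of \texttt{PrivQuant} from Theorem~\ref{thm:rank_err}, and the convexity/bias-variance bound of Lemma~\ref{lem:c_err}. First I would verify privacy: the quantile subroutine is run with budget $\rho' = \rho/4$ and the clipped-mean release with $\rho' = 3\rho/4$; since these are the only two accesses to $\mathcal{D}$, part 3 of Lemma~\ref{lem:composition} (composition of zCDP) gives $\rho/4 + 3\rho/4 = \rho$, so the overall mechanism is $\rho$-zCDP. This part is routine.

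Next I would establish the key structural fact about the returned threshold $\tilde C$. By Theorem~\ref{thm:rank_err} applied with the domain size $du^2$ (since $\|x_i\|_2^2 \le du^2$), with probability $\ge 1-\beta$ the subroutine returns a value whose rank among $\{\|x_i\|_2^2\}$ is within $\tau = 2\sqrt{\log(du)\log\frac{\log(du)}{\beta}/\rho}$ of the target rank $m^\star := \max\{n - \max\{\sqrt{2d/\rho},\tau\},\,1\}$. Taking square roots, $\tilde C$ lies between the $(m^\star - \tau)$-th and $(m^\star+\tau)$-th order statistics of $\{\|x_i\|_2\}$. Condition on this event for the rest of the argument. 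The consequence I want is a bound on the clipping bias: the number of points with $\|x_i\|_2 > \tilde C$ is at most $n - (m^\star - \tau) \le \max\{\sqrt{2d/\rho},\tau\} + \tau = O(\sqrt{d/\rho}+\tau)$, and each such point contributes at most $\|x_i\|_2 - \tilde C \le r(\mathcal{D})$ to $\mathcal{E}(\tilde C;\mathcal{D})$. Hence $\frac1n\sum_i \max\{\|x_i\|_2 - \tilde C, 0\} \le O(\sqrt{d/\rho}+\tau)\cdot r(\mathcal{D})/n$.

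For the variance term, I would bound $\tilde C \le r(\mathcal{D})$ (every order statistic of $\{\|x_i\|_2\}$ is at most the maximum), so the noise term $\frac{\tilde C}{n}\sqrt{2d/\rho}$ from Lemma~\ref{lem:c_err} is $O(\sqrt{d/\rho})\cdot r(\mathcal{D})/n$. Adding the two pieces gives $\mathcal{E}(\tilde C;\mathcal{D}) = O(\sqrt{d/\rho}+\tau)\cdot r(\mathcal{D})/n$ as an \emph{expected} $\ell_2$ error. The last step is to upgrade this to a high-probability bound: since we already spent failure probability $\beta$ on the quantile event, I would re-run the analysis asking for error $O(\sqrt{d/\rho}+\tau)\cdot r(\mathcal{D})/n$ to hold with probability $1-\beta$ — this follows from a standard tail bound on the Gaussian noise vector (its norm concentrates around $\tilde C\sqrt{2d/\rho}/n$ with a multiplicative $O(\sqrt{\log(1/\beta)})$ slack, which is absorbed into the constant and the $\tau$ term), together with a union bound over the two failure events; adjusting constants keeps the total failure probability at $\beta$.

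The main obstacle is bookkeeping rather than a genuine difficulty: one must be careful that the rank error $\tau$ appears \emph{both} as a budget split (the quantile is taken at rank $n - \max\{\sqrt{2d/\rho},\tau\}$, not $n-\sqrt{2d/\rho}$) and as the slack in the achieved rank, so the count of over-threshold points is controlled by $2\tau + \sqrt{2d/\rho}$ rather than just $\sqrt{2d/\rho}$ — this is exactly why the bound reads $O(\sqrt{d/\rho}+\tau)$ and not $O(\sqrt{d/\rho})$. The other subtlety is converting the expected-error statement of Lemma~\ref{lem:c_err} into the high-probability statement claimed here, which requires the explicit Gaussian tail argument rather than a black-box application of the lemma.
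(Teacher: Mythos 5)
Your overall architecture matches the paper's own proof exactly: privacy by zCDP composition, bias from the rank-error guarantee of \texttt{PrivQuant}, variance from a Gaussian-norm tail bound, union bound over the two failure events. The one place where your argument has a genuine gap is the claim that $\tilde C \le r(\mathcal{D})$ ``because every order statistic of $\{\|x_i\|_2\}$ is at most the maximum.'' The quantity $\tilde C$ is \emph{not} an order statistic of the data --- it is the square root of the value returned by the binary search in \texttt{PrivQuant}, which lives in the universe $[0, du^2]$. The rank-error guarantee only tells you $\tilde C^2 \le (\|x\|_2^2)_{(m^\star + \tau)}$ under the convention (stated in the footnote to the quantile-selection problem) that $x_{(j)} = u$ for $j > n$. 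So when $m^\star + \tau > n$ --- which happens precisely when $n \le \max\{\sqrt{2d/\rho},\tau\}$, since then $m^\star = 1$ --- the upper envelope is the \emph{universe} bound $u\sqrt{d}$, not the data maximum $r(\mathcal{D})$. In that regime $\tilde C$ can vastly exceed $r(\mathcal{D})$ and your variance bound $\frac{\tilde C}{n}\sqrt{2d/\rho} = O(\sqrt{d/\rho})\cdot r(\mathcal{D})/n$ simply fails.

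The paper closes this hole with an explicit, data-independent case split: if $n > \max\{\sqrt{2d/\rho},\tau\}$, then $m^\star = n - \max\{\sqrt{2d/\rho},\tau\}$ and hence $m^\star + \tau \le n$, so the rank-error interval is genuinely inside the data and $\tilde C \le x_{(n)} = r(\mathcal{D})$; if instead $n \le \max\{\sqrt{2d/\rho},\tau\}$ (a condition that can be checked without looking at $\mathcal{D}$, hence trivially DP), the mechanism short-circuits and outputs $\mathbf{0}$, which already has error $r(\mathcal{D}) \le O(\sqrt{d/\rho}+\tau)\cdot r(\mathcal{D})/n$. You should add this case analysis and fallback to make the variance step airtight; the rest of your proposal is sound and coincides with the paper's argument.
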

\ifthenelse{\boolean{long}}{
\begin{proof}
The privacy guarantee easily follows from the composition theorem of zCDP. Next, we analyze the accuracy. 
By the rank error guarantee, at most $\sqrt{2d/\rho} + \tau$ vectors are clipped by the threshold $\tilde{C}$. 
Each clipped vector has norm at most $r(\mathcal{D})$, so the bias is at most $(\sqrt{2d/\rho} + \tau) \cdot r(\mathcal{D})/n$.  For the error due to the noise, we use the following tail bound of the multivariate Gaussian distribution:
\begin{lemma}
[\cite{laurent2000adaptive}]
\label{lem:gauss_tail}
If $X \sim \mathcal{N}(\mathbf{0}, \mathbf{I})$, then $\Pr\left[\| X \|_2 \ge \sqrt{d + 2\sqrt{d \log (1/\beta)} + 2\log (1/\beta)}\right] \le \beta$.
\end{lemma}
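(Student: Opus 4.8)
The plan is to reduce the statement to a one-sided tail bound for a chi-squared variable and then apply the Chernoff (exponential-moment) method; note that the lemma is one-sided, so no union over two tails is needed. Write $Z := \|X\|_2^2 = \sum_{i=1}^d X_i^2$, so that $Z \sim \chi^2_d$ and the event in question is exactly $\{Z \ge d + 2\sqrt{d t} + 2 t\}$ with $t := \log(1/\beta)$. Since the $X_i$ are independent standard Gaussians, the moment generating function of $Z$ is $\mathsf{E}[e^{\lambda Z}] = (1 - 2\lambda)^{-d/2}$ for every $\lambda \in (0, 1/2)$.

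Next I would apply Markov's inequality to $e^{\lambda Z}$: for any $s>0$ and any $\lambda \in (0,1/2)$,
\[
\Pr[Z \ge d + s] \le e^{-\lambda(d+s)}(1-2\lambda)^{-d/2} = \exp\!\left(-\lambda(d+s) - \tfrac{d}{2}\log(1-2\lambda)\right).
\]
The exponent is convex in $\lambda$, with stationary point $\lambda^\star = \frac{s}{2(d+s)} \in (0,1/2)$, and substituting it gives
\[
\Pr[Z \ge d+s] \le \exp\!\left(\tfrac{d}{2}\log\!\big(1+\tfrac{s}{d}\big) - \tfrac{s}{2}\right).
\]
It therefore remains to check that the right-hand side is at most $e^{-t}$ when $s = 2\sqrt{dt} + 2t$, i.e. that $\tfrac{s}{2} - \tfrac{d}{2}\log(1+s/d) \ge t$.

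For this last step, set $r := \sqrt{t/d}$, so that $s/d = 2r + 2r^2$ and $t = d r^2$. A short rearrangement reduces the desired inequality to $\log(1 + 2r + 2r^2) \le 2r$, equivalently $1 + 2r + 2r^2 \le e^{2r}$, which is immediate by comparing with the first three terms of the Taylor expansion of $e^{2r}$ (all remaining terms being nonnegative for $r \ge 0$). This yields $\Pr[Z \ge d + 2\sqrt{dt} + 2t] \le e^{-t} = \beta$, and taking square roots gives $\Pr[\|X\|_2 \ge \sqrt{d + 2\sqrt{dt} + 2t}] \le \beta$, as claimed.

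I expect no serious obstacle: the only non-routine points are choosing the optimal $\lambda^\star$ (a one-variable calculus computation) and the closing elementary inequality $e^{2r} \ge 1 + 2r + 2r^2$. It is worth noting that the softer route via Gaussian Lipschitz concentration (the map $x \mapsto \|x\|_2$ is $1$-Lipschitz and $\mathsf{E}\|X\|_2 \le \sqrt{d}$) only yields $\Pr[\|X\|_2 \ge \sqrt{d} + \sqrt{2t}] \le e^{-t}$, whose threshold equals $\sqrt{d + 2\sqrt{2dt} + 2t}$ and is slightly larger than the one we need, so it does not imply the stated bound; the chi-squared exponential-moment computation is what exploits the variance structure tightly enough.
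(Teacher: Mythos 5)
Your proof is correct. The paper itself does not prove this lemma; it is quoted directly from Laurent and Massart \cite{laurent2000adaptive}, so there is no in-paper argument to compare against. Your derivation is the standard Cram\'er--Chernoff computation for the $\chi^2_d$ variable $Z=\|X\|_2^2$: the MGF $(1-2\lambda)^{-d/2}$, the optimizer $\lambda^\star = s/(2(d+s))$, and the resulting bound $\exp\bigl(\tfrac{d}{2}\log(1+s/d)-\tfrac{s}{2}\bigr)$ are all right, and the substitution $s = 2\sqrt{dt}+2t$, $r=\sqrt{t/d}$ correctly reduces the claim to $e^{2r}\ge 1+2r+2r^2$, which holds term by term in the Taylor expansion. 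This is essentially how Laurent and Massart obtain the bound as well, though they phrase it by showing $\log\mathsf{E}[e^{\lambda(Z-d)}]\le d\lambda^2/(1-2\lambda)$ and inverting that rate function, which yields the deviation $2\sqrt{dt}+2t$ directly; your route of optimizing $\lambda$ exactly and verifying the closing elementary inequality is equivalent and arguably more transparent. Your remark that the generic Gaussian Lipschitz concentration bound gives only the weaker threshold $\sqrt{d}+\sqrt{2t}$ is also accurate and correctly identifies why the chi-squared computation is needed.
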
 
Thus, with probability $1-\beta$, the norm of the noise is bounded by $O\left(\sqrt{d + \log\frac{1}{\beta}} \cdot{\tilde{C}\over n\sqrt{\rho}}\right)\le O\left(\sqrt{d/\rho} + \tau\right) \cdot r(\mathcal{D})/n$. This inequality requires $\tilde{C} \le r(\mathcal{D})$, which holds as long as $n>\max\{\sqrt{2d/\rho}, \tau\}$. If this is not the case (note that checking this condition is DP as it does not involve $\mathcal{D}$), we can just return $\mathbf{0}$, which trivially achieves error $r(\mathcal{D}) \le O(\sqrt{d/\rho} + \tau) \cdot r(\mathcal{D})/n$.
\end{proof}
}{}

\subsection{Shifted-Clipped-Mean Estimator}
\label{sec:rr}
To reduce the error from being proportional to $r(\mathcal{D})$ to being proportional to $w(\mathcal{D})$, we perform a random rotation on $\mathcal{D}$ followed by a translation. The rotation is done by $\hat{x}_i := HD x_i$, where $H$ is the Hadamard matrix, $D$ is a diagonal matrix whose diagonal entry is independently and uniformly drawn from $\{ -1, +1 \}$. Note that for now we omit the normalization coefficient $\frac{1}{\sqrt{d}}$ so that each coordinate of $\hat{x}_i$ is still an integer; we will apply the normalization to the final estimator instead. Then, for each $j \in [d]$, we invoke \texttt{PrivQuant} with $\rho' = \rho/(4d)$ to find an approximate median of $\{ \hat{x}_i \}_{i \in [n]}$ along dimension $j$, denoted as $\tilde{c}_j$. Next, we shift the dataset to be centered around $\tilde{c} = (\tilde{c}_1, \ldots, \tilde{c}_d)$, obtaining $\tilde{\mathcal{D}} =\{ \tilde{x}_i:=\hat{x}_i - \tilde{c} \}_{i \in [n]}$. Note that $\tilde{c}$ has integer coordinates, so does $\tilde{x}_i$. Finally, we apply the clipped-mean estimator in Theorem \ref{thm:cm_err} with $\rho' = {3\over 4}\rho$ on $\tilde{\mathcal{D}}$, obtaining an estimation $\tilde{y}$, and return $y := (\frac{1}{\sqrt{d}} HD)^{-1} \frac{1}{\sqrt{d}} (\tilde{y} + \tilde{c})$ as the mean estimator over $\mathcal{D}$. 

\begin{theorem}
\label{thm:rr_err}
Set $\tau = \sqrt{\log (du) \log \frac{d\log (du)}{\beta} / \rho}$ and assume $n = \Omega(\tau \sqrt{d})$. Our mean estimation mechanism is $\rho$-zCDP, and has $\ell_2$ error $O\left((\sqrt{d/\rho} + \tau)\sqrt{\log\frac{nd}{\beta}} \right)\cdot w(\mathcal{D})/n$ with probability $1-\beta$.
\end{theorem}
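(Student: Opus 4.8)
The plan is to reduce the analysis to Theorem~\ref{thm:cm_err} applied to the rotated-and-shifted dataset $\tilde{\mathcal{D}}$, so the core task is to show that the random rotation $HD$ followed by the privatized-median recentering turns a dataset of diameter $w(\mathcal{D})$ into one with $r(\tilde{\mathcal{D}}) = O(w(\mathcal{D})\sqrt{\log(nd/\beta)})$ with high probability. \textbf{Privacy.} First I would dispatch the easy part: the mechanism is an adaptive composition of the $d$ invocations of \texttt{PrivQuant} (each with parameter $\rho/(4d)$) and one call to the clipped-mean estimator of Theorem~\ref{thm:cm_err} (with parameter $3\rho/4$); by Lemma~\ref{lem:composition}(3) the total is $d\cdot\frac{\rho}{4d} + \frac{3\rho}{4} = \rho$-zCDP. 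The rotation $HD$ is data-independent (the signs in $D$ are public randomness), so it does not affect privacy, and the final linear map applied to $\tilde y$ is post-processing.

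\textbf{Utility.} The heart of the argument is a norm-concentration bound for the rotated vectors. Fix any $i$ and consider $\hat x_i - \hat x_1 = HD(x_i - x_1)$ (or more naturally compare to the true mean); since $\frac{1}{\sqrt d}HD$ is orthogonal, $\|\frac{1}{\sqrt d}HD v\|_2 = \|v\|_2$ for any $v$, but the key point is that the coordinates of $\frac{1}{\sqrt d}HDv$ are each a signed average $\frac{1}{d}\sum_k (\pm v_k)$, which by Hoeffding/Khintchine concentrates at scale $\|v\|_2/\sqrt d$ with subgaussian tails. So for a fixed pair the $j$-th coordinate of $\hat x_i - \hat x_{i'}$ has magnitude $O(\|x_i-x_{i'}\|_2\sqrt{\log(1/\beta')})$ w.p. $1-\beta'$; union-bounding over all $d$ coordinates, all $n$ points, and (to control the median shift) all pairs, with $\beta' = \beta/\mathrm{poly}(nd)$, every coordinate of every $\hat x_i - \hat x_{i'}$ is $O(w(\mathcal{D})\sqrt{\log(nd/\beta)})$. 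Next I would handle the recentering: \texttt{PrivQuant} on coordinate $j$ returns $\tilde c_j$ with rank error $\tau$, so $\tilde c_j$ lies between the $(n/2-\tau)$-th and $(n/2+\tau)$-th order statistics of $\{\hat x_{i,j}\}$; provided $n = \Omega(\tau\sqrt d)$ — which also ensures $\tau \le n/2$ so the median is well-defined and not at the clamped boundary — there is some index $i^*$ with $\hat x_{i^*,j}$ within the allowed band, hence $|\tilde c_j - \hat x_{i,j}| \le |\hat x_{i,j} - \hat x_{i^*,j}| + (\text{interval width}) = O(w(\mathcal{D})\sqrt{\log(nd/\beta)})$ for every $i$ (the interval width is itself bounded by the coordinate-spread bound from the concentration step). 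Therefore each coordinate of $\tilde x_i = \hat x_i - \tilde c$ is $O(w(\mathcal{D})\sqrt{\log(nd/\beta)})$, so $\|\tilde x_i\|_2 = O(w(\mathcal{D})\sqrt{d\log(nd/\beta)})$, i.e. $r(\tilde{\mathcal{D}}) = O(w(\mathcal{D})\sqrt{d\log(nd/\beta)})$.

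Now I apply Theorem~\ref{thm:cm_err} to $\tilde{\mathcal{D}}$ with privacy $3\rho/4$ and failure probability $\beta/2$: it yields an estimate $\tilde y$ of $f(\tilde{\mathcal{D}}) = \frac1n\sum_i \tilde x_i$ with $\ell_2$ error $O(\sqrt{d/\rho}+\tau)\cdot r(\tilde{\mathcal{D}})/n = O((\sqrt{d/\rho}+\tau)\sqrt{d\log(nd/\beta)})\cdot w(\mathcal{D})/n$. Finally I unwind the transformations: because $\frac{1}{\sqrt d}HD$ is orthogonal, $\|(\frac{1}{\sqrt d}HD)^{-1}\frac{1}{\sqrt d}(\tilde y + \tilde c) - f(\mathcal{D})\|_2 = \|\frac{1}{\sqrt d}(\tilde y + \tilde c) - \frac{1}{\sqrt d}HD f(\mathcal{D})\|_2 = \frac{1}{\sqrt d}\|\tilde y - (HDf(\mathcal{D}) - \tilde c)\|_2 = \frac{1}{\sqrt d}\|\tilde y - f(\tilde{\mathcal{D}})\|_2$, since $f(\tilde{\mathcal{D}}) = \frac1n\sum_i(HDx_i - \tilde c) = HDf(\mathcal{D}) - \tilde c$. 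So the final error is $\frac{1}{\sqrt d}$ times the error on $\tilde{\mathcal{D}}$, which cancels exactly one of the two $\sqrt d$ factors, giving $O((\sqrt{d/\rho}+\tau)\sqrt{\log(nd/\beta)})\cdot w(\mathcal{D})/n$ as claimed. Rescaling $\beta$ by constants to absorb the union bounds completes the proof.

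\textbf{Main obstacle.} The routine parts (privacy composition, the orthogonality bookkeeping that produces the crucial $1/\sqrt d$ saving) are straightforward; the real work is the concentration argument showing $r(\tilde{\mathcal{D}}) = \tilde O(w(\mathcal{D})\sqrt d)$. The subtlety is that the bound must hold relative to the \emph{diameter} $w(\mathcal{D})$, not the norm $r(\mathcal{D})$ — this is why we must recenter using a coordinate-wise median of the \emph{rotated} data rather than, say, subtract some fixed $x_i$, and why the union bound has to range over pairs to simultaneously control both the per-point coordinate spread and the width of the rank-error band around the median. One must also be careful that the condition $n = \Omega(\tau\sqrt d)$ is exactly what is needed both for the median to have slack $\tau < n/2$ below the $\max\{\cdots,1\}$ clamp in \texttt{PrivQuant} and for the downstream application of Theorem~\ref{thm:cm_err} (whose own proof needs $n > \max\{\sqrt{2d/\rho},\tau\}$) to be in force.
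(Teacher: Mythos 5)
Your proposal matches the paper's proof essentially step for step: the same privacy accounting $d\cdot\frac{\rho}{4d}+\frac{3\rho}{4}=\rho$, the same rotated-coordinate concentration (the paper cites Ailon--Chazelle as Lemma~\ref{lem:rand_rt}, which you re-derive via Khintchine), the same union bound over pairs and coordinates giving $r(\tilde{\mathcal D})=O(w(\mathcal D)\sqrt{d\log(nd/\beta)})$, an application of Theorem~\ref{thm:cm_err} to $\tilde{\mathcal D}$, and the same orthogonality bookkeeping that cancels one $\sqrt d$. Two small imprecisions worth flagging, neither fatal: the $j$-th coordinate of $\frac{1}{\sqrt d}HDv$ is $\frac{1}{\sqrt d}\sum_k\pm v_k$ rather than $\frac{1}{d}\sum_k\pm v_k$ (your stated scale $\|v\|_2/\sqrt d$ is nonetheless correct), and the per-coordinate rank error is $\Theta(\sqrt d\,\tau)$ rather than $\tau$ since each \texttt{PrivQuant} invocation runs with budget $\rho/(4d)$ --- this is precisely why the hypothesis is $n=\Omega(\tau\sqrt d)$ rather than $n=\Omega(\tau)$.
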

\ifthenelse{\boolean{long}}{
\begin{proof}
The privacy guarantee follows from the composition theorem of $\rho$-zCDP, as $\sum_{j=1}^d \rho/(4d) + 3\rho/4 = \rho$.
Next, we analyze the error. We need a lemma from \cite{ailon2009fast}, which intuitively says that the random rotation ``evenly spreads out'' the norm to all the dimensions:
\begin{lemma}
[\cite{ailon2009fast}]
\label{lem:rand_rt}
Let $H$ and $D$ be  defined as above. Then, for any $x \in \mathbb{R}^d$ and any $\beta>0$,
\begin{align*}
    \Pr \left[ \left\| \frac{1}{\sqrt{d}} HD x \right\|_\infty \ge \frac{\| x \|_2}{\sqrt{d}} \cdot \sqrt{2 \log \frac{4d}{\beta}} \right] \le \beta.
\end{align*}
\end{lemma}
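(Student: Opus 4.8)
The plan is to regard the mechanism as the composition of the $d$ coordinate-wise median calls to \texttt{PrivQuant} and the clipped-mean mechanism of Theorem~\ref{thm:cm_err} applied to the rotated-and-shifted data set $\tilde{\mathcal D}$; the substantive work is to bound $r(\tilde{\mathcal D})=\max_i\|\tilde x_i\|_2$ by $\sqrt d\cdot w(\mathcal D)$ up to polylog factors, after which Theorem~\ref{thm:cm_err} supplies the error on $\tilde{\mathcal D}$ and a one-line isometry argument removes the $\sqrt d$. Privacy is immediate from Lemma~\ref{lem:composition}(3): the $d$ median calls use $\rho/(4d)$ each and the clipped-mean mechanism uses $3\rho/4$, totalling $\rho$; the Hadamard rotation $HD$ is data-independent and the shift by $\tilde c$ is post-processing. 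I would split the failure probability $\beta$ into three equal parts, for (i) all $d$ median calls, (ii) the random-rotation event, and (iii) the call to Theorem~\ref{thm:cm_err}.

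The key step is to show $r(\tilde{\mathcal D})=O\!\big(\sqrt d\,w(\mathcal D)\,\sqrt{\log(nd/\beta)}\big)$. The $j$-th coordinates of $\hat{\mathcal D}=\{HDx_i\}$ are integers in a range of size $O(du)$, so Theorem~\ref{thm:rank_err} run with budget $\rho/(4d)$ and failure probability $\beta/(3d)$ gives that call a rank error $O(\tau\sqrt d)$; a union bound over $j\in[d]$ makes this hold for all coordinates with probability $1-\beta/3$, and since $n=\Omega(\tau\sqrt d)$ with a large enough constant this rank error is below $n/2$, so by Lemma~\ref{lem:qt_rank_err} each $\tilde c_j$ lies between the smallest and the largest $j$-th coordinate of $\hat{\mathcal D}$ (only in-range-ness is used, not that $\tilde c_j$ is specifically a median). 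Hence $|\tilde x_{i,j}|=|(\hat x_i)_j-\tilde c_j|\le\max_{k,l}|(\hat x_k)_j-(\hat x_l)_j|$ for all $i,j$, the per-coordinate diameter of $\hat{\mathcal D}$. To bound this, apply Lemma~\ref{lem:rand_rt} to each fixed difference $x_k-x_l$: since $HD$ scales $\ell_2$-norms by $\sqrt d$ and $\|x_k-x_l\|_2\le w(\mathcal D)$, it yields $\|\hat x_k-\hat x_l\|_\infty\le w(\mathcal D)\sqrt{2\log(4d/\beta')}$ with probability $1-\beta'$; taking $\beta'=\beta/(3\binom n2)$ and a union bound over all pairs, the per-coordinate diameter is at most $w(\mathcal D)\sqrt{2\log(6n^2d/\beta)}$ with probability $1-\beta/3$. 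Summing the squares over the $d$ coordinates gives $\|\tilde x_i\|_2\le\sqrt d\cdot w(\mathcal D)\sqrt{2\log(6n^2d/\beta)}$ for every $i$, i.e.\ the claimed bound on $r(\tilde{\mathcal D})$.

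It remains to invoke Theorem~\ref{thm:cm_err} on $\tilde{\mathcal D}$ with budget $3\rho/4$ and undo the transform. The coordinates of $\tilde{\mathcal D}$ still lie in a range of size $O(du)$, so the rank-error term Theorem~\ref{thm:cm_err} produces on $\tilde{\mathcal D}$ is $O(\tau)$, and $n=\Omega(\tau\sqrt d)=\Omega(\sqrt{d/\rho})$ meets its precondition, giving $\|\tilde y-f(\tilde{\mathcal D})\|_2=O(\sqrt{d/\rho}+\tau)\cdot r(\tilde{\mathcal D})/n$ with probability $1-\beta/3$. Now let $Q=\frac1{\sqrt d}HD$, an orthogonal matrix; then $x_i=Q^{-1}Qx_i=Q^{-1}\frac1{\sqrt d}\hat x_i$, so $f(\mathcal D)=Q^{-1}\frac1{\sqrt d}f(\hat{\mathcal D})=Q^{-1}\frac1{\sqrt d}\big(f(\tilde{\mathcal D})+\tilde c\big)$, and since $y=Q^{-1}\frac1{\sqrt d}(\tilde y+\tilde c)$ and $Q^{-1}$ is an isometry, $\|y-f(\mathcal D)\|_2=\frac1{\sqrt d}\|\tilde y-f(\tilde{\mathcal D})\|_2$. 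The $\frac1{\sqrt d}$ cancels the $\sqrt d$ in $r(\tilde{\mathcal D})$, and a union bound over (i)--(iii) yields $\|y-f(\mathcal D)\|_2=O\!\big((\sqrt{d/\rho}+\tau)\sqrt{\log(nd/\beta)}\big)\cdot w(\mathcal D)/n$ with probability $1-\beta$.

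I expect the only place requiring care --- rather than a true obstacle --- to be the middle paragraph. The coordinate-wise median calls have rank error only $O(\tau\sqrt d)$ (not $O(\tau)$), which is precisely why the hypothesis $n=\Omega(\tau\sqrt d)$ is imposed: it simultaneously keeps every $\tilde c_j$ inside the data range and meets the precondition of Theorem~\ref{thm:cm_err}. And the per-coordinate-diameter bound needs a union bound over all $\binom n2$ pairs, which one must verify costs only a constant factor inside the logarithm ($\log(n^2d/\beta)=\Theta(\log(nd/\beta))$). The $\sqrt d$ blow-up in $r(\tilde{\mathcal D})$ together with its exact cancellation by the $1/\sqrt d$ normalization is the conceptual heart; everything else --- the privacy accounting and the comparison of the various $\tau$'s --- is routine.
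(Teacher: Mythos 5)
There is a fundamental mismatch: the statement you were asked to prove is Lemma~\ref{lem:rand_rt}, the concentration inequality for the randomized Hadamard transform (cited in the paper from \cite{ailon2009fast}), but what you have written is a proof of Theorem~\ref{thm:rr_err}, the error bound for the shifted-clipped-mean estimator. Worse, your argument explicitly \emph{invokes} Lemma~\ref{lem:rand_rt} in its middle paragraph (``apply Lemma~\ref{lem:rand_rt} to each fixed difference $x_k-x_l$''), so read as a proof of that lemma it is circular. Your write-up does track the paper's proof of Theorem~\ref{thm:rr_err} quite faithfully --- the privacy accounting, the in-range-ness of the coordinate-wise medians, the per-coordinate diameter bound, and the $\sqrt d$ versus $1/\sqrt d$ cancellation are all there --- but that is simply not the statement at hand.

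A proof of the actual statement is short and self-contained. Fix $i\in[d]$ and write the $i$-th coordinate as
\begin{align*}
\left(\tfrac{1}{\sqrt d}HDx\right)_i \;=\; \tfrac{1}{\sqrt d}\sum_{j=1}^d H_{ij}\,\epsilon_j\,x_j,
\end{align*}
where $\epsilon_j=D_{jj}$ are i.i.d.\ uniform in $\{-1,+1\}$ and $|H_{ij}|=1$ for the (unnormalized) Hadamard matrix. This is a Rademacher sum with coefficient vector of squared $\ell_2$-norm $\sum_j H_{ij}^2x_j^2/d=\|x\|_2^2/d$, so Hoeffding's inequality gives $\Pr\bigl[\bigl|(\tfrac{1}{\sqrt d}HDx)_i\bigr|\ge t\bigr]\le 2\exp\bigl(-t^2 d/(2\|x\|_2^2)\bigr)$. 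Taking $t=\tfrac{\|x\|_2}{\sqrt d}\sqrt{2\log\tfrac{4d}{\beta}}$ makes the right-hand side $\beta/(2d)$, and a union bound over the $d$ coordinates yields failure probability at most $\beta/2\le\beta$. You should supply this argument (or an equivalent sub-Gaussianity argument) rather than the downstream theorem that consumes it.
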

Moreover, note that the transformation by $\frac{1}{\sqrt{d}} HD$ or $(\frac{1}{\sqrt{d}} HD)^{-1}$ is orthogonal, so the $\ell_2$ norm of any vector will be preserved.

Applying Lemma~\ref{lem:rand_rt} on $x_i - x_j$ for all $i, j \in [n]$ and a union bound, we have $\max_{i, j} \| \hat{x}_{i} - \hat{x}_{j} \|_\infty = O(\sqrt{\log\frac{nd}{\beta}}) \cdot w(\mathcal{D})$ with probability $1 - \beta/3$. Over the rotated dataset $\{ \hat{x}_i \}_{i \in [n]}$, we use \texttt{PrivQuant} to find an approximate median $\tilde{c}_j$ along each dimension $j \in [d]$ with privacy parameter $\rho' = \rho / (4d)$. By the rank error guarantee of \texttt{PrivQuant} (Theorem~\ref{thm:rank_err}) and a union bound, if $n = \Omega (\tau\sqrt{d})$, we have $\min_{i} \hat{x}_{i, j} \le \tilde{c}_j \le \max_{i} \hat{x}_{i, j}$ for all $j \in [d]$ with probability $1 - \beta/3$. Note that the length of this interval is $|\min_{i} \hat{x}_{i, j} - \max_{i} \hat{x}_{i, j}| = O(\sqrt{\log\frac{nd}{\beta}})\cdot w(\mathcal{D})$. Thus the region  $(\tilde{c}_1 \pm O(\sqrt{\log\frac{nd}{\beta}}) \cdot w(\mathcal{D}),\ldots, \tilde{c}_d \pm O(\sqrt{\log\frac{nd}{\beta}}) \cdot w(\mathcal{D}))$ contains every data point $\hat{x}_i$, hence $\max_i \| \hat{x}_i - \tilde{c} \|_2 = O( \sqrt{d\log\frac{nd}{\beta}}) \cdot w(\mathcal{D})$. This means that the shifted data set $\tilde{\mathcal{D}} = \{ \hat{x}_i - \tilde{c} \}_{i \in [n]}$ has $r(\tilde{\mathcal{D}}) = O( \sqrt{d\log\frac{nd}{\beta}}) \cdot w(\mathcal{D})$. Thus, when we apply the clipped-mean estimator in Theorem \ref{thm:cm_err} over $\tilde{\mathcal{D}}$ to obtain its mean estimation $\tilde{y}$, we have
$\| (\tilde{y} + \tilde{c}) - \frac{1}{n} \sum_i \hat{x}_i \|_2 = O((\sqrt{d/\rho} + \tau) \sqrt{d\log\frac{nd}{\beta}})\cdot w(\mathcal{D})/n$. Finally, we use $y = (\frac{1}{\sqrt{d}} HD)^{-1} \frac{1}{\sqrt{d}} (\tilde{y} + \tilde{c})$ as the mean estimation for $\mathcal{D}$, and conclude that
\begin{align*}
\left\| y - \frac{1}{n} \sum_i x_i \right\|_2 = \left\| \left( \frac{1}{\sqrt{d}} HD\right)^{-1} \frac{1}{\sqrt{d}} \cdot \left( (\tilde{y} + \tilde{c}) - \frac{1}{n} \sum_i \hat{x}_i \right) \right\|_2 = O\left(\left(\sqrt{d/\rho} + \tau\right)\sqrt{\log\frac{nd}{\beta}}\right) \cdot {w(\mathcal{D}) \over n} .
\end{align*}
\end{proof}

\begin{remark}
The Hadamard transform requires $d$ to be some power of $2$.  If this is not the case, we can pad each $x_i$ with extra $0$'s to dimension $\bar{d} = 2^{\lceil \log d \rceil}$, denoted as $\bar{x}_i$. If there is an estimation $\bar{y}$ for $\frac{1}{n} \sum_i \bar{x}_i$, we discard the last $\bar{d} - d$ coordinates of $\bar{y}$ to obtain $y$ as the estimation for ${1\over n} \sum_i x_i$. Then, we have $\| y - \sum_i x_i/n \|_2 \le \| \bar{y} - \sum_i \bar{x}_i/n \|_2$, since the last $\bar{d} - d$ coordinates of each $\bar{x}_i$ are 0.  The padding does not change $w(\mathcal{D})$, so Theorem~\ref{thm:rr_err} still holds.
\end{remark}
}{}

\begin{remark}
\label{rem:real}
For a dataset with real coordinates bounded by $R$ (in absoluate value), one can quantize each coordinate to an integer using bucket size $\alpha/\sqrt{d}$, for any $0<\alpha<R$, and then apply our algorithm over an integer universe of size $u=2R \sqrt{d}/\alpha$. This just brings an additive $\alpha$ error to the error bound of Theorem~\ref{thm:rr_err}.
\end{remark}

\ifthenelse{\boolean{long}}{
\subsection{Statistical Mean Estimation}

Suppose $\mathcal{D}$ consists of i.i.d.\ samples drawn from the multivariate Gaussian distribution $\mathcal{N}(\mu, \Sigma)$, and we wish to estimate $\mu$, assuming \textit{a priori} bounds $\| \mu \|_2 \le R$ and $\sigma_{\min}^2 \mathbf{I} \preceq \Sigma \preceq \sigma_{\max}^2 \mathbf{I}$. Note that in the statistical setting, the privacy requirement should be satisfied between any two neighboring instances (not i.i.d.), but utility is analyzed under the i.i.d.\ assumption.

We first clip each sample $x_i \gets x_i \cdot \min\{ R' / \|x_i\|_2, 1\}$ where $R' := R+2\sigma_{\max}\sqrt{d + \log\frac{4n}{\beta}}$.  Then all coordinates are bounded by $R'$ and we apply our mechanism with bucket size $\alpha / \sqrt{d}$ where $\alpha = \sigma_{\min}\sqrt{d/n}$. Privacy is straightforward, since two instances are neighbors after the $R'$-clipping only if they are neighbors before the clipping.  We analyze its error below:

\begin{corollary}
\label{cor:gauss_err}
Set $\tau = \sqrt{\log (du) \log \frac{4d\log (du)}{\beta} / \rho}$ where $u = 2R'\sqrt{n}/\sigma_{\min}$, and assume $n = \Omega(\tau \sqrt{d})$. Then our algorithm returns a $\hat{\mu}$ such that with probability $1-\beta$,
\begin{align*}
    \| \hat{\mu} - \mu \|_2 = O\left(\| \Sigma^{1/2} \|_2 \sqrt{d + \log\frac{n}{\beta}} \cdot \left(\frac{1}{\sqrt{n}} + \frac{\sqrt{d}\log\frac{nd}{\beta}}{\sqrt{\rho}n} + \frac{\tau \log\frac{nd}{\beta}}{n}\right)\right).
\end{align*}
\end{corollary}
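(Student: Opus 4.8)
The plan is to derive this statistical bound from the empirical guarantee of Theorem~\ref{thm:rr_err} by showing that, with high probability, the i.i.d.\ sample behaves like a ``nice'' input: the $R'$-clipping is vacuous, the diameter $w(\mathcal{D})$ is controlled by the true scale $\|\Sigma^{1/2}\|_2$, and the empirical mean is already within the statistical rate of $\mu$. Privacy needs no separate argument --- clipping to the data-independent radius $R'$ maps neighbors to neighbors, quantization is a deterministic preprocessing, and everything downstream is the $\rho$-zCDP mechanism of Theorem~\ref{thm:rr_err} --- so the whole pipeline is $\rho$-zCDP.

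For the first two facts, apply the Gaussian tail bound (Lemma~\ref{lem:gauss_tail}) to $\Sigma^{-1/2}(x_i-\mu)\sim\mathcal{N}(\mathbf{0},\mathbf{I})$, use $\|x_i-\mu\|_2\le\|\Sigma^{1/2}\|_2\,\|\Sigma^{-1/2}(x_i-\mu)\|_2$, and union bound over $i\in[n]$ with per-sample failure probability $\beta/(4n)$: then with probability $1-\beta/4$ every sample satisfies $\|x_i-\mu\|_2=O\bigl(\|\Sigma^{1/2}\|_2\sqrt{d+\log(n/\beta)}\bigr)\le 2\sigma_{\max}\sqrt{d+\log(4n/\beta)}$ (the last step is AM--GM on the radical together with $\|\Sigma^{1/2}\|_2\le\sigma_{\max}$). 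Combined with $\|\mu\|_2\le R$ this gives $\|x_i\|_2\le R'$, so clipping does nothing, and also $w(\mathcal{D})\le 2\max_i\|x_i-\mu\|_2=O\bigl(\|\Sigma^{1/2}\|_2\sqrt{d+\log(n/\beta)}\bigr)$. For the third fact, $\bar x:=\frac1n\sum_ix_i\sim\mathcal{N}(\mu,\Sigma/n)$, so applying the same tail bound to $\sqrt n\,\Sigma^{-1/2}(\bar x-\mu)$ gives $\|\bar x-\mu\|_2=O\bigl(\|\Sigma^{1/2}\|_2\sqrt{d+\log(1/\beta)}/\sqrt n\bigr)$ with probability $1-\beta/4$; this is the $1/\sqrt n$ term.

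Next, run the shifted-clipped-mean estimator on the (unchanged, then) quantized data with bucket size $\alpha/\sqrt d$, $\alpha=\sigma_{\min}\sqrt{d/n}$, so the integer universe has size $u=2R'\sqrt n/\sigma_{\min}$. By Remark~\ref{rem:real}, quantization perturbs the reference mean and the diameter by $O(\alpha)$ and adds $O(\alpha)$ to the error; since $\alpha\le\|\Sigma^{1/2}\|_2\sqrt d/\sqrt n$ these are all absorbed below. Conditioning on the two events above and using the assumed $n=\Omega(\tau\sqrt d)$, Theorem~\ref{thm:rr_err} invoked with parameter $\beta/4$ (which is exactly why $\tau=\sqrt{\log(du)\log\frac{4d\log(du)}{\beta}/\rho}$) yields, with probability $1-\beta/4$,
\[
\|\hat\mu-\bar x\|_2
= O\!\left(\bigl(\sqrt{d/\rho}+\tau\bigr)\sqrt{\log\tfrac{nd}{\beta}}\right)\cdot\frac{w(\mathcal{D})}{n}+O(\alpha)
= \|\Sigma^{1/2}\|_2\sqrt{d+\log\tfrac{n}{\beta}}\cdot O\!\left(\frac{\sqrt{d}\,\log\frac{nd}{\beta}}{\sqrt{\rho}\,n}+\frac{\tau\log\frac{nd}{\beta}}{n}\right)+O\!\left(\frac{\|\Sigma^{1/2}\|_2\sqrt{d}}{\sqrt{n}}\right),
\]
where I used $w(\mathcal{D})=O(\|\Sigma^{1/2}\|_2\sqrt{d+\log(n/\beta)})$ and $\sqrt{\log(nd/\beta)}\le\log(nd/\beta)$ to fold $\sqrt{d/\rho}\,\sqrt{\log(nd/\beta)}$ into $\sqrt d\log(nd/\beta)/\sqrt\rho$. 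Finally $\|\hat\mu-\mu\|_2\le\|\hat\mu-\bar x\|_2+\|\bar x-\mu\|_2$ and a union bound over the three failure events (total probability $\le\beta$) give the corollary.

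The computation itself is routine; the only genuine care is in the parameter choices. One picks $R'$ just large enough that the Gaussian concentration radius --- which truly depends on $\|\Sigma^{1/2}\|_2$, bounded a priori only by $\sigma_{\max}$ --- never triggers the clip, so that the bias term of the clipped-mean estimator vanishes; and one picks $\alpha=\sigma_{\min}\sqrt{d/n}$ fine enough to be swallowed by the statistical $1/\sqrt n$ term yet coarse enough that $u$, hence $\tau$, grows only logarithmically in the a priori bounds. Carrying $\|\Sigma^{1/2}\|_2$ (not $\sigma_{\max}$) through every concentration step, while using $\sigma_{\max},\sigma_{\min}$ only where the a priori bounds are unavoidable, is what makes the final error scale with the true covariance rather than its worst-case surrogate.
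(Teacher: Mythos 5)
Your proposal is correct and follows essentially the same route as the paper's proof: condition on (i) all samples staying inside the $R'$-ball (so clipping is vacuous), (ii) the diameter bound $w(\mathcal{D}) = O(\|\Sigma^{1/2}\|_2\sqrt{d+\log(n/\beta)})$, and (iii) the statistical concentration of the sample mean; then invoke Theorem~\ref{thm:rr_err} on the (rescaled, quantized) instance and take a triangle inequality plus union bound. The paper states these three facts as "standard" and leaves the constants implicit, whereas you spell out the per-sample union bound with failure probability $\beta/(4n)$, the AM--GM simplification of the Gaussian tail radical, the identity $u = 2R'\sqrt{n}/\sigma_{\min}$, the observation that $\sigma_{\min}\le\|\Sigma^{1/2}\|_2$ lets $\alpha$ be swallowed by the $1/\sqrt n$ term, and the $\sqrt{\log}\le\log$ step that turns $\sqrt{d/\rho}\sqrt{\log(nd/\beta)}$ into $\sqrt{d}\log(nd/\beta)/\sqrt\rho$ --- all of which the paper implicitly relies on but does not write out.
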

\begin{proof}
We may assume that no sample gets clipped by $R'$, because by Lemma \ref{lem:gauss_tail}, with probability $1-\beta/4$, no sample has norm greater than $R'$. The error consists of two parts, the statistical error $\| f(\mathcal{D}) - \mu \|_2$ and the empirical error $\|\hat{\mu} - f(\mathcal{D})\|_2$. The former is bounded by $ O(\| \Sigma^{1/2} \|_2 \sqrt{d + \log\frac{n}{\beta}} \cdot \frac{1}{\sqrt{n}})$ with probability $1-\beta/4$ by standard statistical analysis.  To bound the latter using Theorem \ref{thm:rr_err}, we note that $w(\mathcal{D}) \le  O(\| \Sigma^{1/2} \|_2 \sqrt{d + \log\frac{n}{\beta}})$ with probability $1-\beta/4$ by standard concentration analysis of the multivariate Gaussian distribution. Plugging this into Theorem \ref{thm:rr_err} yields the error bound in the corollary.  Note that the additive $\alpha$ error due to quantization is dominated by $O(\|\Sigma^{1/2}\|_2 \sqrt{d/n})$.
\end{proof}

\begin{remark}
When $\Sigma = \mathbf{I}$ and ignoring $\log^{O(1)}({dnR \over \beta}\cdot{\sigma_{\max} \over \sigma_{\min}})$ factors, the error in Corollary \ref{cor:gauss_err} becomes $\tilde{O}\left({\sqrt{d} \over \sqrt{n}} + {d \over \sqrt{\rho}n}\right)$, matching the known optimal bound for Gaussian mean estimation \cite{biswas2020coinpress}.
\end{remark}
}{
\textbf{Statistical Mean Estimation.} Suppose $\mathcal{D}$ consists of i.i.d.\ samples drawn from the multivariate Gaussian distribution $\mathcal{N}(\mu, \Sigma)$, and we wish to estimate $\mu$, assuming \textit{a priori} bounds $\| \mu \|_2 \le R$ and $\sigma_{\min}^2 \mathbf{I} \preceq \Sigma \preceq \sigma_{\max}^2 \mathbf{I}$. We first clip each sample $x_i \gets x_i \cdot \min\{ R' / \|x_i\|_2, 1\}$ where $R' := R+2\sigma_{\max}\sqrt{d + \log\frac{4n}{\beta}}$.  Then we apply our mechanism with bucket size $\alpha / \sqrt{d}$ where $\alpha = \sigma_{\min}\sqrt{d/n}$.  We analyze the error in the supplementary material. When $\Sigma = \mathbf{I}$ and ignoring $\log^{O(1)}({dnR \over \beta}\cdot{\sigma_{\max} \over \sigma_{\min}})$ factors, the error in becomes $\tilde{O}\left({\sqrt{d} \over \sqrt{n}} + {d \over \sqrt{\rho}n}\right)$, matching the known optimal bound for Gaussian mean estimation \cite{biswas2020coinpress}.
}

\section{Lower Bounds}
\label{sec:lb}
In this section we establish the instance optimality of Theorem \ref{thm:rr_err} via three lower bounds: (1) $\mathcal{R}_{\text{in-nbr}}(\mathcal{D}) = \Omega(w(\mathcal{D})/n)$ for all $\mathcal{D}$; (2) an $\tilde{\Omega}(\sqrt{d/\rho})$ lower bound on the optimality ratio, and (3) that the condition $n=\tilde{\Omega}(\sqrt{d/\rho})$ is necessary. 

The first lower bound follows from an observation by Vadhan \cite{vadhan2017complexity}:
\begin{lemma}[\cite{vadhan2017complexity}]
\label{lm:vadhan}
For any $f$, any $(\varepsilon, \delta)$-DP mechanism $M'$,  and any neighboring datasets $\mathcal{D}_0 \sim \mathcal{D}_1$, there is a $b\in \{0,1\}$ such that
\begin{align*}
    \Pr[\| M'(\mathcal{D}_b) - f(\mathcal{D}_b) \|_2 < \| f(\mathcal{D}_0) - f(\mathcal{D}_1) \|_2 / 2] < \frac{1+\delta}{1+e^{-\varepsilon}}.
\end{align*}
\end{lemma}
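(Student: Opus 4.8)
The plan is to prove Lemma~\ref{lm:vadhan} by a one-line packing argument. Write $v:=\|f(\mathcal{D}_0)-f(\mathcal{D}_1)\|_2$ and, for $b\in\{0,1\}$, let $B_b:=\{y\in\mathbb{R}^d:\|y-f(\mathcal{D}_b)\|_2<v/2\}$ be the open ball of radius $v/2$ around the true answer on $\mathcal{D}_b$. Then $\Pr[\|M'(\mathcal{D}_b)-f(\mathcal{D}_b)\|_2<v/2]=\Pr[M'(\mathcal{D}_b)\in B_b]$, so it suffices to bound $\min_b\Pr[M'(\mathcal{D}_b)\in B_b]$. The key (and only) geometric input is that $B_0\cap B_1=\emptyset$: a common point would lie within $v/2$ of both centers, forcing $v<v$ by the triangle inequality.

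Next I would move probability mass across the neighboring pair $\mathcal{D}_0\sim\mathcal{D}_1$ using $(\varepsilon,\delta)$-DP applied to the (measurable) event $B_0$: $\Pr[M'(\mathcal{D}_0)\in B_0]\le e^{\varepsilon}\Pr[M'(\mathcal{D}_1)\in B_0]+\delta$, hence $\Pr[M'(\mathcal{D}_1)\in B_0]\ge e^{-\varepsilon}\bigl(\Pr[M'(\mathcal{D}_0)\in B_0]-\delta\bigr)$. Combining this with disjointness, $\Pr[M'(\mathcal{D}_1)\in B_0]+\Pr[M'(\mathcal{D}_1)\in B_1]\le 1$, gives
\[
e^{-\varepsilon}\Pr[M'(\mathcal{D}_0)\in B_0]+\Pr[M'(\mathcal{D}_1)\in B_1]\ \le\ 1+e^{-\varepsilon}\delta .
\]
Setting $p:=\min_b\Pr[M'(\mathcal{D}_b)\in B_b]$, the left-hand side is at least $(1+e^{-\varepsilon})p$, so $p\le\frac{1+e^{-\varepsilon}\delta}{1+e^{-\varepsilon}}$, which is $<\frac{1+\delta}{1+e^{-\varepsilon}}$ because $\varepsilon>0$ forces $e^{-\varepsilon}\delta<\delta$ (for $\delta>0$). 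Choosing $b$ that attains the minimum gives the lemma.

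I do not anticipate a substantive obstacle: the proof is a single use of the DP inequality plus the disjointness of two balls, followed by elementary algebra, and one could equally phrase it as ``suppose both probabilities are $\ge\frac{1+\delta}{1+e^{-\varepsilon}}$ and derive $\delta(1-e^{-\varepsilon})\le 0$''. The only points warranting care are: (i) taking the success regions to be \emph{open} balls, so that the strict inequality ``$<v/2$'' in the statement matches genuinely disjoint events; and (ii) tracking the source of the strict inequality in the conclusion, namely the slack $e^{-\varepsilon}\delta<\delta$, which is available exactly because $\varepsilon>0$ — in the degenerate case $\delta=0$ the same computation still yields the non-strict bound $p\le\frac{1}{1+e^{-\varepsilon}}$, which is what the downstream lower bounds actually invoke.
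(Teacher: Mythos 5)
Your proof is correct. The paper cites Lemma~\ref{lm:vadhan} from Vadhan's survey and does not reprove it, so there is no in-paper proof to compare against; the argument you give — two disjoint open balls of radius $v/2$ around $f(\mathcal{D}_0)$ and $f(\mathcal{D}_1)$, one application of the $(\varepsilon,\delta)$-DP inequality to move probability mass from $\mathcal{D}_0$ to $\mathcal{D}_1$ on the event $B_0$, disjointness, and then minimizing — is exactly the standard packing argument that underlies Vadhan's proposition, so it is essentially the same approach. The algebra is right: $e^{-\varepsilon}\Pr[M'(\mathcal{D}_0)\in B_0]+\Pr[M'(\mathcal{D}_1)\in B_1]\le 1+e^{-\varepsilon}\delta$, and minimizing over $b$ gives $p\le\frac{1+e^{-\varepsilon}\delta}{1+e^{-\varepsilon}}$.

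One small point, which you already flag and which is worth keeping: for $\delta=0$ your bound is only non-strict, and in fact the stated strict inequality genuinely fails there — an input-independent mechanism that outputs $f(\mathcal{D}_0)$ with probability $\frac{1}{1+e^{-\varepsilon}}$ and $f(\mathcal{D}_1)$ otherwise is $(\varepsilon,0)$-DP and attains $\min_b\Pr[M'(\mathcal{D}_b)\in B_b]=\frac{e^{-\varepsilon}}{1+e^{-\varepsilon}}$, while a symmetric input-dependent version attains $\frac{1}{1+e^{-\varepsilon}}$ for both $b$. So the lemma as transcribed should really read ``$\le$'' when $\delta=0$. This has no effect on how the paper uses the lemma, since the downstream theorem only needs $\frac{1+\delta}{1+e^{-\varepsilon}}<2/3$ for $\varepsilon<0.1$, $\delta<0.1$, which holds comfortably in either form.
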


\begin{theorem}
For $\varepsilon < 0.1, \delta < 0.1$, $\mathcal{R}_{\text{in-nbr}}(\mathcal{D}) = \Omega(w(\mathcal{D})/n)$.
\end{theorem}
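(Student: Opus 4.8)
The plan is to invoke Lemma~\ref{lm:vadhan} with a carefully chosen pair of in-neighbors of $\mathcal{D}$ whose means are far apart. Recall $w(\mathcal{D}) = \max_{i<j}\|x_i - x_j\|_2$ is realized by some pair of data vectors $x_a, x_b \in \mathrm{supp}(\mathcal{D})$. The idea is to take $\mathcal{D}_0$ to be $\mathcal{D}$ with one copy of $x_a$ replaced by $x_b$ (or, if $\mathcal{D}$ does not contain $x_a$ — it does, by definition of $w$), more simply: let $\mathcal{D}_0 := \mathcal{D}$ and let $\mathcal{D}_1$ be the dataset obtained from $\mathcal{D}$ by replacing one occurrence of $x_a$ with $x_b$. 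Both are in-neighbors of $\mathcal{D}$ in the sense of $\mathcal{R}_{\text{in-nbr}}$: they differ in one entry, and $\mathrm{supp}(\mathcal{D}_1)\subseteq\mathrm{supp}(\mathcal{D})$ since $x_b$ already appears in $\mathcal{D}$. Then $f(\mathcal{D}_0) - f(\mathcal{D}_1) = \frac1n(x_a - x_b)$, so $\|f(\mathcal{D}_0) - f(\mathcal{D}_1)\|_2 = w(\mathcal{D})/n$.

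Next I would plug this into Lemma~\ref{lm:vadhan}: for any $(\varepsilon,\delta)$-DP mechanism $M'$ there is $b^* \in \{0,1\}$ with
\[
\Pr\big[\|M'(\mathcal{D}_{b^*}) - f(\mathcal{D}_{b^*})\|_2 < w(\mathcal{D})/(2n)\big] < \frac{1+\delta}{1+e^{-\varepsilon}}.
\]
For $\varepsilon < 0.1$ and $\delta < 0.1$ one checks $\frac{1+\delta}{1+e^{-\varepsilon}} < 1 - 1/3$, i.e.\ the right-hand side is bounded away from $2/3$ (indeed $e^{-0.1} \approx 0.905$, so the bound is at most $1.1/1.905 \approx 0.578 < 2/3$). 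Hence on the instance $\mathcal{D}_{b^*}$, the mechanism $M'$ fails to achieve error below $w(\mathcal{D})/(2n)$ with probability at least $2/3$; equivalently, the inner $\inf\{\xi : \Pr[\cdots \le \xi]\ge 2/3\}$ for $M'$ at $\mathcal{D}_{b^*}$ is at least $w(\mathcal{D})/(2n)$. Since $\mathcal{D}_{b^*}$ is an in-neighbor of $\mathcal{D}$ (both $\mathcal{D}_0$ and $\mathcal{D}_1$ are), the $\sup$ over in-neighbors in the definition of $\mathcal{R}_{\text{in-nbr}}(\mathcal{D})$ is at least $w(\mathcal{D})/(2n)$. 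Taking the $\inf$ over all $M'$ preserves this bound (it holds for every $M'$), so $\mathcal{R}_{\text{in-nbr}}(\mathcal{D}) \ge w(\mathcal{D})/(2n) = \Omega(w(\mathcal{D})/n)$.

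One small subtlety to address: $w(\mathcal{D})$ could be realized by a pair $x_a, x_b$ that are the \emph{only} two distinct vectors, or $\mathcal{D}$ might consist of $n$ identical vectors, in which case $w(\mathcal{D}) = 0$ and the bound is trivial. Also if $x_a$ has multiplicity one in $\mathcal{D}$, replacing it with $x_b$ still leaves $\mathrm{supp}(\mathcal{D}_1)\subseteq\mathrm{supp}(\mathcal{D})$ (we remove $x_a$ from the support but add nothing new), so the in-neighbor condition is fine; I should just note this. The main (and only mild) obstacle is the arithmetic verifying $\frac{1+\delta}{1+e^{-\varepsilon}} < 2/3$ for the stated constant ranges $\varepsilon,\delta < 0.1$, and making sure the reduction from ``fails with probability $\ge 1/3$'' to ``the $2/3$-quantile error is $\ge w(\mathcal{D})/(2n)$'' is stated cleanly; neither is deep. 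Everything else is a direct application of Lemma~\ref{lm:vadhan} to the right neighbor pair.
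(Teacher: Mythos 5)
Your proposal is correct and matches the paper's proof essentially verbatim: the paper also sets $\mathcal{D}_0 = \mathcal{D}$, sets $\mathcal{D}_1$ to be $\mathcal{D}$ with one of the two diameter-realizing vectors replaced by the other, computes $\|f(\mathcal{D}_0) - f(\mathcal{D}_1)\|_2 = w(\mathcal{D})/n$, and applies Lemma~\ref{lm:vadhan}. The extra details you supply (the numeric check that $\frac{1+\delta}{1+e^{-\varepsilon}} < 2/3$ for $\varepsilon,\delta < 0.1$, and the observation that $\mathrm{supp}(\mathcal{D}_1)\subseteq\mathrm{supp}(\mathcal{D})$ so $\mathcal{D}_1$ is a valid in-neighbor) are ones the paper leaves implicit.
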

\ifthenelse{\boolean{long}}{
\begin{proof}
By the definition of $\mathcal{R}_{\text{in-nbr}}(\mathcal{D})$, it suffices to show that there exists an in-neighbor $\mathcal{D}'$ of $\mathcal{D}$ such that any $M'$ must incur error $\Omega(w(\mathcal{D})/n)$ with probability at least $1/3$ on either $\mathcal{D}$ or $\mathcal{D}'$. Let $x_i, x_j$ be the two vectors in $\mathcal{D}$ that attain the diameter, i.e., $\|x_i-x_j\|_2 = w(\mathcal{D})$. We let $\mathcal{D}'$ be the dataset obtained by changing $x_i$ to $x_j$ in $\mathcal{D}$. It can be verified that $\|f(\mathcal{D}) - f(\mathcal{D}')\|_2= w(\mathcal{D})/n$ for the mean function $f$. Then plugging $\mathcal{D}_0=\mathcal{D}, \mathcal{D}_1=\mathcal{D}'$ into Lemma \ref{lm:vadhan} proves the theorem.
\end{proof}
}{}

\ifthenelse{\boolean{long}}{
The lower bound on the optimality ratio is by the reduction from statistical mean estimation, for which there are known lower bounds:

\begin{lemma}
[\cite{kamath2019privately}]
\label{lm:GaussLB}
For a Gaussian distribution with unknown mean $\mu \in [-R, R]^d$ and known covariance $\sigma^2 \mathbf{I}$, any $(\varepsilon, \delta)$-DP mechanism (for $\delta = \tilde{O}(\sqrt{d}/(nR))$) for estimating $\mu$ must incur $\ell_2$ error $\tilde{\Omega}(\sigma d/(\varepsilon n))$ with constant probability.
\end{lemma}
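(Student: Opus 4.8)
The plan is to prove this lower bound by the fingerprinting (``score attack'') method; a packing/volume argument over $[-R,R]^d$ only yields an $\tilde\Omega(\sqrt{d}/(\varepsilon n))$ bound, so the linear-in-$d$ dependence genuinely needs fingerprinting. First I would fix a hard prior on $\mu$: draw the coordinates $\mu_1,\dots,\mu_d$ i.i.d.\ from a one-dimensional ``fingerprinting'' distribution (a rescaled Bun--Ullman--Vadhan prior) supported on a sub-interval of $[-R,R]$ and with standard deviation $\Theta(\sigma)$, and let $X=(x_1,\dots,x_n)$ consist of i.i.d.\ samples from $\mathcal{N}(\mu,\sigma^2\mathbf{I})$. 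Suppose toward a contradiction that an $(\varepsilon,\delta)$-DP mechanism $M$ outputs $\hat\mu = M(X)$ with $\mathbb{E}\,\|\hat\mu-\mu\|_2 \le \alpha$ for some $\alpha = o(\sigma d/(\varepsilon n))$; by post-processing I may assume $\hat\mu\in[-R,R]^d$, since projecting onto this convex set never increases the error and preserves privacy.

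The core of the argument is to sandwich the correlation statistic $Z := \langle \hat\mu - \mu,\ \sum_{i=1}^n (x_i-\mu)\rangle = \sum_{j=1}^d (\hat\mu_j-\mu_j)\sum_i (x_{ij}-\mu_j)$ between two bounds. For the lower bound I would invoke the Gaussian fingerprinting lemma of \cite{kamath2019privately}: under the chosen prior, $\mathbb{E}\big[(\hat\mu_j-\mu_j)\sum_i(x_{ij}-\mu_j)\big] \ge \Omega(\sigma^2) - O\big(\mathbb{E}[(\hat\mu_j-\mu_j)^2]\big)$ for each coordinate $j$, so summing over $j$ and using accuracy gives $\mathbb{E}[Z] = \Omega(d\sigma^2)$ as long as $\alpha^2 = o(d\sigma^2)$. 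For the upper bound I would use privacy: for each $i$ replace $x_i$ by a fresh independent sample to form a neighbour $X^{(i)}$, observe that $\mathbb{E}[\langle \hat\mu(X^{(i)})-\mu,\ x_i-\mu\rangle] = 0$ because $\hat\mu(X^{(i)})$ is conditionally independent of $x_i$ given $\mu$, and bound the leftover $\mathbb{E}[\langle \hat\mu(X) - \hat\mu(X^{(i)}),\ x_i-\mu\rangle]$ by $O(\varepsilon)\cdot\sqrt{\mathbb{E}\|\hat\mu-\mu\|_2^2}\cdot\sigma$ plus a term of order $\delta\cdot R\sqrt{d}$, using the standard conversion of $(\varepsilon,\delta)$-indistinguishability of $\hat\mu(X)$ and $\hat\mu(X^{(i)})$ into a control on such cross-correlations. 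Summing over $i$ gives $\mathbb{E}[Z] = O(\varepsilon n \sigma \alpha) + O(\delta n R\sqrt{d})$. Equating with $\Omega(d\sigma^2)$ and using that the $\delta$ term is negligible whenever $\delta = \tilde O(\sqrt{d}/(nR))$ (the $R\sqrt{d}$ is the $\ell_2$-diameter of the domain, which is exactly why $R$ enters the admissible range of $\delta$) yields $\alpha = \tilde\Omega(\sigma d/(\varepsilon n))$, a contradiction. A final Markov step converts the bound on $\mathbb{E}\|\hat\mu-\mu\|_2$ into the claimed ``with constant probability'' statement.

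I expect the lower bound on $\mathbb{E}[Z]$ to be the main obstacle: one has to exhibit a prior that simultaneously lies inside $[-R,R]^d$, keeps the non-private statistical error comfortably below the target (so that ``accuracy of $\hat\mu$'' is not vacuous), and still forces an $\Omega(\sigma^2)$ per-coordinate correlation that degrades only linearly in the estimator's squared error --- this is precisely the content of the Gaussian fingerprinting lemma. A secondary subtlety is making the privacy step tight for $(\varepsilon,\delta)$-DP rather than pure $\varepsilon$-DP: the clean route is to first prove the inequality under $\varepsilon$-DP via a coupling of $\hat\mu(X)$ and $\hat\mu(X^{(i)})$, then absorb $\delta$ by discarding a $\delta$-fraction of the output mass, which is what determines the stated dependence of $\delta$ on $n$, $d$, $R$. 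Everything else --- the conditional-independence cancellation, Cauchy--Schwarz, and the concluding contradiction --- is routine. Since the lemma is quoted verbatim from \cite{kamath2019privately}, the paper of course simply cites it; the above is what a self-contained proof would look like.
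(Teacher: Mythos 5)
The paper states this lemma purely as a citation to Kamath, Li, Singhal, and Ullman (2019) and gives no proof of its own, so there is no ``paper's approach'' to compare against; you correctly flag this. Your reconstruction is a faithful account of the fingerprinting (score-attack) argument in that reference: a product prior on $\mu$ of scale $\Theta(\sigma)$ truncated to $[-R,R]^d$, the statistic $Z=\langle\hat\mu-\mu,\sum_i(x_i-\mu)\rangle$, a per-coordinate fingerprinting lower bound $\mathbb{E}[Z]=\Omega(d\sigma^2)$, and a privacy upper bound via the leave-one-out resampling decomposition. In particular you correctly isolate the step that makes the bound linear in $d$ rather than $\sqrt d$: after conditioning on $\mu$, $\hat\mu(X^{(i)})-\mu$ and $x_i-\mu$ are independent, so $\mathbb{E}\bigl[\bigl|\langle\hat\mu(X^{(i)})-\mu,\,x_i-\mu\rangle\bigr|\bigr]=O(\alpha\sigma)$ rather than $O(\alpha\sigma\sqrt d)$, and the $(\varepsilon,\delta)$-closeness of $\hat\mu(X)$ and $\hat\mu(X^{(i)})$ then gives $\mathbb{E}[Z]=O(\varepsilon n\alpha\sigma)+(\delta\text{-term})$. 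The one imprecision worth noting is the $\delta$ remainder: the supremum of $|\langle\hat\mu-\mu,x_i-\mu\rangle|$ scales like $R\sqrt d\cdot\|x_i-\mu\|_2$, and $\|x_i-\mu\|_2$ concentrates around $\sigma\sqrt d$, so the per-sample $\delta$-term is of order $\delta R d\sigma$ rather than $\delta R\sqrt d$; accounting for this (plus the Gaussian tail of $\|x_i-\mu\|_2$, handled by truncation) is what produces the stated admissible range $\delta=\tilde O(\sqrt d/(nR))$ up to polylogarithmic factors. With that caveat, the proposal correctly reproduces the cited proof.
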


\begin{theorem}
Let $M$ be any $\rho$-zCDP mechanism for mean estimation that has $\ell_2$ error $c \cdot w(\mathcal{D})/n$ with constant probability for any $\mathcal{D}=\{x_i\}_{i\in [n]}$ drawn from $[u]^d$. If $\rho < \tilde{O}(\sqrt{d/n})$, then $c = \tilde{\Omega}(\sqrt{d/\rho})$.
\end{theorem}
\begin{proof}
Lemma \ref{lm:GaussLB} implies a lower bound of $\tilde{\Omega}(\sigma d/(n\sqrt{\rho}))$ for $\rho$-zCDP mechanisms, since a $\rho$-zCDP mechanism is $(\tilde{O}(\sqrt{\rho}), \delta)$-DP.
Since $w(\mathcal{D}) = \tilde{O}(\sigma\sqrt{d})$ for Gaussian data $\mathcal{N}(\mu, \sigma^2 \mathbf{I})$ w.h.p., the reduction in Corollary \ref{cor:gauss_err} converts a $c \cdot w(\mathcal{D}/n)$ error for empirical mean to an error of  $O(\sigma \sqrt{d/n}) + c \cdot w(\mathcal{D})/n = O(\sigma \sqrt{d/n} (1 + {c \over \sqrt{n}}))$ for Gaussian mean estimation.  Comparing with the above lower bound, we obtain
\[ 1 + {c \over \sqrt{n}} = \tilde{\Omega}\left(\sqrt{d \over \rho n}\right).
\]
If $\rho = \tilde{O}(\sqrt{d/n})$, the RHS is $\tilde{\Omega}(1)$, then $c = \tilde{\Omega}(\sqrt{d/\rho})$.
\end{proof}
}{
The lower bound on the optimality ratio is by the reduction from statistical mean estimation \cite{kamath2019privately}.
\begin{theorem}
Let $M$ be any $\rho$-zCDP mechanism for mean estimation that has $\ell_2$ error $c \cdot w(\mathcal{D})/n$ with constant probability for any $\mathcal{D}=\{x_i\}_{i\in [n]}$ drawn from $[u]^d$. If $\rho < \tilde{O}(\sqrt{d/n})$, then $c = \tilde{\Omega}(\sqrt{d/\rho})$.
\end{theorem}
}

For the lower bound on $n$, we consider a weaker problem (so the lower bound is stronger), which is the $d$-dimensional version of the \textit{interior point problem} \cite{bun2015differentially}: Given a dataset $\mathcal{D} = \{x_i\}_{i\in[n]}$ drawn from $[u]^d$, the mechanism is only required to return a $y\in [u]^d$ such that $\min_i x_{ij} \le y_j \le \max_i x_{ij}$ for all $j$ with constant probability. 

\begin{theorem}
\label{thm:lb_int}
If there exists a $\rho$-zCDP mechanism that solves the interior point problem with success probability $2/3$, then $n = \Omega(\sqrt{d \log u/\rho})$.
\end{theorem}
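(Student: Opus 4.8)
The plan is to prove this directly by a packing argument adapted to zCDP; one should \emph{not} try to reduce to the one-dimensional interior point problem, since the natural ``copy each value into all $d$ coordinates'' reduction turns a one-record change into a one-record change and hence only recovers the weaker bound $n = \Omega(\sqrt{\log u/\rho})$. Instead, I would use the family of $u^d$ point-mass datasets: for $v \in [u]^d$, let $\mathcal{D}_v$ be the dataset with all $n$ records equal to $v$. On $\mathcal{D}_v$ one has $\min_i x_{ij} = \max_i x_{ij} = v_j$ for every coordinate $j$, so any mechanism $M$ solving the interior point problem must output exactly $v$ on $\mathcal{D}_v$, i.e.\ $\Pr[M(\mathcal{D}_v) = v] \ge 2/3$. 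Moreover the events $\{M(\cdot) = v\}$ for distinct $v \in [u]^d$ are pairwise disjoint, so $\sum_{v} \Pr[M(\mathcal{D}_{\mathbf 0}) = v] \le 1$, and since there are $u^d$ of them there is some $v^\star$ with $\Pr[M(\mathcal{D}_{\mathbf 0}) = v^\star] \le u^{-d}$; note $v^\star \neq \mathbf 0$ because $\Pr[M(\mathcal{D}_{\mathbf 0}) = \mathbf 0] \ge 2/3 > u^{-d}$.

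Next I would bring in privacy through group privacy for zCDP: since $\mathcal{D}_{v^\star}$ and $\mathcal{D}_{\mathbf 0}$ differ in all $n$ records, $\rho$-zCDP of $M$ implies $D_\alpha\big(M(\mathcal{D}_{v^\star}) \,||\, M(\mathcal{D}_{\mathbf 0})\big) \le O(n^2\rho)\cdot\alpha$ for every $\alpha > 1$; the essential feature here is that zCDP degrades \emph{quadratically} in the group size, as opposed to the $e^{n\varepsilon}$ factor of pure DP. Combining this with the standard probability-preservation bound for R\'enyi divergence --- for any event $E$, $\log\Pr[M(\mathcal{D}_{v^\star})\in E] \le \tfrac{\alpha-1}{\alpha}\big(D_\alpha(M(\mathcal{D}_{v^\star}) \,||\, M(\mathcal{D}_{\mathbf 0})) + \log\Pr[M(\mathcal{D}_{\mathbf 0})\in E]\big)$, which itself follows from data processing onto the indicator of $E$ --- applied with $E = \{M(\cdot) = v^\star\}$ yields
\[
\log\tfrac23 \;\le\; \log\Pr[M(\mathcal{D}_{v^\star}) = v^\star] \;\le\; (\alpha-1)\cdot O(n^2\rho) \;-\; \tfrac{\alpha-1}{\alpha}\, d\log u \qquad\text{for all }\alpha > 1.
\]

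Finally I would optimize over the free R\'enyi order $\alpha$: the right-hand side is minimized at $\alpha = \Theta\big(\sqrt{d\log u/(n^2\rho)}\big)$, which exceeds $1$ unless $n^2\rho = \Omega(d\log u)$ (in which case the claim is immediate), and there it equals $-\big(\sqrt{d\log u} - \Theta(n\sqrt\rho)\big)^2$. Hence $\big(\sqrt{d\log u} - \Theta(n\sqrt\rho)\big)^2 = O(1)$, which forces $n\sqrt\rho = \Omega(\sqrt{d\log u})$, i.e.\ $n = \Omega(\sqrt{d\log u/\rho})$. The argument is short, and the only step needing genuine care is matching the constant in the zCDP group-privacy bound against the $\alpha$-optimization so that both the $\sqrt{\cdot}$ and the $d\log u$ come out exactly as stated; the combinatorial core --- point-mass datasets force exact, mutually exclusive outputs, and there are $u^d$ of them, which is precisely what supplies the extra $\sqrt d$ factor over a purely one-dimensional argument --- is the crux. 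In effect this is the generic zCDP packing lower bound instantiated with a packing of size $u^d$, and taking $d = 1$ recovers the $\Omega(\sqrt{\log u/\rho})$ one-dimensional bound used for Corollary~\ref{cor:quantileLB}.
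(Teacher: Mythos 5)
Your proof is correct, but it takes a genuinely different route from the paper's. The paper invokes the mutual-information bound for zCDP from Bun--Steinke (Lemma~\ref{lm:CDPLB} in the paper: $I(X; M'(X)) \le \rho n^2$), applies it to $X = (Z,\dots,Z)$ with $Z$ uniform on $[u]^d$, and closes with Fano's inequality to obtain $I(X;M'(X)) = \Omega(d\log u)$, giving $\rho n^2 = \Omega(d\log u)$ in one line. You instead run the packing argument ``by hand'': point-mass datasets give $u^d$ pairwise-disjoint must-hit output events, group privacy for zCDP gives $D_\alpha(M(\mathcal{D}_{v^\star})\,\|\,M(\mathcal{D}_{\mathbf 0})) \le O(n^2\rho)\alpha$, and the R\'enyi probability-preservation inequality (via data processing onto the indicator of the event) turns this into a quantitative bound that you optimize over $\alpha$. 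Both arguments share the same combinatorial core --- the point-mass construction and the quadratic group-privacy degradation of zCDP are what supply the $\sqrt{d}$ over the one-dimensional bound --- so at a structural level your proof is essentially unpacking the proof of the mutual-information lemma rather than citing it. The paper's route is shorter and cleaner once the lemma is taken as a black box; yours is more self-contained, makes explicit exactly where the $n^2$ and the free order $\alpha$ enter, and has the pedagogical virtue of showing why the optimization lands at $\alpha = \Theta(\sqrt{d\log u/(n^2\rho)})$ and why the resulting $-(\sqrt{d\log u}-\Theta(n\sqrt{\rho}))^2$ forces the claimed bound. Your remark that a naive reduction to the one-dimensional interior-point problem would lose the $\sqrt{d}$ factor is accurate and correctly motivates why both proofs work with the full $u^d$-sized packing rather than a coordinate-wise reduction. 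One small thing worth flagging if you write this up carefully: you should state explicitly that $u \ge 2$ so that $u^{-d} < 2/3$ and hence $v^\star \neq \mathbf{0}$, and that the case $n^2\rho = \Omega(d\log u)$ (where the optimal $\alpha$ would fall below $1$) is exactly the case where the conclusion is already immediate.
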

\ifthenelse{\boolean{long}}{
\begin{proof}
We need a lemma from \cite{bun2016concentrated} to bound the mutual information of a $\rho$-zCDP mechanism.
\begin{lemma}
[\cite{bun2016concentrated}]
\label{lm:CDPLB}
Let $M' : \mathcal{X}^n \rightarrow \mathcal{Y}$ satisfy $\rho$-zCDP. Let $X$ be a random variable in $\mathcal{X}^n$. Then, $I(X; M'(X)) \le \rho n^2$, where $I(\cdot; \cdot)$ denotes mutual information.
\end{lemma}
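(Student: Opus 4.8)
The plan is to reduce the mutual information to a KL divergence against a \emph{fixed} reference distribution and then control that divergence through the group-privacy scaling of zCDP. Write $Y=M'(X)$ for the (random) output. The first step is the variational bound: for \emph{every} distribution $Q$ on $\mathcal{Y}$,
\[ I(X;Y)=D_{\mathrm{KL}}(P_{XY}\,\|\,P_X\otimes P_Y)\;\le\; D_{\mathrm{KL}}(P_{XY}\,\|\,P_X\otimes Q)\;=\;\mathbb{E}_{x\sim X}\big[D_{\mathrm{KL}}(M'(x)\,\|\,Q)\big], \]
where the inequality is just $D_{\mathrm{KL}}(P_{XY}\|P_X\otimes Q)-I(X;Y)=D_{\mathrm{KL}}(P_Y\|Q)\ge 0$, and the last equality is the chain rule for KL divergence. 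I would then fix an \emph{arbitrary} dataset $x_0\in\mathcal{X}^n$ and instantiate $Q=M'(x_0)$, obtaining $I(X;Y)\le\mathbb{E}_{x\sim X}\big[D_{\mathrm{KL}}(M'(x)\,\|\,M'(x_0))\big]$.

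The second step bounds $D_{\mathrm{KL}}(M'(x)\|M'(x_0))$ uniformly in $x$. Any two elements of $\mathcal{X}^n$ differ in at most $n$ coordinates, so $d_{\text{ham}}(x,x_0)\le n$. Group privacy for zCDP (which follows from the same source \cite{bun2016concentrated}) says that a $\rho$-zCDP mechanism is $\rho k^2$-zCDP on datasets at Hamming distance $k$; in particular $D_\alpha(M'(x)\|M'(x_0))\le \rho\,d_{\text{ham}}(x,x_0)^2\,\alpha\le\rho n^2\alpha$ for all $\alpha>1$ (finiteness here also guarantees $M'(x)\ll M'(x_0)$, so the KL divergence is well defined). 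Since $\alpha\mapsto D_\alpha$ is nondecreasing with $\lim_{\alpha\to1^+}D_\alpha=D_{\mathrm{KL}}$, we get $D_{\mathrm{KL}}(M'(x)\|M'(x_0))\le\inf_{\alpha>1}\rho n^2\alpha=\rho n^2$ for every $x$ in the support of $X$. Substituting into the first step yields $I(X;M'(X))\le\rho n^2$.

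The only non-elementary ingredient is the group-privacy scaling, and that is where I expect the real work to sit. The slick way to prove it is through the moment characterization of zCDP: $\rho$-zCDP is equivalent to the privacy-loss variable $Z$ (in both directions) obeying $\mathbb{E}[e^{\lambda(Z-\rho)}]\le e^{\rho\lambda^2}$ for all $\lambda$. Writing the privacy loss between $M'(x)$ and $M'(x_0)$ as the telescoping sum of one-step privacy losses along a path $x=z_0\sim z_1\sim\cdots\sim z_k=x_0$ and using Hölder's inequality to absorb the change of measure between consecutive distributions produces the $k^2$ factor; alternatively one invokes the weak (quasi-)triangle inequality for Rényi divergence. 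Everything else above is a one-line application of standard information-theoretic identities. (As a sanity check: if the coordinates of $X$ were independent, the chain rule for mutual information together with joint convexity of KL would instead give the sharper $I(X;M'(X))\le\rho n$; the quadratic bound is precisely what survives for arbitrary, possibly correlated, $X$, which is the regime the downstream lower bound exploits.)
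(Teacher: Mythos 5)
Your proof is correct; note that the paper does not prove this lemma at all but imports it verbatim from \cite{bun2016concentrated}, so there is no in-paper argument to compare against. Your reconstruction is essentially the standard proof from that reference --- variational upper bound on mutual information, group privacy giving $D_\alpha(M'(x)\,\|\,M'(x_0))\le \rho n^2\alpha$, and monotonicity of R\'enyi divergence as $\alpha\to 1^+$ --- differing only in the cosmetic choice of a fixed reference $M'(x_0)$ where Bun--Steinke use the mixture $M'(X)$ together with convexity of KL.
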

Take $\mathcal{X}=[u]^d$ and let $M'$ be a $\rho$-zCDP mechanism for the interior point problem. Let $X$ be $n$ copies of $Z$, which is uniformly drawn from $\mathcal{X}$, i.e., $X = \{ x_i = Z \}_{i \in [n]}$. Note that $X$ is a random variable drawn from a universe of size $|\mathcal{X}|$, and by the accuracy guarantee of $M'$, $Z$ can be recovered from $M'(X)$ with probability $2/3$. Then, by the Fano's inequality, we have
\begin{align*}
    I(X; M'(X)) &= H(X) - H(X \mid M'(X)) \\
    &= \log |\mathcal{X}| - H(X \mid M'(X)) \\
    &\ge \log |\mathcal{X}| - \log 2 - \frac{2}{3} \log |\mathcal{X}| = \Omega(\log |\mathcal{X}|).
\end{align*}
Then the theorem follows from Lemma \ref{lm:CDPLB} and the fact that $\log|\mathcal{X}| = d\log u$.
\end{proof}
}{}

Given a quantile selection mechanism with rank error $\tau$, by finding the median, the $1$-dimensional interior point problem can be solved when $n = O(\tau)$. The following corollary then follows from Theorem~\ref{thm:lb_int}.
\begin{corollary}
\label{cor:quantileLB}
Any $\rho$-zCDP mechanism for the quantile selection problem must have rank error $\Omega(\sqrt{\log u/\rho})$.
\end{corollary}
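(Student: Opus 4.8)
The plan is to reduce the one-dimensional interior point problem (the $d=1$ case of Theorem~\ref{thm:lb_int}) to quantile selection, and then invoke Theorem~\ref{thm:lb_int}. Let $Q$ be any $\rho$-zCDP quantile selection mechanism that, on a size-$n$ input $0\le x_{(1)}\le\cdots\le x_{(n)}\le u$ and a target rank $m$, returns a value with rank error at most $\tau$ with probability at least $2/3$. I would instantiate it with $n := 2\lceil\tau\rceil+2$ and define an interior point mechanism $M$ on size-$n$ inputs by running $Q$ with target rank $m := \lceil n/2\rceil$ and rounding its (possibly real-valued) output to the nearest integer. Since $m$ is a function of $n$ alone and the rounding is data-independent, $M$ is a post-processing of $Q$, hence also $\rho$-zCDP.

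For utility, observe that with this choice of $n$ and $m$ the rank window $[m-\lceil\tau\rceil,\,m+\lceil\tau\rceil]$ lies inside $[1,n]$, so the boundary convention ($x_{(j)}=0$ for $j<1$, $x_{(j)}=u$ for $j>n$) is never triggered, and the rank-error guarantee places $Q(\mathcal{D})$ in $[x_{(1)},x_{(n)}] = [\min_i x_i,\max_i x_i]$ with probability at least $2/3$; rounding preserves membership in this interval because its endpoints are integers. Thus $M$ solves the one-dimensional interior point problem with probability at least $2/3$ (in fact it recovers the exact value on the hard instances used in the proof of Theorem~\ref{thm:lb_int}, where all $n$ points coincide and $x_{(m-\tau)}=x_{(m+\tau)}$). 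Theorem~\ref{thm:lb_int} with $d=1$ then forces $n=\Omega(\sqrt{\log u/\rho})$, i.e.\ $2\lceil\tau\rceil+2 = \Omega(\sqrt{\log u/\rho})$, and therefore $\tau = \Omega(\sqrt{\log u/\rho})$.

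The argument is a short reduction rather than a genuinely hard proof, but two points need care. First, $n$ must be taken large enough relative to $\tau$ (here $n\ge 2\tau+2$ suffices) that the median's rank window stays within $[1,n]$; otherwise the convention $x_{(j)}=0$/$x_{(j)}=u$ would let $Q$ legitimately return $0$ or $u$ and evade the interior-point requirement, so the reduction would break. Second, the constant in the success probability of $Q$ must be at least the $2/3$ required by Theorem~\ref{thm:lb_int}, which is the natural reading of the corollary's statement (the lower bound is for constant failure probability). Finally, since the bound holds in this constant-failure-probability regime, it matches the $O(\sqrt{\log u\,\log\log u/\rho})$ upper bound of Theorem~\ref{thm:rank_err} up to the claimed $O(\sqrt{\log\log u})$ factor.
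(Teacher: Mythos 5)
Your reduction is correct and is exactly the paper's argument: the paper also reduces the one-dimensional interior point problem to median selection, noting it can be solved when $n=O(\tau)$, and then invokes Theorem~\ref{thm:lb_int}. You have simply made explicit the bookkeeping (choosing $n=2\lceil\tau\rceil+2$ so the rank window stays inside $[1,n]$, post-processing preserves zCDP, rounding preserves the interior-point guarantee) that the paper leaves to the reader.
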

\ifthenelse{\boolean{long}}{
}{}

\section{Extension to the Local Model and Shuffle Model}

Our mean estimation framework can be summarized as follows:
(1) Given $\mathcal{D} = \{ x_1,\ldots, x_n \}$, perform a random rotation, obtaining $\hat{\mathcal{D}} = \{ \hat{x}_i := HDx_i \}_{i \in [n]}$; (2) For each $j \in [d]$, find an approximate median $\tilde{c}_j$  of $\hat{\mathcal{D}}$ along dimension $j$. Shift $\hat{\mathcal{D}}$ to be centered around $\tilde{c}$, obtaining $\tilde{\mathcal{D}} = \{ \tilde{x}_i := \hat{x}_i - \tilde{c} \}$; (3) Find a clipping threshold $C$, which is the $m$-th quantile over the $\ell_2$ norms of the vectors in $\tilde{\mathcal{D}}$.  In the central model, the optimal choice is $m=n - \sqrt{2d/\rho}$; (4) Perform $\ell_2$ clipping over $\tilde{\mathcal{D}}$ using $C$, and obtain a mean estimator $\tilde{y}$ of the clipped vectors.  Finally, return $y = ({1\over \sqrt{d}}HD)^{-1} {1\over \sqrt{d}} (\tilde{y} + \tilde{c})$.


We note that each step has their counterparts in the local and the shuffle model: Step (1) is easy, where the randomized diagonal matrix $D$ can be generated using public randomness, or sent from the aggregator to each user if public randomness is not available.  Step (2) and (3) both rely on quantile selection, which have alternatives in the local model and shuffle model. Step (4) is also easy since all vectors have norms bounded by $C$.
\ifthenelse{\boolean{long}}{Below we elaborate on the details.}{We elaborate on the details in the supplementary material.}

\ifthenelse{\boolean{long}}{
\subsection{The Local Model}
For step (4), the standard LDP mechanism is that each user applies the Gaussian mechanism (Lemma \ref{lem:gauss}) with $\mathrm{GS}=2C$ to inject noise to their clipped vector, sends it out, and the aggregator adds them up and divides the sum by $n$.  Thus, the bias of the clipped-mean estimator is the same as that in Lemma \ref{lem:c_err}, while the noise increases by a $\sqrt{n}$-factor, to $C\sqrt{2d \over\rho n}$.  Correspondingly, the optimal choice of $C$ becomes the $m=(n-\sqrt{2dn/\rho})$-th quantile. 

For quantile selection in step (2) and (3),
we use the LDP range counting protocol in \cite{cormode2019answering}.  This one-round protocol returns a data structure from which any range counting query can be answered.

\begin{lemma}
[\cite{cormode2019answering}]
\label{lem:rc_ldp}
There is a one-round $\rho$-zCDP\footnote{The protocol in \cite{cormode2019answering} actually achieves $(\varepsilon, 0)$-DP, but we only need its zCDP version.} protocol in the local model that answers all range counting queries within error $O(\sqrt{n/\rho} \log^2 u \log \frac{u}{\beta})$ with probability at least $1 - \beta$.
\end{lemma}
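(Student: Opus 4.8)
This lemma is quoted essentially verbatim from \cite{cormode2019answering}, so the task is to recall a one-round local protocol realizing it, and the plan is the standard \emph{dyadic-decomposition plus local frequency oracle} recipe. Impose the complete binary tree on the domain $[u]$ (pad $u$ up to a power of $2$ if necessary), with $b=\lceil\log u\rceil$ levels; write $I_v$ for the interval of a tree node $v$ and $n_v=|\mathcal{D}\cap I_v|$. Every range counting query $[a,c]$ -- equivalently every prefix $[1,c]$, which suffices -- is the disjoint union of at most $2b$ canonical nodes, at most two per level. Hence it is enough to produce, from a single round of communication, estimates $\hat n_v$ that are simultaneously accurate for all $O(u)$ tree nodes; a query is then answered by summing the $O(b)$ relevant $\hat n_v$.

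For the protocol, I would split the budget across levels: each user, for every level $\ell\in[b]$, reports a $\tfrac{\rho}{b}$-zCDP message that is a noisy encoding of the level-$\ell$ ancestor of $x_i$ -- concretely, the one-hot vector of that ancestor masked with independent Gaussian noise calibrated to make the release $\tfrac{\rho}{b}$-zCDP (any standard LDP frequency oracle works equally well). Since each user's datum is touched only by these $b$ local randomizers, the composition theorem of zCDP (Lemma~\ref{lem:composition}) gives $\rho$-zCDP in the local model, and the protocol is one round because all $b$ messages are sent together. The aggregator sums the messages level by level and rescales to obtain unbiased estimates $\hat n_v$ of $n_v$ for every node $v$ (optionally enforcing parent--child consistency by a least-squares projection, which only decreases the error).

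To bound the error, observe that $\hat n_v-n_v$ is a sum of $n$ independent, Gaussian (hence sub-Gaussian) contributions of total variance $O(nb/\rho)$, so $\Pr[\,|\hat n_v-n_v|>t\sqrt{nb/\rho}\,]\le 2e^{-\Omega(t^2)}$. A union bound over the $O(u)$ nodes with per-node failure probability $\beta/u$ gives, with probability $1-\beta$, $|\hat n_v-n_v|=O\!\big(\sqrt{nb/\rho}\cdot\sqrt{\log(u/\beta)}\big)$ for \emph{all} $v$ simultaneously, hence for every range query simultaneously. Summing the $O(b)$ node estimates that compose a query and applying the triangle inequality bounds the error of every query by $O(b)\cdot O\!\big(\sqrt{nb/\rho}\sqrt{\log(u/\beta)}\big)=O\!\big(\sqrt{n/\rho}\,\log^{3/2}u\,\sqrt{\log(u/\beta)}\big)$, which is comfortably within the stated $O(\sqrt{n/\rho}\,\log^{2}u\,\log\frac{u}{\beta})$; the extra logarithmic slack in the statement leaves room for a cruder frequency oracle or for the consistency post-processing actually used in \cite{cormode2019answering}.

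The only genuinely non-routine ingredient is the frequency-oracle guarantee itself: one needs a $\rho$-zCDP \emph{local} randomizer whose per-coordinate estimation error scales like $\sqrt{n/\rho}$ independently of the (potentially large) branching factor $2^\ell$, together with its sub-Gaussian tail -- this is precisely the analysis carried out in \cite{cormode2019answering}. Everything else (the dyadic decomposition of range queries, the union bound over tree nodes, the summation of the canonical counts, and the zCDP composition across levels) is routine.
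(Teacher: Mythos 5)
The paper itself offers no proof of this lemma: it is quoted as a black-box result from \cite{cormode2019answering}, with the footnote noting that the original protocol achieves $(\varepsilon,0)$-DP and that the paper only needs its zCDP analogue. Your reconstruction (dyadic decomposition into $O(\log u)$ levels, per-level local frequency oracle with a $\rho/\log u$ share of the budget, union bound over $O(u)$ tree nodes, triangle inequality over the $O(\log u)$ canonical intervals of a range) is indeed the standard recipe underlying \cite{cormode2019answering}, your variance bookkeeping is correct, and the resulting bound $O\big(\sqrt{n/\rho}\,\log^{3/2}u\,\sqrt{\log(u/\beta)}\big)$ is consistent with (in fact slightly sharper than) the stated $O\big(\sqrt{n/\rho}\,\log^2 u\,\log\tfrac{u}{\beta}\big)$. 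One cosmetic mismatch worth flagging: you instantiate the per-level randomizer with Gaussian noise, which gives zCDP directly; the cited protocol uses a pure-DP frequency oracle (whence the footnote) and the extra logarithmic slack in the stated bound accommodates that cruder primitive and its consistency post-processing, as you correctly anticipate. No gap.
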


Putting things together, we obtain the following result:

\begin{theorem}
\label{thm:err_ldp}
Set $\tau = \sqrt{n /\rho } \log^2 (du) \log \frac{du}{\beta}$ and assume $n=\Omega(\tau \sqrt{d})$. There is a 3-round $\rho$-zCDP mean estimation mechanism in the local model, achieving an $\ell_2$-error of \[O\left((\sqrt{dn/\rho} + \tau) \sqrt{\log\frac{nd}{\beta}} \right)\cdot w(\mathcal{D})/n\] with probability $1-\beta$.
\end{theorem}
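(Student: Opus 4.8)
The plan is to mirror the proof of Theorem~\ref{thm:rr_err}, replacing each central-model primitive with its local-model counterpart and tracking how the noise scales. First I would establish privacy: the mechanism runs $d$ invocations of the LDP range counting protocol of Lemma~\ref{lem:rc_ldp} for the medians in step (2), one more invocation for the quantile in step (3), and the per-user Gaussian mechanism in step (4); splitting the budget as $\rho/(3d)$ per median-protocol, $\rho/3$ for the clipping-threshold protocol, and $\rho/3$ for the final Gaussian step (or any constant split), the composition theorem for zCDP (Lemma~\ref{lem:composition}, part 3) gives $\rho$-zCDP overall. Since each call to the range-counting protocol is one round, and step (4) is one round, the protocol uses $3$ rounds as claimed.

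Next I would bound the utility. As in Theorem~\ref{thm:rr_err}, Lemma~\ref{lem:rand_rt} applied to all pairwise differences $x_i-x_j$ (with a union bound over the $O(n^2)$ pairs) gives $\max_{i,j}\|\hat x_i-\hat x_j\|_\infty = O(\sqrt{\log\frac{nd}{\beta}})\cdot w(\mathcal{D})$ with probability $1-\beta/4$. Along each coordinate $j$, the LDP range-counting protocol answers all range queries within rank error $O(\sqrt{n/\rho}\log^2(du)\log\frac{du}{\beta})$ — note we substitute the universe size $du$ because coordinates of $\hat x_i$ lie in a range of that size, and we rescale $\rho$ by the $1/(3d)$ factor, which is absorbed into the definition of $\tau = \sqrt{n/\rho}\log^2(du)\log\frac{du}{\beta}$ up to constants. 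Running binary search on top of this range-counting data structure (as in \texttt{PrivQuant}, using Lemma~\ref{lem:qt_rank_err}) yields an approximate median $\tilde c_j$ with rank error $O(\tau)$; when $n=\Omega(\tau\sqrt d)$ a union bound over $j\in[d]$ shows $\min_i \hat x_{i,j}\le \tilde c_j\le \max_i \hat x_{i,j}$ for all $j$ with probability $1-\beta/4$, so exactly as before the shifted dataset has $r(\tilde{\mathcal D}) = O(\sqrt{d\log\frac{nd}{\beta}})\cdot w(\mathcal{D})$.

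Then I would redo the clipped-mean analysis of Theorem~\ref{thm:cm_err} in the local model. The clipping-threshold quantile is found with the same LDP range-counting protocol, giving rank error $O(\tau)$ (so at most $O(\sqrt{dn/\rho}+\tau)$ vectors are clipped — here the $\sqrt{dn/\rho}$ rather than $\sqrt{d/\rho}$ is because the optimal quantile index becomes $m = n - \sqrt{2dn/\rho}$, since the aggregate noise in step (4) is $C\sqrt{2d/(\rho n)}\cdot n / n$... more precisely each of the $n$ users adds independent $\mathcal N(0, 2C^2/\rho \cdot I)$, so the summed noise has per-coordinate variance $2nC^2/\rho$, and after dividing by $n$ the standard deviation per coordinate is $C\sqrt{2/(\rho n)}$, giving $\ell_2$ noise $O(C\sqrt{d/(\rho n)})$; balancing bias $\frac1n\sum_i\max\{\|x_i\|_2-C,0\}$ against this gives the $\sqrt{dn/\rho}$-th-from-top quantile). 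The clipped bias is at most $(\sqrt{dn/\rho}+\tau)\cdot r(\tilde{\mathcal D})/n$ and the Gaussian tail bound (Lemma~\ref{lem:gauss_tail}) bounds the noise $\ell_2$-norm by $O(\sqrt{d+\log\frac1\beta})\cdot C\sqrt{1/(\rho n)} = O(\sqrt{dn/\rho})\cdot r(\tilde{\mathcal D})/n$ with probability $1-\beta/4$, using $C\le r(\tilde{\mathcal D})$. Combining, $\|(\tilde y+\tilde c) - \frac1n\sum_i\hat x_i\|_2 = O((\sqrt{dn/\rho}+\tau)\sqrt{d\log\frac{nd}{\beta}})\cdot w(\mathcal{D})/n$, and since $(\frac1{\sqrt d}HD)^{-1}\frac1{\sqrt d}$ is orthogonal the final estimator $y$ satisfies $\|y-\frac1n\sum_i x_i\|_2 = O((\sqrt{dn/\rho}+\tau)\sqrt{\log\frac{nd}{\beta}})\cdot w(\mathcal{D})/n$ as claimed, after a union bound over the four failure events.

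The main obstacle is bookkeeping rather than conceptual: one must verify that the $\sqrt n$-factor blowup enters only multiplicatively and in the right places — once inside the noise of step (4) (shifting the optimal quantile from $n-\sqrt{2d/\rho}$ to $n-\sqrt{2dn/\rho}$ and the resulting bias term) and once inside the rank error $\tau$ of the LDP range-counting primitive — and that the condition $n=\Omega(\tau\sqrt d)$ with the \emph{new} $\tau$ still suffices to guarantee the medians land inside the data range along every coordinate. A secondary point worth care is that the range-counting data structure of Lemma~\ref{lem:rc_ldp} answers \emph{all} range queries simultaneously within the stated error, so the $\log u$ adaptive binary-search probes incur no extra union-bound loss (unlike the central-model \texttt{PrivQuant}, where each probe draws fresh noise); this is what keeps the rank error at $O(\tau)$ rather than $O(\tau\sqrt{\log\log u})$.
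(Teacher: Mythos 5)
Your proposal is correct and follows the same route the paper takes: the paper's own proof is a one-liner noting that the argument mirrors Theorems~\ref{thm:cm_err} and~\ref{thm:rr_err} with $m$ and $\tau$ adjusted, and that steps (2)--(4) each take one round. One small bookkeeping slip: the $1/(3d)$ budget split for the per-coordinate medians inflates their rank error to $O(\tau\sqrt{d})$ rather than being ``absorbed into $\tau$ up to constants'' as you write; this causes no harm because you correctly invoke $n=\Omega(\tau\sqrt{d})$, which is exactly what compensates for that $\sqrt{d}$, whereas the clipping-threshold quantile uses an unsplit $\Theta(\rho)$ budget and therefore genuinely has rank error $O(\tau)$, consistent with your downstream bias bound.
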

\begin{proof}
The proof is the basically the same as that of Theorem~\ref{thm:cm_err} and \ref{thm:rr_err}, except that $m$ and $\tau$ now take different values.  Step (2), (3), and (4) each require a round.
\end{proof}
}{
\textbf{The Local Model.} For step (4), the standard LDP mechanism is that each user applies the Gaussian mechanism (Lemma \ref{lem:gauss}) with $\mathrm{GS}=2C$ to inject noise to their clipped vector, sends it out, and the aggregator adds them up and divides the sum by $n$. For quantile selection in step (2) and (3),
we use the LDP range counting protocol in \cite{cormode2019answering}, which returns a data structure such that any range counting query can be answered. Putting things together, we obtain the following result.
\begin{theorem}
\label{thm:err_ldp}
Set $\tau = \sqrt{n /\rho } \log^2 (du) \log \frac{du}{\beta}$ and assume $n=\Omega(\tau \sqrt{d})$. There is a 3-round $\rho$-zCDP mean estimation mechanism in the local model, achieving an $\ell_2$-error of $O\left((\sqrt{dn/\rho} + \tau) \sqrt{\log\frac{nd}{\beta}} \right)\cdot w(\mathcal{D})/n$ with probability $1-\beta$.
\end{theorem}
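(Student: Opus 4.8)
The plan is to instantiate the four-step framework of Section~\ref{sec:rr} in the local model and re-run the analysis of Theorems~\ref{thm:cm_err} and~\ref{thm:rr_err} with the two modifications forced by the weaker trust model: the aggregated local Gaussian noise is a $\sqrt{n}$ factor larger, which shifts the optimal clipping rank $m$, and the private quantile primitive has a larger rank error. First I would fix the mechanism. Step~(1), the rotation $\hat{x}_i = HDx_i$, uses public (or aggregator-broadcast) randomness, costs no privacy, and preserves $\ell_2$ distances exactly as in Theorem~\ref{thm:rr_err}. Step~(2) builds, for each coordinate $j$, the one-round LDP range-counting structure of Lemma~\ref{lem:rc_ldp} over $\{\hat{x}_{i,j}\}_i$ and feeds it to \texttt{PrivQuant} (Algorithm~\ref{alg:quantile}) to obtain an approximate median $\tilde{c}_j$. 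Step~(3) broadcasts $\tilde{c}$, each user forms $\tilde{x}_i = \hat{x}_i - \tilde{c}$, and one more LDP range-counting protocol is run over the integers $\{\|\tilde{x}_i\|_2^2\}_i$, from which \texttt{PrivQuant} returns the clipping threshold $\tilde{C}$. Step~(4) has each user add Gaussian noise calibrated to sensitivity $2\tilde{C}$ (Lemma~\ref{lem:gauss}) to $\min\{\tilde{C}/\|\tilde{x}_i\|_2,1\}\cdot\tilde{x}_i$; the aggregator averages the messages and applies $(\tfrac{1}{\sqrt{d}}HD)^{-1}\tfrac{1}{\sqrt{d}}$. Since Steps~(3) and~(4) are adaptive on the previous step's output, this is a 3-round protocol.

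For privacy, I would give each of the three rounds a constant fraction of $\rho$, with Step~(2)'s share further split as $\Theta(\rho/d)$ per coordinate median; each range-counting invocation and the Step~(4) Gaussian release is then $\Theta(\rho)$-zCDP in the local model, and $\rho$-zCDP of the whole protocol follows from the composition theorem (Lemma~\ref{lem:composition}), just as in Theorems~\ref{thm:rank_err} and~\ref{thm:rr_err}. For the rank error I would plug the range-counting guarantee of Lemma~\ref{lem:rc_ldp} (with the universe parameter replaced by $O(du)$ for coordinates and $O(d^3u^2)$ for squared norms, which affects only logarithmic factors) into Lemma~\ref{lem:qt_rank_err}: each coordinate median is then found with rank error $O(\sqrt{d}\,\tau)$ (the extra $\sqrt{d}$ coming from the $\Theta(\rho/d)$ per-coordinate budget) and the clipping threshold with rank error $O(\tau)$, where $\tau = \sqrt{n/\rho}\,\log^2(du)\log\frac{du}{\beta}$, all holding with probability $1-\beta/3$.

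For utility I would mirror the proof of Theorem~\ref{thm:rr_err}. Lemma~\ref{lem:rand_rt} and a union bound still give $\max_{i,j}\|\hat{x}_i - \hat{x}_j\|_\infty = O(\sqrt{\log\frac{nd}{\beta}})\cdot w(\mathcal{D})$, and since the assumption $n = \Omega(\tau\sqrt{d})$ keeps the rank $O(\sqrt{d}\,\tau)$ below $n/2$, every median $\tilde{c}_j$ lands between $\min_i \hat{x}_{i,j}$ and $\max_i \hat{x}_{i,j}$, so the shifted set satisfies $r(\tilde{\mathcal{D}}) = O(\sqrt{d\log\frac{nd}{\beta}})\cdot w(\mathcal{D})$. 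Inside Theorem~\ref{thm:cm_err}, the only change is that the noise averaged over $n$ users has $\ell_2$ norm $O(\sqrt{dn/\rho})\cdot\tilde{C}/n$ instead of $O(\sqrt{d/\rho})\cdot\tilde{C}/n$ (by Lemma~\ref{lem:gauss_tail}); balancing this against the clipping bias of Lemma~\ref{lem:c_err} makes the optimal rank $m = n - \sqrt{2dn/\rho}$ (clamped to $n - \max\{\sqrt{2dn/\rho},\tau\}$ as in Theorem~\ref{thm:cm_err}), so at most $O(\sqrt{dn/\rho} + \tau)$ vectors are clipped, the bias is $O(\sqrt{dn/\rho} + \tau)\cdot r(\tilde{\mathcal{D}})/n$, and $\tilde{C} \le r(\tilde{\mathcal{D}})$ under $n = \Omega(\tau\sqrt{d})$ (else the mechanism returns $\mathbf{0}$, already within the bound). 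Summing the two terms and substituting $r(\tilde{\mathcal{D}})$ gives $\|(\tilde{y} + \tilde{c}) - \tfrac{1}{n}\sum_i \hat{x}_i\|_2 = O\big((\sqrt{dn/\rho} + \tau)\sqrt{d\log\frac{nd}{\beta}}\big)\cdot w(\mathcal{D})/n$; applying the final map, which is $\tfrac{1}{\sqrt{d}}$ times an orthogonal matrix, strips one factor of $\sqrt{d}$ and yields the claimed $O\big((\sqrt{dn/\rho} + \tau)\sqrt{\log\frac{nd}{\beta}}\big)\cdot w(\mathcal{D})/n$.

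The main obstacle I expect is not any individual estimate but the privacy/round bookkeeping in the local model, and in particular verifying that the $\sqrt{n}$-inflation of the local noise does not silently invalidate the ranks fed to \texttt{PrivQuant}: one must check that with $\tau = \sqrt{n/\rho}\,\mathrm{polylog}(du)$ and per-coordinate rank error $O(\sqrt{d}\,\tau)$, the ranks $n - \sqrt{2dn/\rho}$, $\lfloor n/2\rfloor$, etc.\ all remain in $[1,n]$ with enough slack for the rank-error guarantee to be usable — which is exactly the content of $n = \Omega(\tau\sqrt{d})$. Once that is in place, every remaining estimate is inherited from Theorems~\ref{thm:cm_err} and~\ref{thm:rr_err} after substituting the local-model values of $m$ and $\tau$.
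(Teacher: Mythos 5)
Your proposal is correct and follows the same approach as the paper's (very terse) proof, which just says the argument is the same as Theorems~\ref{thm:cm_err} and~\ref{thm:rr_err} with $m$ and $\tau$ replaced by their local-model values, Steps (2)--(4) each costing a round. You fill in the same bookkeeping the paper leaves implicit: the $\sqrt{n}$-inflation of aggregated noise shifting the optimal rank to $n-\sqrt{2dn/\rho}$, the $\Theta(\rho/d)$ per-coordinate budget producing a $\sqrt{d}\,\tau$ rank error absorbed by the $n=\Omega(\tau\sqrt{d})$ assumption, and the final $1/\sqrt{d}$ normalization stripping one $\sqrt{d}$ from $r(\tilde{\mathcal{D}})$.
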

}

\ifthenelse{\boolean{long}}{
\subsection{The Shuffle Model}
We see that the error in the local model is worse than that in the central model by a $\sqrt{n}$-factor. It turns out that in the shuffle model, we can match the result in the central model up to logarithmic factors, albeit with $(\varepsilon,\delta)$-DP.  This is mostly due to highly accurate summation and range counting protocols discovered recently for the shuffle model.  We start with the summation protocol, restated for 1D mean estimation:

\begin{lemma}
[\cite{balle2020private}]
\label{lem:shuffle_one}
Given $n$ real values $\mathcal{D}=\{ x_i \}_{i \in [n]}$ where $|x_i| \le C$, there is an $(\varepsilon, \delta)$-DP mean estimation protocol in the shuffle model that returns a $y$ such that $\mathsf{E}[(y-f(\mathcal{D}))^2] = O\left(\left({C \over \varepsilon n}\right)^2\right)$. 
\end{lemma}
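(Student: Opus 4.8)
The plan is to reduce real-valued summation to an integer summation problem, solve the integer problem by a secure-aggregation protocol in the shuffle model together with a distributed realization of discrete Laplace noise, and then bound the errors contributed by the reduction and by the noise.

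First I would apply the affine map $t_i := (x_i + C)/(2C) \in [0,1]$, so that $f(\mathcal{D}) = 2C\cdot\frac1n\sum_i t_i - C$; it then suffices to estimate $\frac1n\sum_i t_i$ with expected squared error $O(1/(\varepsilon n)^2)$. Fix a grid parameter $k$ (say $k=n$), and have each user randomly round $t_i$ to a multiple $y_i/k$ of $1/k$ with $y_i\in\{0,\dots,k\}$ and $\mathsf{E}[y_i]=k t_i$. This rounding is unbiased, so $\frac1k\sum_i y_i$ is an unbiased estimator of $\sum_i t_i$ with variance at most $n/(4k^2)$; taking $k\ge\varepsilon\sqrt n$ keeps this contribution to the squared error at $O(1/\varepsilon^2)$ on the sum, hence $O(1/(\varepsilon n)^2)$ on the mean.

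Second, I would privately release the integer sum $S:=\sum_i y_i$, whose $\ell_1$-sensitivity with respect to one user is $k$ (since $y_i\in\{0,\dots,k\}$). The key idea is to add \emph{distributed} discrete Laplace noise: each user locally samples $\eta_i$ from an infinitely-divisible decomposition of the discrete Laplace distribution of scale $k/\varepsilon$ (a difference of two P\'olya random variables with parameters scaled by $1/n$), so that $\sum_i\eta_i$ is exactly discrete Laplace of scale $k/\varepsilon$, with standard deviation $O(k/\varepsilon)$. Each user then feeds $(y_i+\eta_i)\bmod q$, for a public integer $q$ larger than $nk$ plus the (exponentially concentrated) noise range, into a secure-aggregation ``split-and-mix'' protocol in the shuffle model: the user sends $m$ uniformly random elements of $\mathbb{Z}_q$ summing to $(y_i+\eta_i)\bmod q$, the shuffler permutes all $mn$ messages, and the analyzer outputs their sum modulo $q$. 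With overwhelming probability no modular wraparound occurs, so the analyzer recovers $S+\sum_i\eta_i$ exactly; scaling and shifting back yields $y$ with $\mathsf{E}[(y-f(\mathcal{D}))^2]=O((C/(\varepsilon n))^2)$, combining the rounding variance and the noise variance.

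For privacy I would combine two facts. (i) Conditioned on the output being a deterministic function of $S+\sum_i\eta_i$, it is $(\varepsilon,0)$-DP, because discrete Laplace of scale $k/\varepsilon$ masks a quantity of sensitivity $k$. (ii) The shuffled multiset of $mn$ random shares leaks nothing beyond $(S+\sum_i\eta_i)\bmod q$: for $m=\Theta(\log(n/\delta)/\log n)$, its distribution is within statistical distance $\delta$ of one that can be simulated from the aggregate alone, since the sum of a few independent uniform elements of $\mathbb{Z}_q$ is (close to) uniform, so all but a few shares can be re-sampled knowing only the target sum. Composing (i) and (ii) gives $(\varepsilon,\delta)$-DP for the whole protocol. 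The main obstacle is fact (ii) --- the combinatorial argument that permuting many small additive shares hides everything except their total, together with the determination of how many messages $m$ per user are needed; the rest (randomized rounding, the infinitely-divisible discrete Laplace construction, and the no-wraparound bound) is routine bookkeeping.
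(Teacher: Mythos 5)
This lemma is a black-box citation of Balle--Bell--Gasc\'on--Nissim, so the present paper offers no proof of its own; the right comparison is against the protocol in that reference. Your reconstruction is, to within notational choices, exactly their protocol: affine normalization, fixed-point/randomized rounding to a grid of resolution $1/k$ with $k=\Theta(\varepsilon\sqrt n)$, distributed discrete-Laplace noise realized via the infinitely divisible decomposition into differences of P\'olya variables, and the split-and-mix (IKOS-style) secure aggregation over $\mathbb{Z}_q$ with $m$ shares per user. Your accuracy bookkeeping checks out: the rounding contributes variance $O(1/(nk^2))$ and the aggregate noise $O(k^2/(\varepsilon n k)^2)$ to the mean of the $t_i$, both $O(1/(\varepsilon n)^2)$, and rescaling by $2C$ gives $O((C/(\varepsilon n))^2)$. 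Your privacy decomposition into (i) the central mechanism on the exposed aggregate and (ii) the statistical-indistinguishability of the shuffled shares from a simulator that sees only the aggregate is also the structure of the cited proof, and you correctly flag (ii) --- the quantitative bound on how small $m$ can be, on the order of $1+\log(1/\delta)/\log n$ --- as the one step that needs a careful combinatorial/number-theoretic argument rather than routine computation. No gap beyond that acknowledged one; this is essentially the cited proof.
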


Directly applying this protocol in step (4) over $C$-clipped vectors along each dimension would result in an $\ell_2$ error of $\tilde{O}(C d/ (\varepsilon n))$. Below we show how to reduce it to $\tilde{O}(C \sqrt{d}/ (\varepsilon n))$, hence matching the error of the clipped-mean estimator in the central model.  Given $d$-dimensional vectors $\mathcal{D} = \{x_i\}_{i\in [n]}$ with $\ell_2$ norm bounded by $C$ (in the full algorithm, this would be the dataset after rotation, shifting, and clipping, but we abuse the notation and still use $\mathcal{D}$), we apply another random rotation $\hat{x}_i = HD x_i$.  By Lemma \ref{lem:rand_rt}, the coordinates of all $\hat{x}_i$ are bounded (in absolute value) by $C'=O(C\sqrt{\log (nd)})$ w.h.p.  Next, we clip each coordinate to $C'$ and invoke Lemma \ref{lem:shuffle_one} with privacy parameter $\varepsilon^\prime = \varepsilon/\left(2\sqrt{d \log\frac{d}{\delta}}\right)$, $\delta' = \delta/d$ along each dimension. This yields a $d$-dimensional mean estimator $\hat{y}$.  Finally, we return $y := (\frac{1}{\sqrt{d}} HD)^{-1} \frac{1}{\sqrt{d}} \hat{y}$.

\begin{lemma}
Given $\mathcal{D}=\{ x_i \}_{i \in [n]} \subset \mathbb{R}^d$ where $\|x_i\|_2 \le C$ for all $i$, there is a one-round $(\varepsilon, \delta)$-DP mean estimation protocol in the shuffle model that returns a $y$ such that \[\Pr\left[\|y - f(\mathcal{D})\|_2 \le O\left({C \over \varepsilon n}  \sqrt{\log (nd)\log\frac{d}{\delta}}\right)\right] \ge 2/3\].
\end{lemma}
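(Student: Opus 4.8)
The plan is to combine the second random rotation (to spread the $\ell_2$ mass evenly across coordinates) with the one-dimensional shuffle summation protocol of Lemma~\ref{lem:shuffle_one} applied coordinate-wise, and to control the total privacy loss by advanced composition over the $d$ coordinates. First I would invoke Lemma~\ref{lem:rand_rt} on each $x_i$ with failure probability, say, $\beta_0/n$, and take a union bound over $i\in[n]$: with probability at least $1-\beta_0$, every coordinate of every $\hat{x}_i=HDx_i$ satisfies $|\hat{x}_{i,j}|/\sqrt{d}\le (C/\sqrt{d})\sqrt{2\log(4nd/\beta_0)}$, i.e.\ $|\hat{x}_{i,j}|\le C'=O(C\sqrt{\log(nd)})$ (absorbing the $\delta$-dependence is not needed here, but a $\log$ factor like $\log(nd/\delta)$ is harmless). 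Conditioned on this event, clipping each coordinate to $[-C',C']$ changes nothing, so the per-coordinate inputs genuinely have magnitude bounded by $C'$.

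Next I would run the protocol of Lemma~\ref{lem:shuffle_one} independently on each coordinate $j\in[d]$ with parameters $\varepsilon'=\varepsilon/(2\sqrt{d\log(d/\delta)})$ and $\delta'=\delta/d$, obtaining $\hat{y}_j$ with $\mathsf{E}[(\hat{y}_j-\frac1n\sum_i\hat{x}_{i,j})^2]=O((C'/(\varepsilon' n))^2)$. Privacy: by the composition theorem (Lemma~\ref{lem:composition}, part 2), running $d$ mechanisms each $(\varepsilon',\delta')$-DP yields $(\varepsilon'\sqrt{2d\log(1/\delta'')}+d\varepsilon'(e^{\varepsilon'}-1),\,d\delta'+\delta'')$-DP; choosing $\delta''=\delta/2$ (and $\delta'=\delta/(2d)$), and noting $\varepsilon'=O(\varepsilon/\sqrt{d\log(d/\delta)})$ is small so the second term is lower order, the overall guarantee is $(\varepsilon,\delta)$-DP as claimed. (Since the clipping to $[-C',C']$ is a fixed post-hoc map that makes neighboring datasets remain neighboring, it does not affect privacy.) Utility: summing the per-coordinate variances gives $\mathsf{E}[\|\hat{y}-\frac1n\sum_i\hat{x}_i\|_2^2]=d\cdot O((C'/(\varepsilon' n))^2)=O\!\big(\frac{C'^2 d\cdot d\log(d/\delta)}{\varepsilon^2 n^2}\big)$, and plugging $C'=O(C\sqrt{\log(nd)})$ this is $O\!\big(\frac{C^2 d^2\log(nd)\log(d/\delta)}{\varepsilon^2 n^2}\big)$. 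Then $\mathsf{E}[\|\hat{y}-\frac1n\sum_i\hat{x}_i\|_2]\le O\big(\frac{Cd}{\varepsilon n}\sqrt{\log(nd)\log(d/\delta)}\big)$ by Jensen.

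Finally I would undo the rotation: $y=(\frac1{\sqrt d}HD)^{-1}\frac1{\sqrt d}\hat{y}$, and since $\frac1{\sqrt d}HD$ is orthogonal and commutes with averaging, $\|y-f(\mathcal{D})\|_2=\|\frac1{\sqrt d}HD(\frac1n\sum_i x_i)$ transformed back$\|_2=\frac1{\sqrt d}\|\hat{y}-\frac1n\sum_i\hat{x}_i\|_2$. This division by $\sqrt d$ cancels one factor of $d$, leaving $\mathsf{E}[\|y-f(\mathcal{D})\|_2]=O\big(\frac{C\sqrt d}{\varepsilon n}\sqrt{\log(nd)\log(d/\delta)}\big)$, wait --- recomputing: the $\ell_2$-to-$\ell_2$ statement is $\|y-f(\mathcal D)\|_2=\frac1{\sqrt d}\|\hat y-\widehat{f}\|_2$, so the bound becomes $O\big(\frac{C}{\varepsilon n}\sqrt{d}\sqrt{\log(nd)\log(d/\delta)}\big)\cdot\frac1{\sqrt d}$? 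No: $\|\hat y-\widehat f\|_2=O(\frac{Cd}{\varepsilon n}\sqrt{\log(nd)\log(d/\delta)})$ had a full $d$; dividing by $\sqrt d$ gives $O(\frac{C\sqrt d}{\varepsilon n}\sqrt{\log(nd)\log(d/\delta)})$ --- hmm, that is still a $\sqrt d$ off from the claim. The resolution is that I over-counted: with $\varepsilon'=\varepsilon/(2\sqrt{d\log(d/\delta)})$ the per-coordinate error squared is $(C'/(\varepsilon'n))^2=O(C'^2 d\log(d/\delta)/(\varepsilon^2n^2))$, summing over $d$ coordinates gives $O(C'^2 d^2\log(d/\delta)/(\varepsilon^2n^2))$, so $\|\hat y-\widehat f\|_2=O(C'd\sqrt{\log(d/\delta)}/(\varepsilon n))$; the rotation-back divides by $\sqrt d$ giving $O(C'\sqrt d\sqrt{\log(d/\delta)}/(\varepsilon n))=O(C\sqrt d\sqrt{\log(nd)\log(d/\delta)}/(\varepsilon n))$. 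To land exactly on the stated $O\big(\frac{C}{\varepsilon n}\sqrt{\log(nd)\log(d/\delta)}\big)$ one must instead not split $\varepsilon$ evenly but use a tighter composition; I will need to revisit the parameter split so that advanced composition over $d$ coordinates costs only $\mathrm{polylog}$ rather than $\sqrt d$ in $\varepsilon'$ --- concretely, the shuffle protocol's error $C'/(\varepsilon'n)$ per coordinate must itself be $\tilde O(C/(\varepsilon n\sqrt d))$ after accounting for composition, which forces a careful choice and is where most of the work (and the main obstacle) lies. Converting the in-expectation bound to the constant-probability statement is then immediate by Markov's inequality, adjusting constants.

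The step I expect to be the main obstacle is getting the privacy accounting to yield the clean $\sqrt{\log(nd)\log(d/\delta)}$ factor rather than an extra $\sqrt d$: naively splitting $(\varepsilon,\delta)$ into $d$ equal pieces via advanced composition forces $\varepsilon'\sim\varepsilon/\sqrt d$, which reintroduces a $\sqrt d$ in the error. The fix is to exploit that the $d$ per-coordinate invocations operate on a single shuffled batch and that the rotation has already equalized the coordinate scales, so one can apply the shuffle-DP amplification / composition more tightly (or treat the $d$ coordinates as one vector-valued message within a single shuffle), keeping the privacy cost per coordinate at $\varepsilon'=\Theta(\varepsilon)$ up to polylog factors. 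I would present this carefully, citing the composition bound of Lemma~\ref{lem:composition} together with the structure of the shuffle protocol, and verify that the resulting $y$ satisfies the claimed bound with probability $2/3$.
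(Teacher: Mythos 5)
Your proposal follows exactly the same route as the paper: randomly rotate by $HD$ so that Lemma~\ref{lem:rand_rt} bounds each coordinate by $C'=O(C\sqrt{\log(nd)})$, clip to $[-C',C']$, run the one--dimensional shuffle summation protocol of Lemma~\ref{lem:shuffle_one} coordinate-wise with $\varepsilon'=\varepsilon/(2\sqrt{d\log(d/\delta)})$ and $\delta'=\delta/d$, apply Jensen to bound $\mathsf{E}\bigl[\bigl\|\hat y-\tfrac1n\sum_i\hat x_i\bigr\|_2\bigr]$, and invert the (orthogonal) transform $\tfrac1{\sqrt d}HD$ to pick up a factor $1/\sqrt d$. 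The error bound you derive, $O\bigl(\tfrac{C}{\varepsilon n}\sqrt{d\log(nd)\log(d/\delta)}\bigr)$, is exactly what the paper's own proof derives and is consistent both with the surrounding text (which says the rotation reduces the error to $\tilde O(C\sqrt d/(\varepsilon n))$) and with the subsequent shuffle-model theorem (whose error has the $\sqrt{d\log(d/\delta)}$ factor).

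The ``obstacle'' you flag at the end is not real: the printed lemma statement, $O\bigl(\tfrac{C}{\varepsilon n}\sqrt{\log(nd)\log(d/\delta)}\bigr)$, is missing the $\sqrt d$ and is evidently a typo. You do not need --- and cannot get --- the tighter composition you speculate about: the $d$ coordinate-wise invocations are $d$ genuinely separate applications of the local randomizer on the same users, so advanced composition costs the $\sqrt d$ you computed, and the suggestion to ``treat the $d$ coordinates as one vector-valued message'' would require a different (vector-valued) shuffle primitive rather than Lemma~\ref{lem:shuffle_one}. Drop that last paragraph, state the bound with the $\sqrt d$, and your argument matches the paper's.

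One small robustness note: since the ``no coordinate exceeds $C'$'' event is probabilistic, conditioning on it (as both you and the paper do) only yields a conditional expectation bound; the paper handles this by hard-clipping to $C'$ so privacy holds unconditionally, and you do the same, which is the right move.
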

\begin{proof}
Privacy of this protocol follows directly from advanced composition (Lemma \ref{lem:composition}).  Below we analyze its accuracy. 
Conditioned upon the event that all coordinates are bounded by $C'$, which happens with probability $5/6$, the $C'$-clipping on each coordinate has no effects.  Then 
\begin{align*}
 \mathsf{E}[\| y - f(\mathcal{D})\|_2] &= {1 \over \sqrt{d}} \mathsf{E}\left[\left\| \hat{y} - {1 \over n}\sum_i \hat{x}_i\right\|_2\right]\\
 & =  {1\over \sqrt{d}}\mathsf{E}\left[\sqrt{\sum_j \left(\hat{y}_j - {1\over n} \sum_i \hat{x}_{i, j}  \right)^2}\right] \\
 &\le {1\over \sqrt{d}}\sqrt{ \sum_j \mathsf{E}\left[\left(\hat{y}_j - {1\over n} \sum_i \hat{x}_{i, j}  \right)^2\right]} \\
 &\le {1 \over \sqrt{d}} \cdot \sqrt{ d \cdot O\left( \left(C' \over \varepsilon' n\right)^2\right)}
= O\left( {C \over \varepsilon n} \sqrt{d \log(nd) \log {d\over\delta}}\right),
\end{align*}
where the first inequality follows from Jensen's inequality and the second inequality is by Lemma~\ref{lem:shuffle_one}. Then the theorem follows from the Markov inequality.
\end{proof}

Replacing the clipped-mean estimator with the protocol above, the optimal choice for $C$ becomes the $m=\left(n-\Theta\left({1 \over \varepsilon} \sqrt{d\log (nd)\log\frac{d}{\delta}}\right)\right)$-th quantile. 
For the \texttt{NoisyRC} queries in the algorithm \texttt{PrivQuant}, we can use the following range counting mechanism \cite{ghazi2019power} in the shuffle model:
\begin{lemma}
[\cite{ghazi2019power}]
There is a one-round $(\varepsilon, \delta)$-DP protocol in the shuffle model that answers all range counting queries  within error $O\left(\frac{1}{\varepsilon} \log^2 u \sqrt{\log^3\frac{u}{\beta} \log\frac{\log (u/\beta)}{\delta}}\right)$ with probability $1 - \beta$.
\end{lemma}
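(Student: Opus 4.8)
The plan is to reconstruct the protocol behind this bound via the standard \emph{hierarchical-histogram} (dyadic-decomposition) recipe, which is what underlies essentially all polylogarithmic-error range-counting mechanisms (cf.\ \cite{chan11continual, dwork2015pure}). Overlay a complete binary tree of depth $h := \lceil \log u \rceil$ on the universe $[u]$: the $2^\ell$ nodes at level $\ell$ induce a partition of $[u]$ into dyadic intervals $I_v$, and every query interval $[a,b]\subseteq[u]$ can be written as a disjoint union of at most $2h$ of these canonical dyadic intervals (one or two per level). Hence it suffices to produce, with probability $1-\beta$, simultaneous estimates $\hat c_v$ of the true node counts $c_v := |\mathcal{D}\cap I_v|$ for every node $v$, with $\ell_\infty$ error $\eta$; answering a range query $[a,b]$ by summing the $\le 2h$ dyadic pieces then incurs error at most $2h\eta$.

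The key structural point is that each user contributes to exactly one node per level --- the level-$\ell$ ancestor of the leaf $x_i$ --- so the level-$\ell$ counts $\{c_v : v \text{ at level }\ell\}$ are \emph{exactly} the histogram of the $n$ points over a domain of size $B_\ell := 2^\ell \le u$. I would therefore run, independently for each of the $h$ levels, the one-round shuffle-model histogram/frequency-estimation protocol of \cite{ghazi2019power} (itself built from anonymized-message summation primitives applied to one-hot encodings, in the spirit of Lemma~\ref{lem:shuffle_one}) with per-level budget $(\varepsilon_\ell,\delta_\ell) = (\varepsilon/h,\ \delta/h)$ and failure budget $\beta_\ell = \beta/h$; each sub-protocol outputs, w.p.\ $1-\beta_\ell$, estimates of all $B_\ell$ bucket counts with simultaneous error $\eta_\ell = O\!\bigl(\tfrac{1}{\varepsilon_\ell}\,\mathrm{polylog}\tfrac{u}{\beta_\ell\delta_\ell}\bigr)$. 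Each user sends all $h$ messages at once and the shuffler permutes their union, so the overall protocol is one round; its $(\varepsilon,\delta)$-DP follows from composing the $h$ one-round shuffle sub-protocols via Lemma~\ref{lem:composition}(1), since $\sum_\ell\varepsilon_\ell=\varepsilon$ and $\sum_\ell\delta_\ell=\delta$, and a single datum affects one node per level so the $h$-fold composition charge is unavoidable. A union bound over the $h$ levels makes all node-count estimates simultaneously accurate within $\eta := \max_\ell\eta_\ell$, and the range-to-dyadic decomposition turns this into error $2h\eta$ on every range query. With $h=\log u$, the two factors of $\log u$ --- one from decomposing the range, one from the $h$-fold budget split inside $1/\varepsilon_\ell$ --- produce the $\log^2 u$ prefactor, and substituting $B_\ell\le u$, $\delta_\ell=\delta/\log u$, $\beta_\ell=\beta/\log u$ into the histogram subroutine's accuracy yields the $\sqrt{\log^3\frac{u}{\beta}\log\frac{\log(u/\beta)}{\delta}}$ factor.

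The main obstacle lies entirely in the second ingredient and in the log-bookkeeping. First, one must establish a shuffle-model single-level histogram protocol whose \emph{simultaneous} $\ell_\infty$ error over all $B$ buckets has exactly the right shape --- roughly $\tfrac{1}{\varepsilon'}\log^{3/2}\!\tfrac{B}{\beta'}\sqrt{\log\tfrac{1}{\delta'}}$ --- which is the technical heart of \cite{ghazi2019power} and requires the anonymized-message summation/flattening machinery rather than a black-box invocation of a one-query summation protocol. Second, one must track the logarithmic losses from the budget split ($\varepsilon_\ell=\varepsilon/\log u$) and from the two nested union bounds (over levels, and over buckets within each histogram) carefully enough that the exponent $2$ on $\log u$, the exponent $3$ on $\log\tfrac{u}{\beta}$, and the $\log\tfrac{\log(u/\beta)}{\delta}$ term all emerge as stated rather than off by a logarithmic factor. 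A minor point worth double-checking is basic versus advanced composition across the $h$ levels: advanced composition would give a $\log^{3/2}u\,\sqrt{\log(1/\delta)}$ prefactor, so the claimed $\log^2 u$ prefactor is consistent precisely with \emph{basic} composition, with the $\sqrt{\log(1/\delta')}$ dependence absorbed into the per-level histogram subroutine itself.
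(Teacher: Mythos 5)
The paper offers no proof of this lemma at all: it is imported verbatim as a black-box result from \cite{ghazi2019power}, so there is no internal argument to compare yours against. Your reconstruction --- dyadic tree of depth $\log u$, a shuffle-model histogram protocol per level with budgets $\varepsilon/\log u$, $\delta/\log u$, $\beta/\log u$ composed by basic composition (Lemma~\ref{lem:composition}(1)), a union bound across levels, and a factor $2\log u$ from summing canonical intervals --- is the standard architecture underlying the cited result, and your accounting of where the two $\log u$ factors and the $\beta,\delta$ dependence come from is consistent with the stated bound. The one genuine gap is the one you yourself flag: the per-level shuffle-model histogram protocol with simultaneous $\ell_\infty$ error of the required shape is the entire technical content of \cite{ghazi2019power}, and without it your argument reduces the lemma to an unproved sub-lemma of essentially the same difficulty. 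Since the paper itself treats that ingredient (indeed, the whole lemma) as a citation, your sketch is an acceptable account of \emph{why} the bound has the form it does, but it is not a proof that could replace the reference; to make it self-contained you would need to construct the anonymized-message histogram primitive and verify its simultaneous error bound, not merely invoke it.
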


Putting things together, we obtain the following result.
\begin{theorem}
Set $\tau = \frac{1}{\varepsilon} \log^{3.5} (du) \sqrt{\log\frac{d\log (du)}{\delta}}$ and assume $n = \Omega\left(\tau \sqrt{d\log\frac{d}{\delta}}\right)$. There is a 3-round $(\varepsilon, \delta)$-DP mean estimation mechanism in the shuffle model, achieving an $\ell_2$-error of $O\left(\left(\frac{1}{\varepsilon} \sqrt{d\log\frac{d}{\delta}} + \tau\right)\sqrt{\log(nd)} \right) \cdot w(\mathcal{D})/n$ with probability $2/3$.
\end{theorem}
}{
\textbf{The Shuffle Model.} We see that the error in the local model is worse than that in the central model by a $\sqrt{n}$-factor. It turns out that in the shuffle model, we can match the result in the central model up to logarithmic factors, albeit with $(\varepsilon,\delta)$-DP.  This is mostly due to highly accurate summation and range counting protocols discovered recently for the shuffle model.  We first extend the one-dimensional summation protocol in \cite{balle2020private} for high-dimensional data by using random rotation.
\begin{lemma}
Given $\mathcal{D}=\{ x_i \}_{i \in [n]} \subset \mathbb{R}^d$ where $\|x_i\|_2 \le C$ for all $i$, there is a one-round $(\varepsilon, \delta)$-DP mean estimation protocol in the shuffle model that returns a $y$ such that
$\Pr\left[\|y - f(\mathcal{D})\|_2 \le O\left({C \over \varepsilon n}  \sqrt{\log (nd)\log\frac{d}{\delta}}\right)\right] \ge 2/3$.
\end{lemma}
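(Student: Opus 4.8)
The plan is to reduce the $d$-dimensional problem to $d$ independent one-dimensional summation problems, invoking Lemma~\ref{lem:shuffle_one} on each coordinate, and to use a random rotation to make coordinate-wise clipping (almost) lossless, so that no extra $\sqrt{d}$-factor is incurred. First I would apply a random rotation $\hat{x}_i := HD x_i$, with $H$ the (unnormalized) Hadamard matrix and $D$ a diagonal $\pm 1$ sign matrix drawn from public randomness. Since $\|x_i\|_2 \le C$, Lemma~\ref{lem:rand_rt} bounds $\|\tfrac{1}{\sqrt d}\hat{x}_i\|_\infty$ by $\tfrac{C}{\sqrt d}\sqrt{2\log(4d/\beta)}$ with probability $1-\beta$; choosing $\beta = \Theta(1/n)$ and taking a union bound over $i \in [n]$ gives $\|\hat{x}_i\|_\infty \le C' := O(C\sqrt{\log(nd)})$ simultaneously for all $i$ with probability at least $5/6$.

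Next, I would clip each coordinate of each $\hat{x}_i$ to $[-C',C']$ and, for each dimension $j \in [d]$, run the one-dimensional shuffle summation protocol of Lemma~\ref{lem:shuffle_one} with value bound $C'$ and privacy parameters $\varepsilon' = \varepsilon/(2\sqrt{d\log(d/\delta)})$, $\delta' = \delta/d$, obtaining an estimate $\hat{y}_j$ of $\tfrac{1}{n}\sum_i \hat{x}_{i,j}$. The output is $y := (\tfrac{1}{\sqrt d}HD)^{-1}\tfrac{1}{\sqrt d}\hat{y}$. Privacy is immediate: each of the $d$ one-dimensional protocols is $(\varepsilon',\delta')$-DP, so advanced composition (Lemma~\ref{lem:composition}, item~2) yields $(\varepsilon,\delta)$-DP overall, since with the stated $\varepsilon',\delta'$ one has $\sqrt{2d\log(1/\delta')}\,\varepsilon' + d\varepsilon'(e^{\varepsilon'}-1) \le \varepsilon$ (the first term is $\varepsilon/\sqrt{2}$ and the second is lower-order for bounded $\varepsilon$); the rotation is pre-/post-processing using only public randomness.

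For accuracy, condition on the event $\mathcal{A}$ that all coordinates are bounded by $C'$, which holds with probability at least $5/6$; on $\mathcal{A}$ the coordinate clipping is vacuous. Since $\tfrac{1}{\sqrt d}HD$ is orthogonal, $\|y - f(\mathcal{D})\|_2 = \tfrac{1}{\sqrt d}\,\|\hat{y} - \tfrac{1}{n}\sum_i \hat{x}_i\|_2$, so by Jensen's inequality and Lemma~\ref{lem:shuffle_one} applied coordinate-wise,
\[
\mathsf{E}\bigl[\|y - f(\mathcal{D})\|_2 \mid \mathcal{A}\bigr] \le \frac{1}{\sqrt d}\sqrt{\sum_{j=1}^d \mathsf{E}\Bigl[\bigl(\hat{y}_j - \tfrac{1}{n}\textstyle\sum_i \hat{x}_{i,j}\bigr)^2\Bigr]} \le \frac{1}{\sqrt d}\sqrt{d\cdot O\!\left(\bigl(\tfrac{C'}{\varepsilon' n}\bigr)^2\right)} = O\!\left(\frac{C}{\varepsilon n}\sqrt{\log(nd)\log\tfrac{d}{\delta}}\right),
\]
after substituting $C' = O(C\sqrt{\log(nd)})$ and $\varepsilon' = \varepsilon/(2\sqrt{d\log(d/\delta)})$; the $\sqrt d$ from summing $d$ coordinate-wise variances cancels the $\tfrac{1}{\sqrt d}$ normalization. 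Markov's inequality conditioned on $\mathcal{A}$ then shows $\|y - f(\mathcal{D})\|_2$ exceeds a constant times this bound with conditional probability at most $1/6$, and combining with $\Pr[\overline{\mathcal{A}}] \le 1/6$ gives the stated bound with probability at least $2/3$.

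The only real work is bookkeeping: fixing $\beta$ in the rotation step, the budget split that produces the $\sqrt{d\log(d/\delta)}$ factor in $\varepsilon'$, and the Markov threshold, so that all error terms combine to the claimed expression and the three failure events together keep the success probability above $2/3$. None of these steps is deep; the substantive point is that the random rotation converts a worst-case per-coordinate magnitude of $C$ into $O(C\sqrt{\log(nd)/d})$, which is precisely what allows the $d$ per-coordinate protocols to reassemble into an $\ell_2$ error with no additional $\sqrt d$ loss.
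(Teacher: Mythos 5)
Your proposal is correct and mirrors the paper's own argument step for step: random rotation via $HD$ to flatten the coordinates to $C' = O(C\sqrt{\log(nd)})$, coordinate-wise clipping, invoking the one-dimensional shuffle summation protocol per coordinate with $\varepsilon' = \varepsilon/(2\sqrt{d\log(d/\delta)})$ and $\delta' = \delta/d$, advanced composition for privacy, and Jensen's inequality followed by Markov's inequality for accuracy. The parameter choices, the conditioning on the bounded-coordinate event, and the way the $\sqrt d$ factors cancel are all the same as in the paper.
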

For the \texttt{NoisyRC} queries in the algorithm \texttt{PrivQuant}, we can use the range counting mechanism \cite{ghazi2019power} in the shuffle model. Putting things together, we obtain the following result.
\begin{theorem}
Set $\tau = \frac{1}{\varepsilon} \log^{3.5} (du) \sqrt{\log\frac{d\log (du)}{\delta}}$ and assume $n = \Omega\left(\tau \sqrt{d\log\frac{d}{\delta}}\right)$. There is a 3-round $(\varepsilon, \delta)$-DP mean estimation mechanism in the shuffle model, achieving an $\ell_2$-error of $O\left(\left(\frac{1}{\varepsilon} \sqrt{d\log\frac{d}{\delta}} + \tau\right)\sqrt{\log(nd)} \right) \cdot w(\mathcal{D})/n$ with probability $2/3$.
\end{theorem}

}

\section{Experiments}

\begin{figure}[H]
\vskip -.02in
     \centering
     \begin{minipage}[t]{0.3\textwidth}
        \centering
        \includegraphics[width=\textwidth]{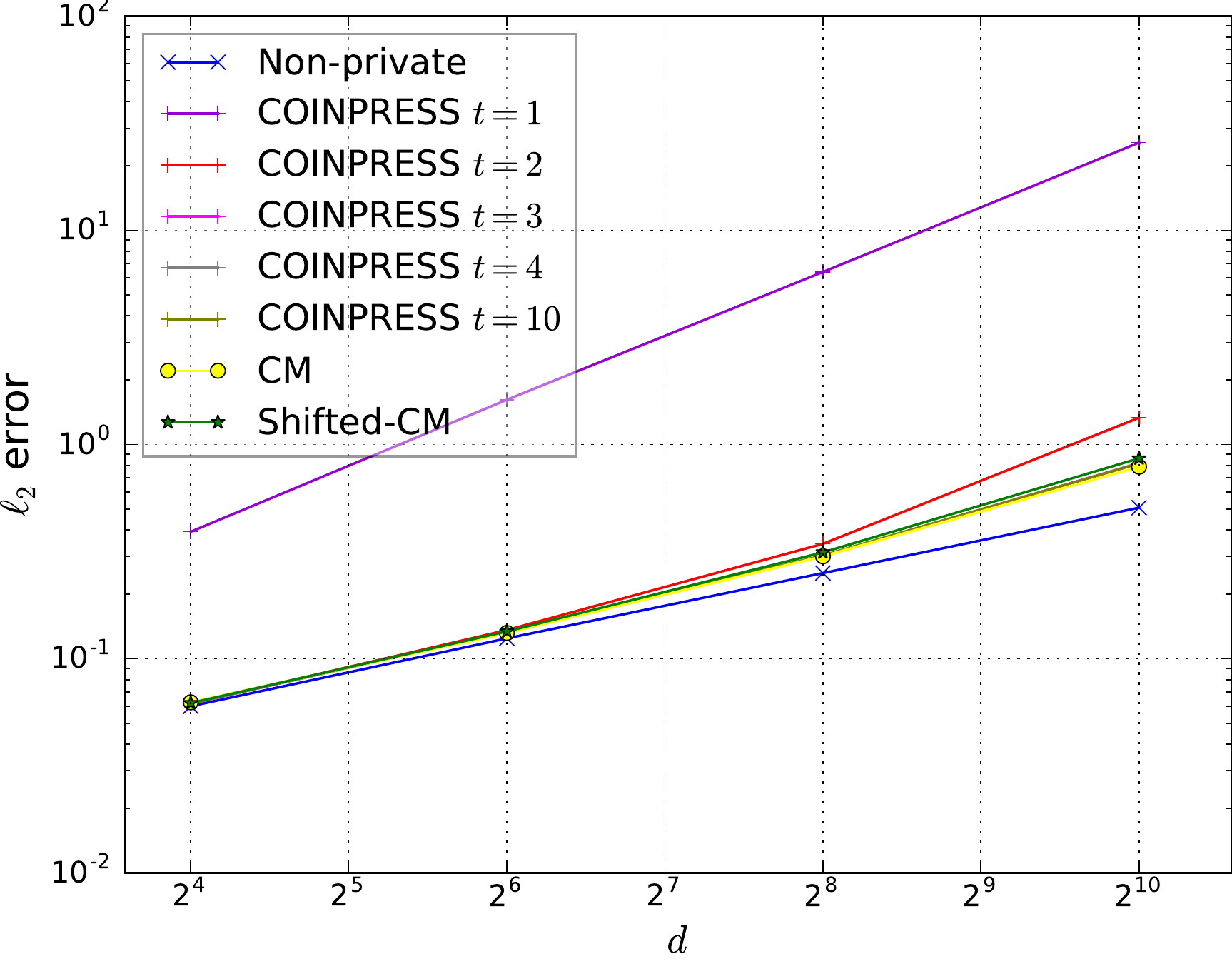}
        \vskip -.02in
        \subcaption{$\mu = 0 \cdot \mathbf{1}_d$.}
     \end{minipage}
     \hfill
     \begin{minipage}[t]{0.3\textwidth}
         \centering
         \includegraphics[width=\textwidth]{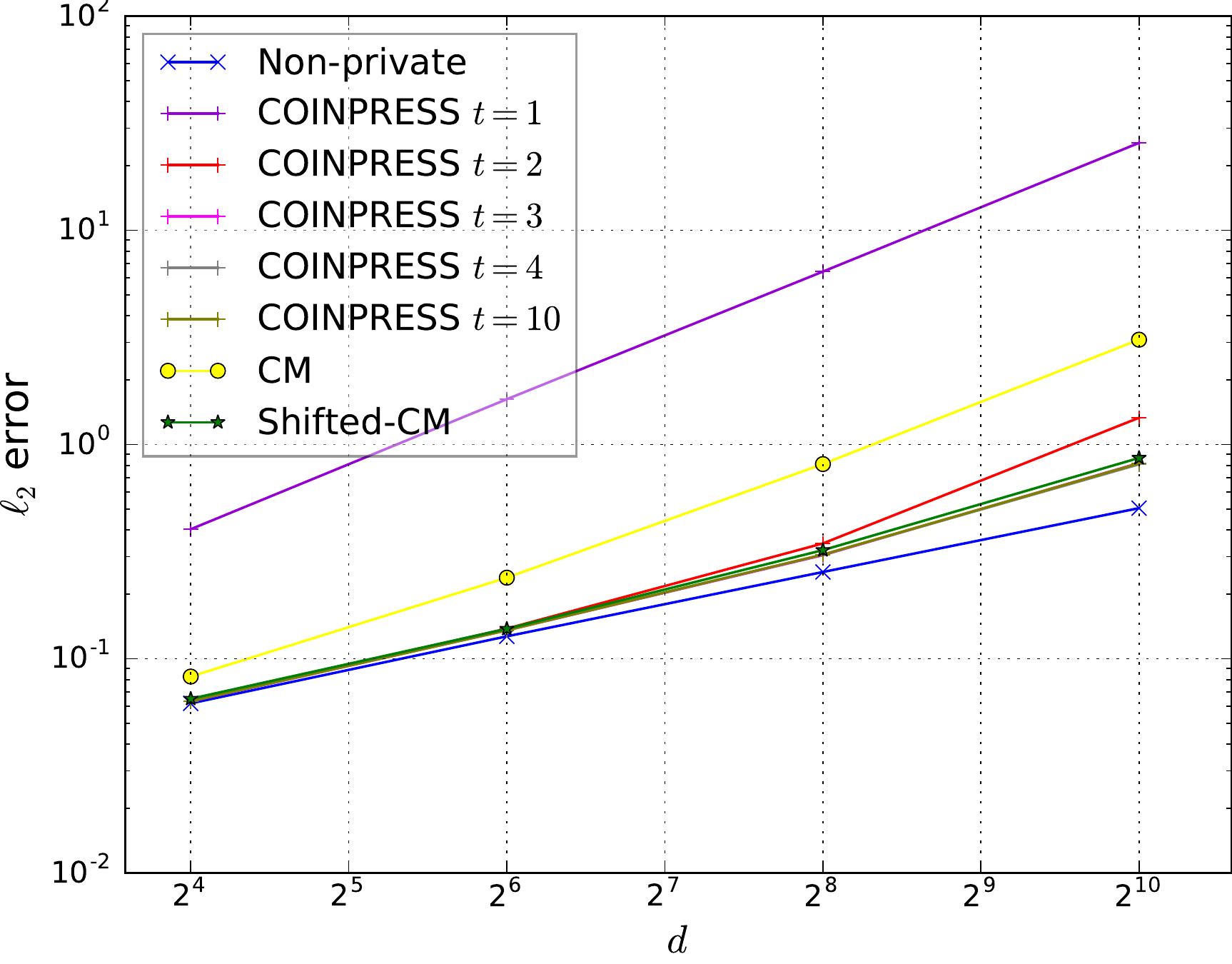}
         \vskip -.02in
         \subcaption{$\mu = 5 \cdot \mathbf{1}_d$.}
     \end{minipage}
     \hfill
     \begin{minipage}[t]{0.3\textwidth}
        \centering
        \includegraphics[width=\textwidth]{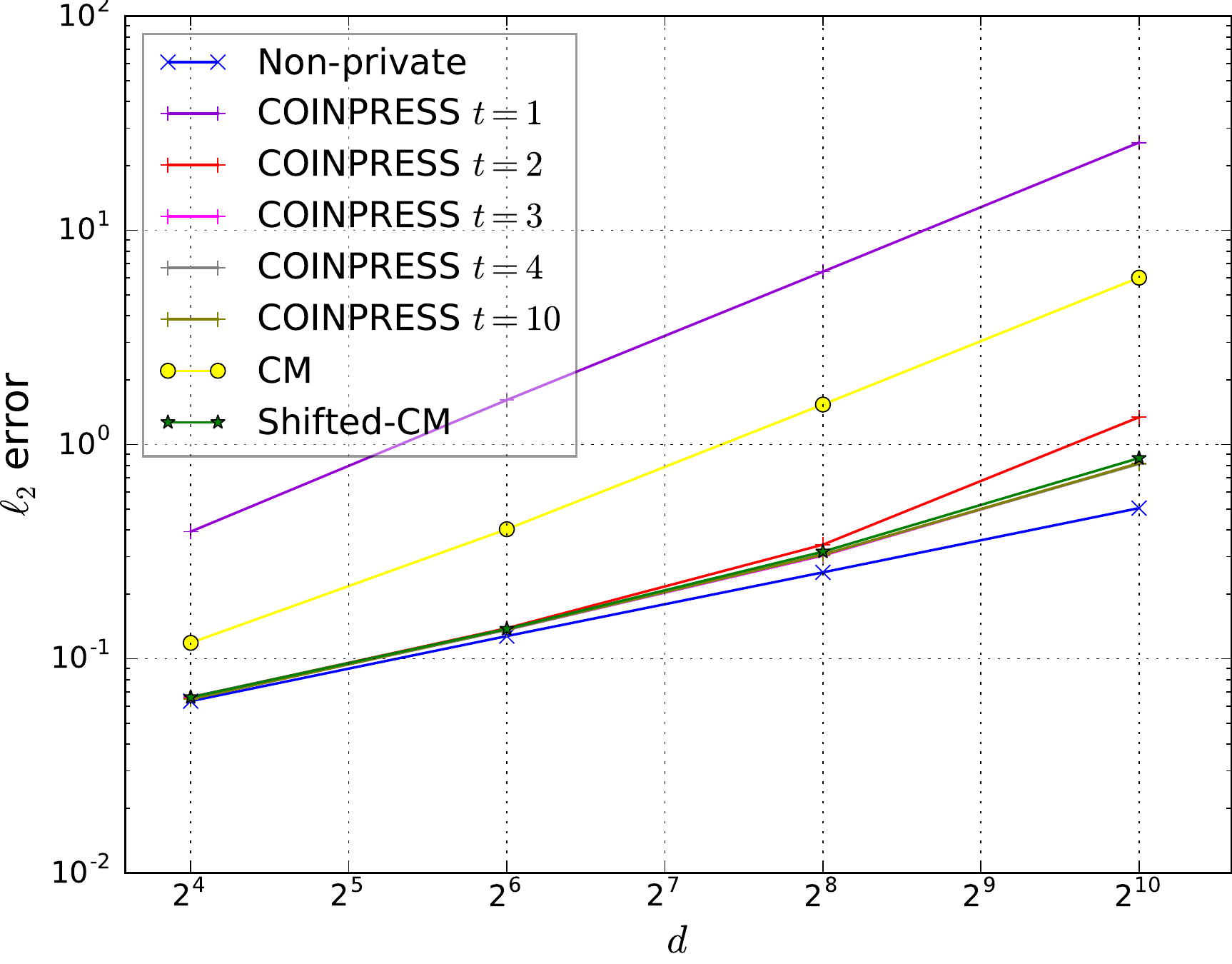}
        \vskip -.02in
        \subcaption{$\mu = 10 \cdot \mathbf{1}_d$.}
     \end{minipage}
\vskip -.02in
     \caption{Comparison with \texttt{MVMRec}. $\ell_2$ error vs. $d$ for $\mathcal{N}(\mu, I_{d \times d})$, where $n = 4000, \rho = 0.5, R = 50\sqrt{d}$.}
     \label{fig:err_d_sigma1}
\vskip -.02in
\end{figure}

\begin{figure}[htbp]
     \centering
     \begin{minipage}[t]{0.3\textwidth}
        \centering
        \includegraphics[width=\textwidth]{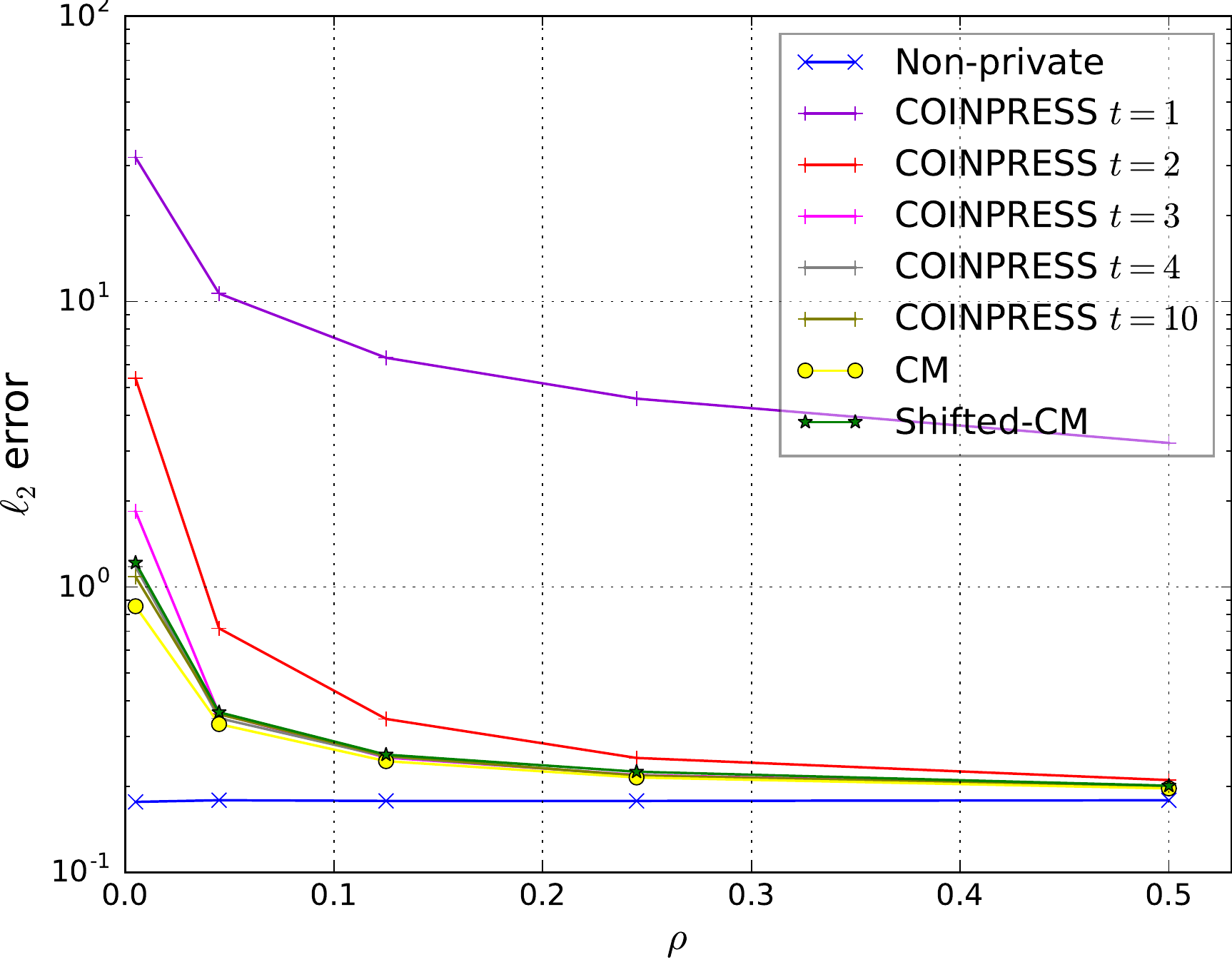}
        \vskip -.01in
        \subcaption{$\mu = 0 \cdot \mathbf{1}_d$.}
     \end{minipage}
     \hfill
     \begin{minipage}[t]{0.3\textwidth}
         \centering
         \includegraphics[width=\textwidth]{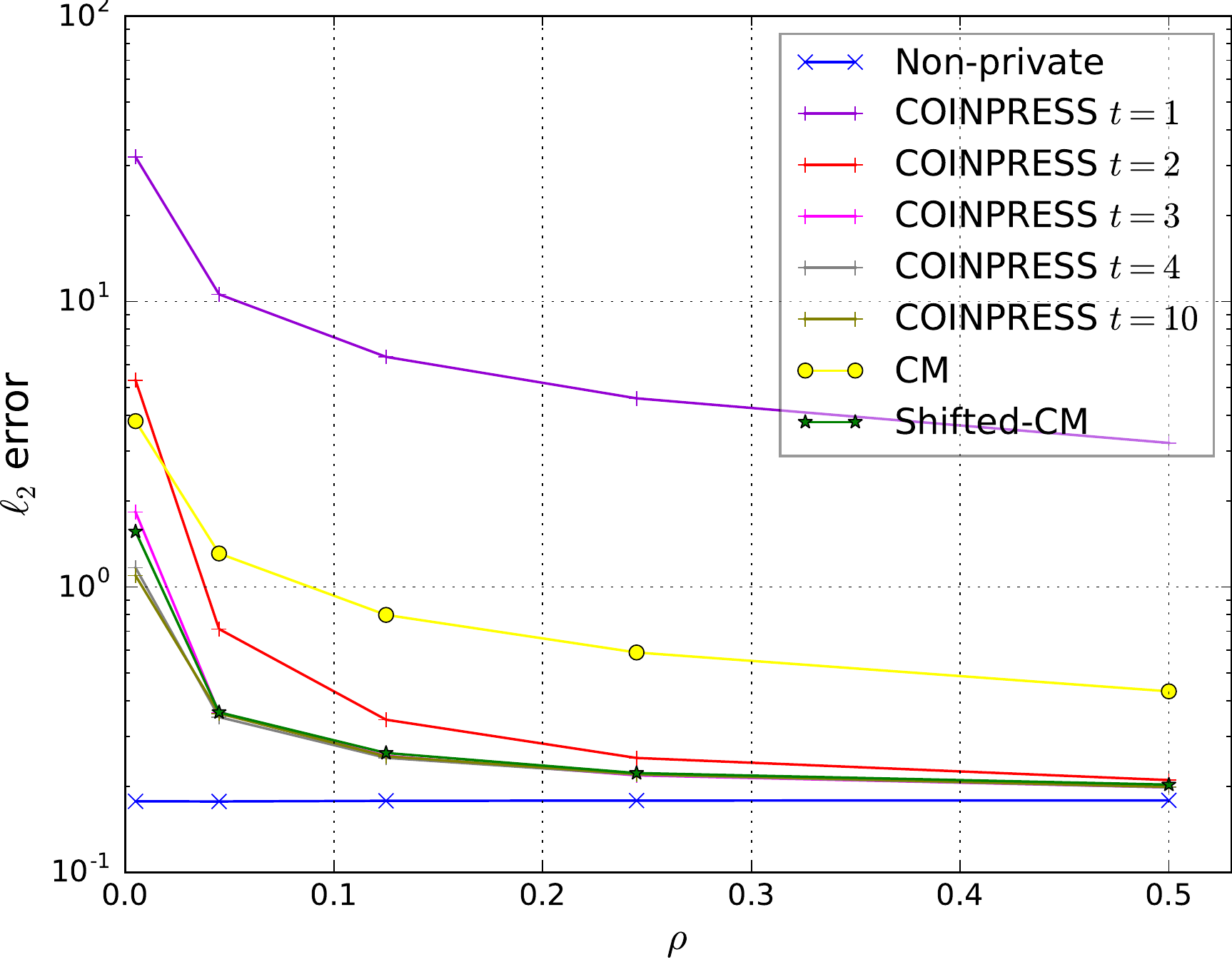}
         \vskip -.01in
         \subcaption{$\mu = 5 \cdot \mathbf{1}_d$.}
     \end{minipage}
     \hfill
     \begin{minipage}[t]{0.3\textwidth}
        \centering
        \includegraphics[width=\textwidth]{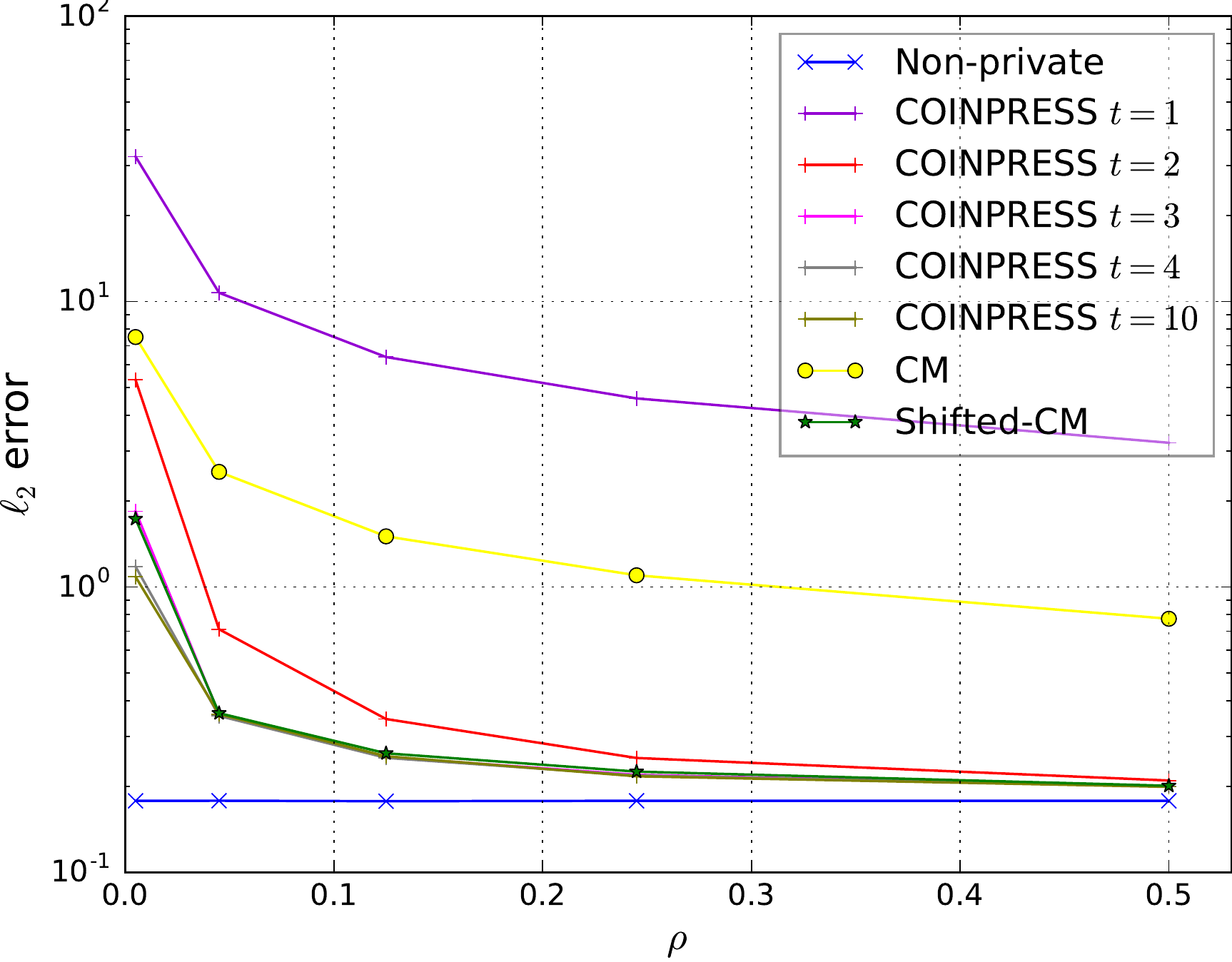}
        \vskip -.01in
        \subcaption{$\mu = 10 \cdot \mathbf{1}_d$.}
     \end{minipage}
\vskip -.01in
     \caption{Comparison with \texttt{MVMRec}. $\ell_2$ error vs. $\rho$ for $\mathcal{N}(\mu, I_{d \times d})$, where $n = 4000, d = 128, R = 50\sqrt{d}$.}
     \label{fig:err_rho_sigma1}
\vskip -.01in
\end{figure}

We performed both statistical and empirical mean estimation experiments to evaluate our method.  For statistical mean estimation, we used  multivariate Gaussian distributions with various $\mu$ and $\Sigma$.  All algorithms are given the same $R$, $\sigma_{\min}$, $\sigma_{\max}$.  We tried various $R$, while fixing $\sigma_{\min}=0.1$ and $\sigma_{\max} = R/\sqrt{d}$. For empirical mean estimation, we used a real-world dataset, MNIST, which consists of 70,000 images of handwritten digits, where each image is represented by a vector of dimension $d=784=28\times28$.  We quantized the values to integers $[u]$ for $u=2^{10}$. We measured the $\ell_2$ error by taking the trimmed mean with trimming parameter $0.1$ over $100$ trials (as in \cite{biswas2020coinpress}).



\ifthenelse{\boolean{long}}{
\subsection{Results in the Central Model}
}{}

For the central model, we compared with two versions of \texttt{COINPRESS} \cite{biswas2020coinpress}. $\texttt{MVMRec}$, which is the mean estimation algorithm in \cite{biswas2020coinpress}, and \texttt{MVCRec-MVMRec}, where we first scale the data by an estimated $\Sigma$ (obtained by \texttt{MVCRec} using half of the privacy budget $\rho$) and then apply \texttt{MVMRec}.  Both \texttt{MVMRec} and \texttt{MVCRec} start with a given confidence ball of radius $R$ and iteratively refine it. The number of iterations $t$ is an important internal parameter. Following the suggestions in \cite{biswas2020coinpress}, we tried $t=1,2,3,4,10$ for \texttt{MVMRec}; for $\texttt{MVCRec-MVMRec}$, we fixed $t=10$ for \texttt{MVMRec} and tried $t_{\mathrm{cov}}=1,2,3,4,5$. We also note that $\texttt{MVCRec-MVMRec}$ involves complex matrix operations (e.g., matrix inverse) and costs at least $\tilde{\Omega}(nd^2)$ time, in contrast to the $\tilde{O}(nd)$ time of $\texttt{MVMRec}$ and our algorithm. 

 
\begin{figure}[htbp]
     \centering
     \begin{minipage}[t]{0.3\textwidth}
        \centering
        \includegraphics[width=\textwidth]{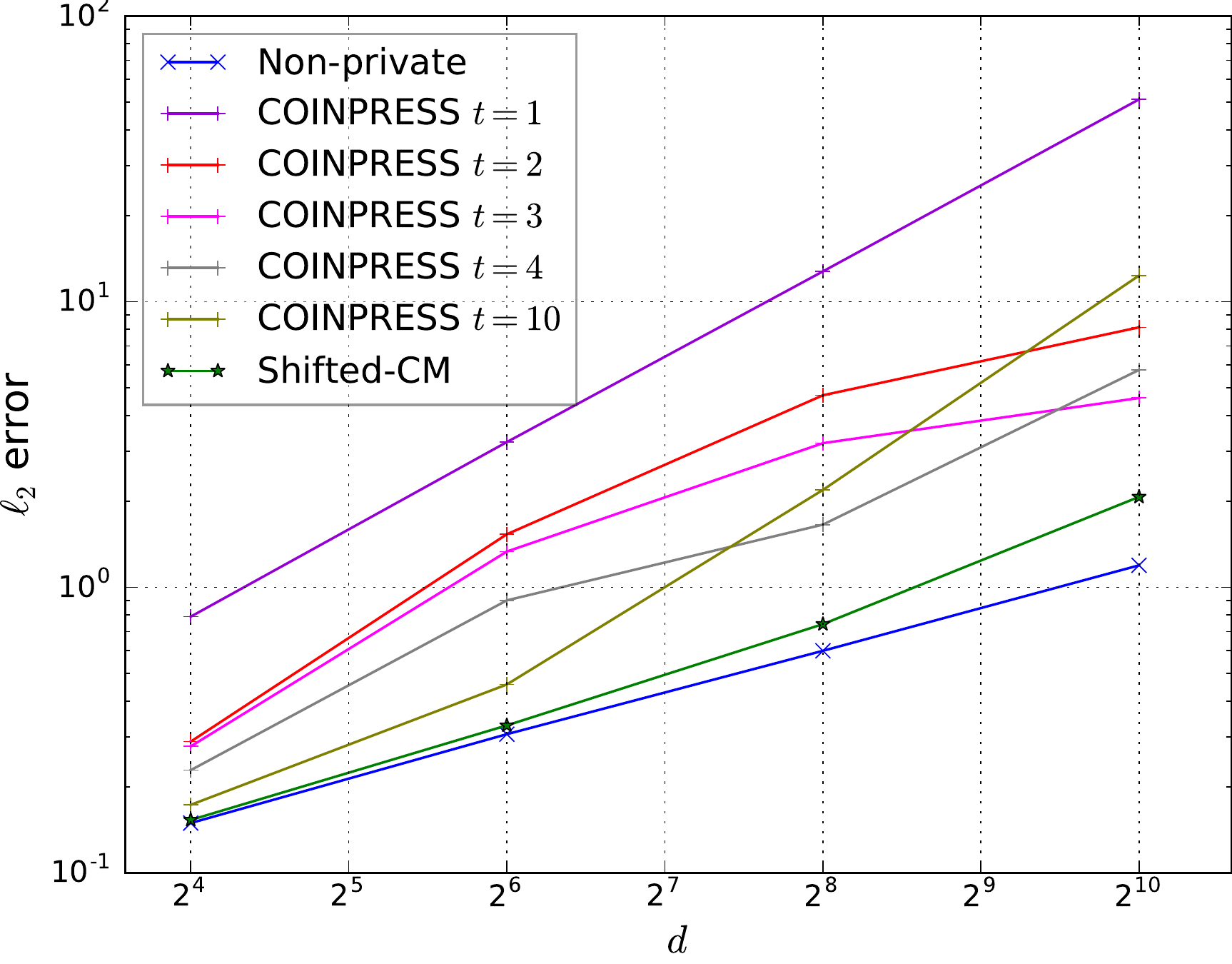}
        \vskip -.01in
        \subcaption{$\kappa=10$.}
     \end{minipage}
     \hfill
     \begin{minipage}[t]{0.3\textwidth}
         \centering
         \includegraphics[width=\textwidth]{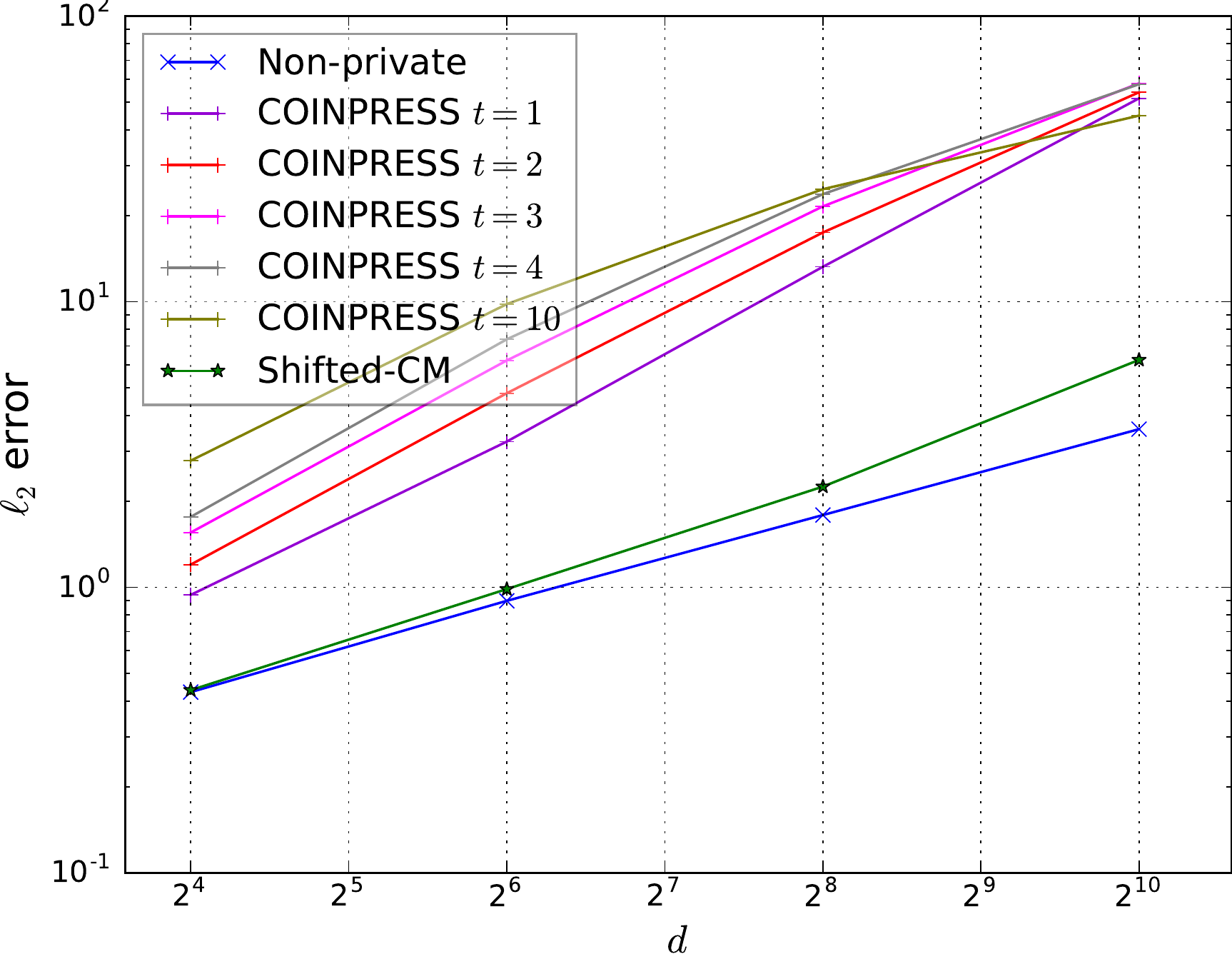}
         \vskip -.01in
         \subcaption{$\kappa=100$.}
     \end{minipage}
     \hfill
     \begin{minipage}[t]{0.3\textwidth}
        \centering
        \includegraphics[width=\textwidth]{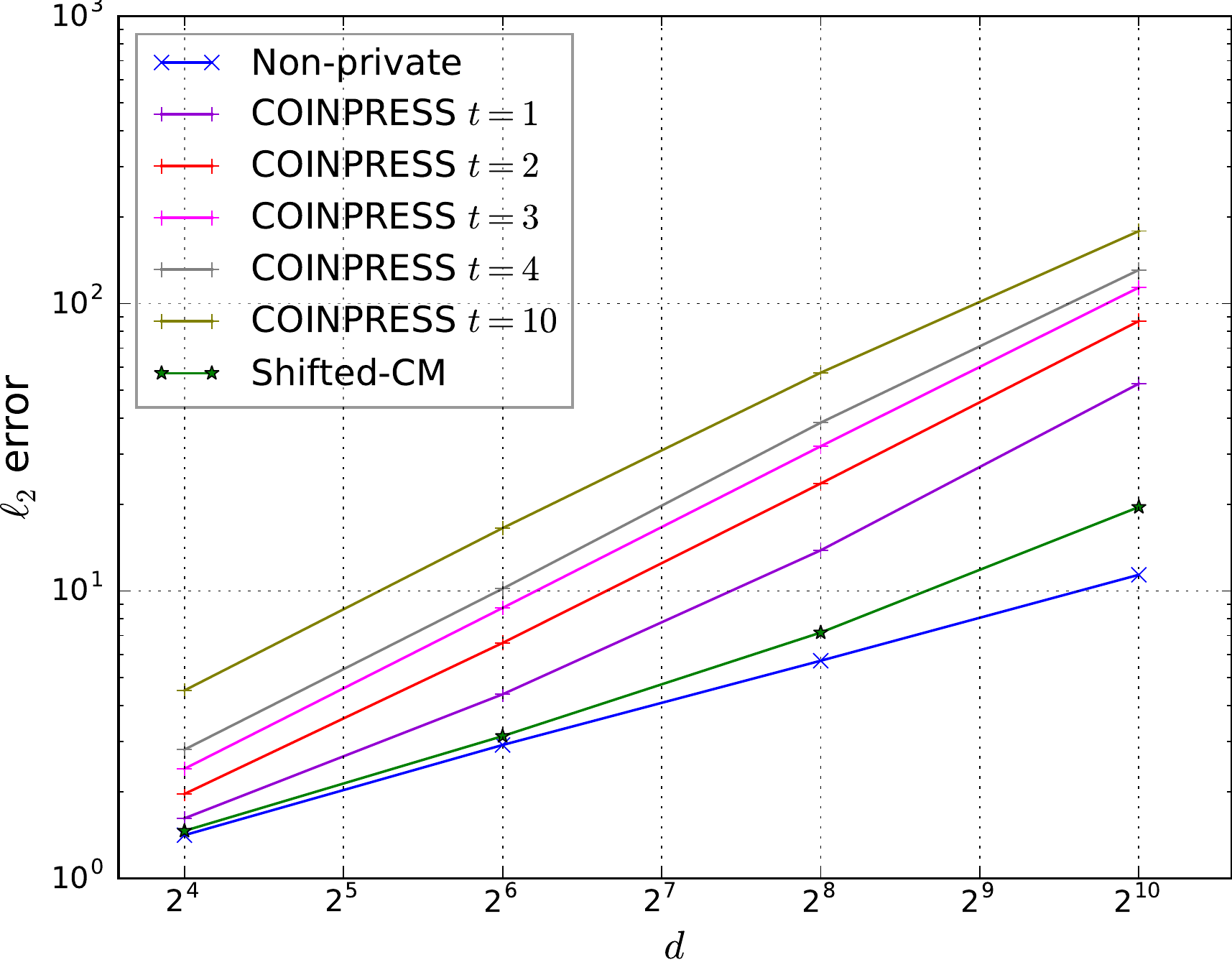}
        \vskip -.01in
        \subcaption{$\kappa=1000$.}
     \end{minipage}
\vskip -.01in
     \caption{Comparison with \texttt{MVMRec}. $\ell_2$ error vs. $d$ for $\mathcal{N}(0, \Sigma(\kappa))$, where $n = 4000, \rho = 0.5, R = 100\sqrt{d}$.}
     \label{fig:err_d_vary_sigma}
\vskip -.01in
\end{figure}

\begin{figure}[htbp]
     \centering
     \begin{minipage}[t]{0.3\textwidth}
        \centering
        \includegraphics[width=\textwidth]{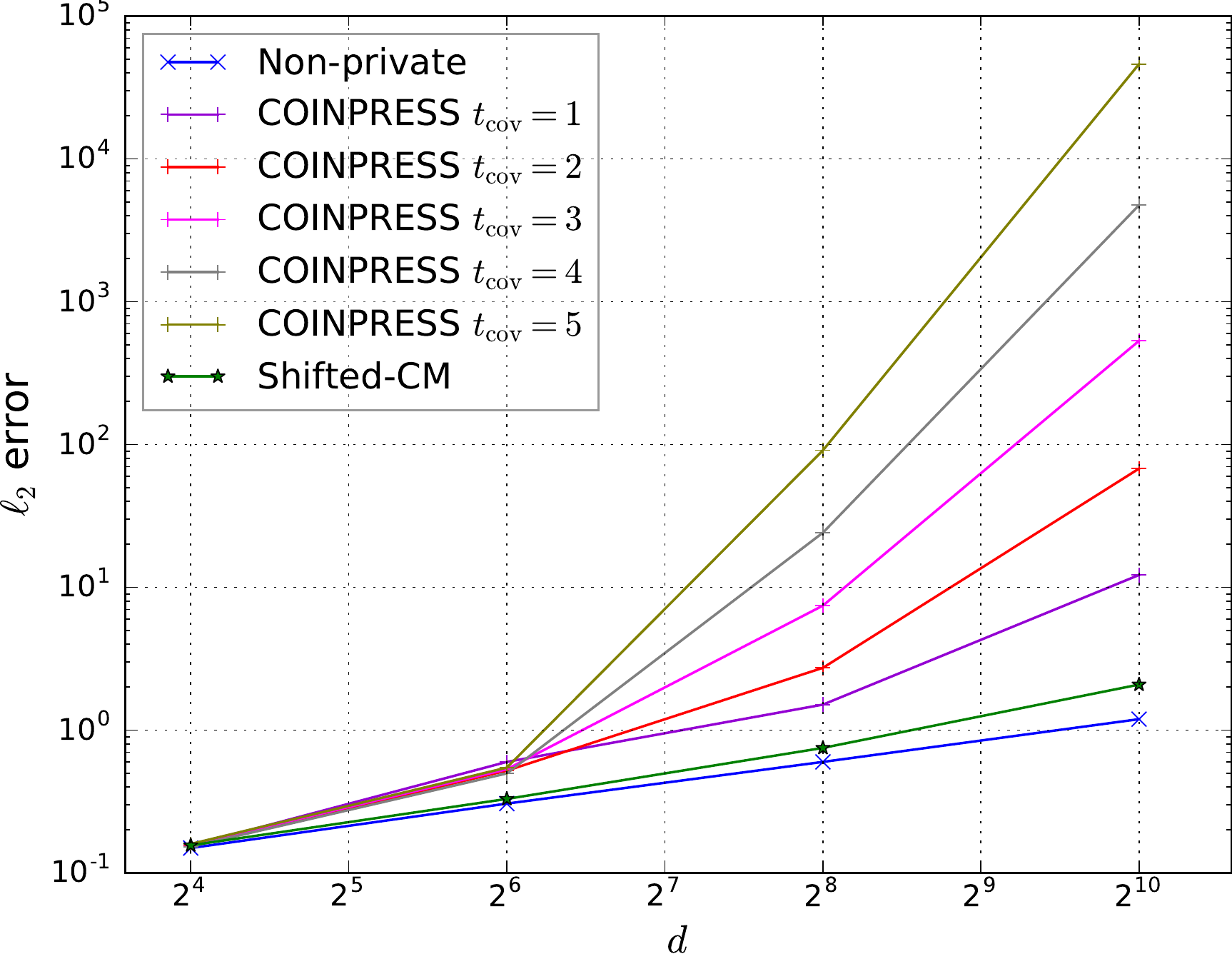}
        \vskip -.01in
        \subcaption{$\kappa=10$.}
     \end{minipage}
     \hfill
     \begin{minipage}[t]{0.3\textwidth}
         \centering
         \includegraphics[width=\textwidth]{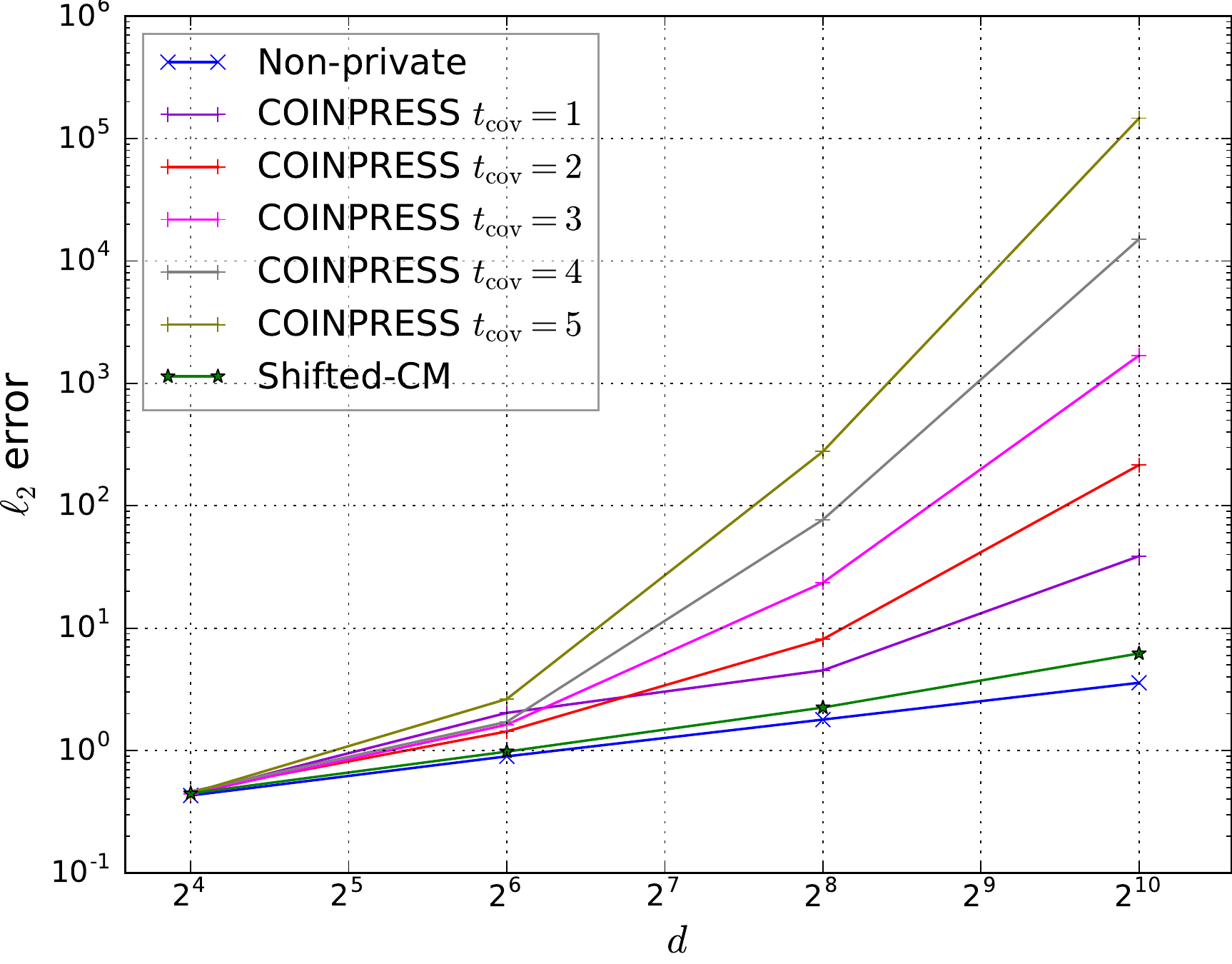}
         \vskip -.01in
         \subcaption{$\kappa=100$.}
     \end{minipage}
     \hfill
     \begin{minipage}[t]{0.3\textwidth}
        \centering
        \includegraphics[width=\textwidth]{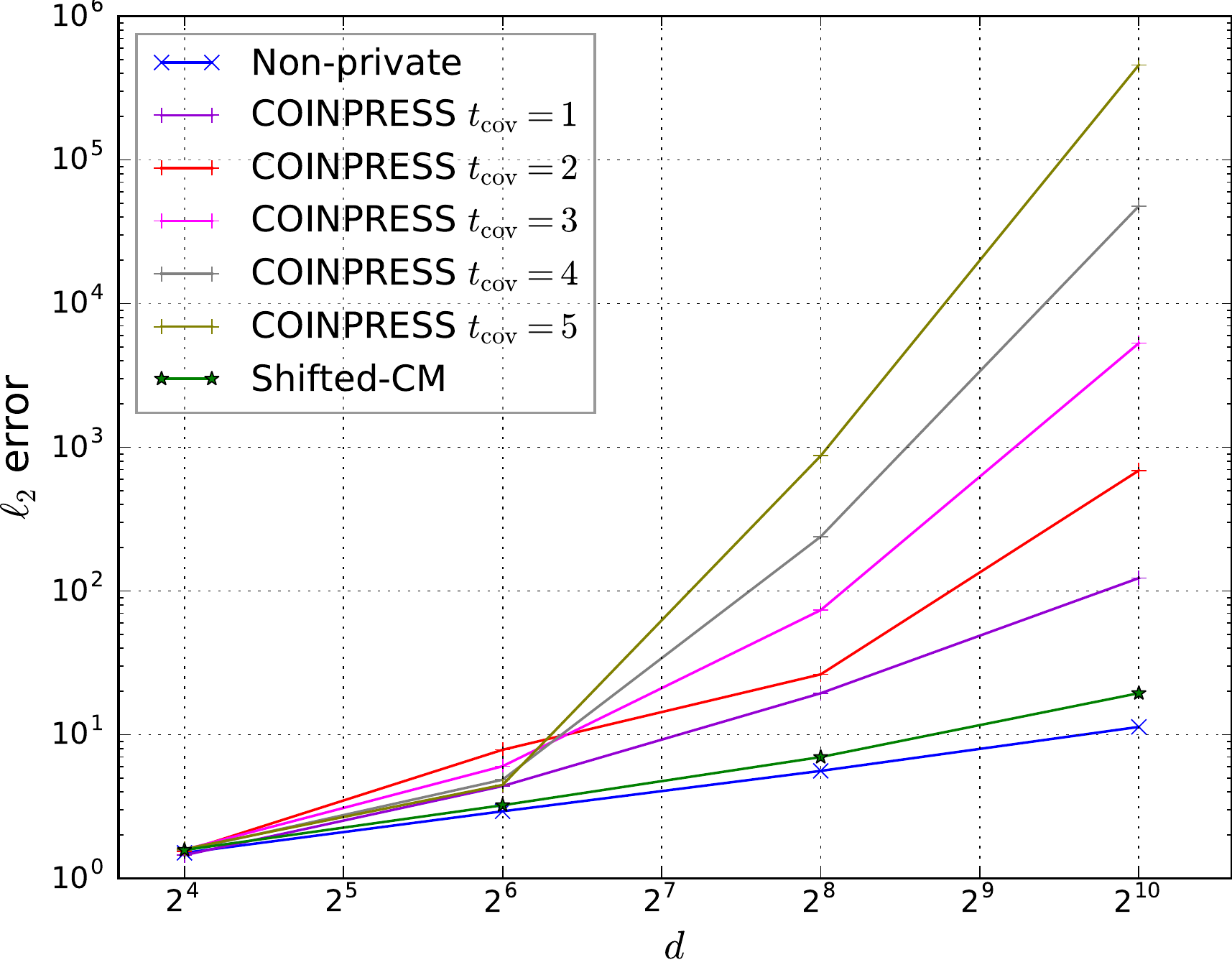}
        \vskip -.01in
        \subcaption{$\kappa=1000$.}
     \end{minipage}
     \vskip -.01in
     \caption{Comparison with \texttt{MVCRec-MVMRec}. $\ell_2$ error vs. $d$ for $\mathcal{N}(0, \Sigma(\kappa))$, where $n = 4000, \rho = 0.5, R = 100\sqrt{d}$.}
     \label{fig:err_d_vary_sigma_cov}
\vskip -.01in
\end{figure}

\begin{figure}[htbp]
     \centering
     \begin{minipage}[t]{0.3\textwidth}
        \centering
        \includegraphics[width=\textwidth]{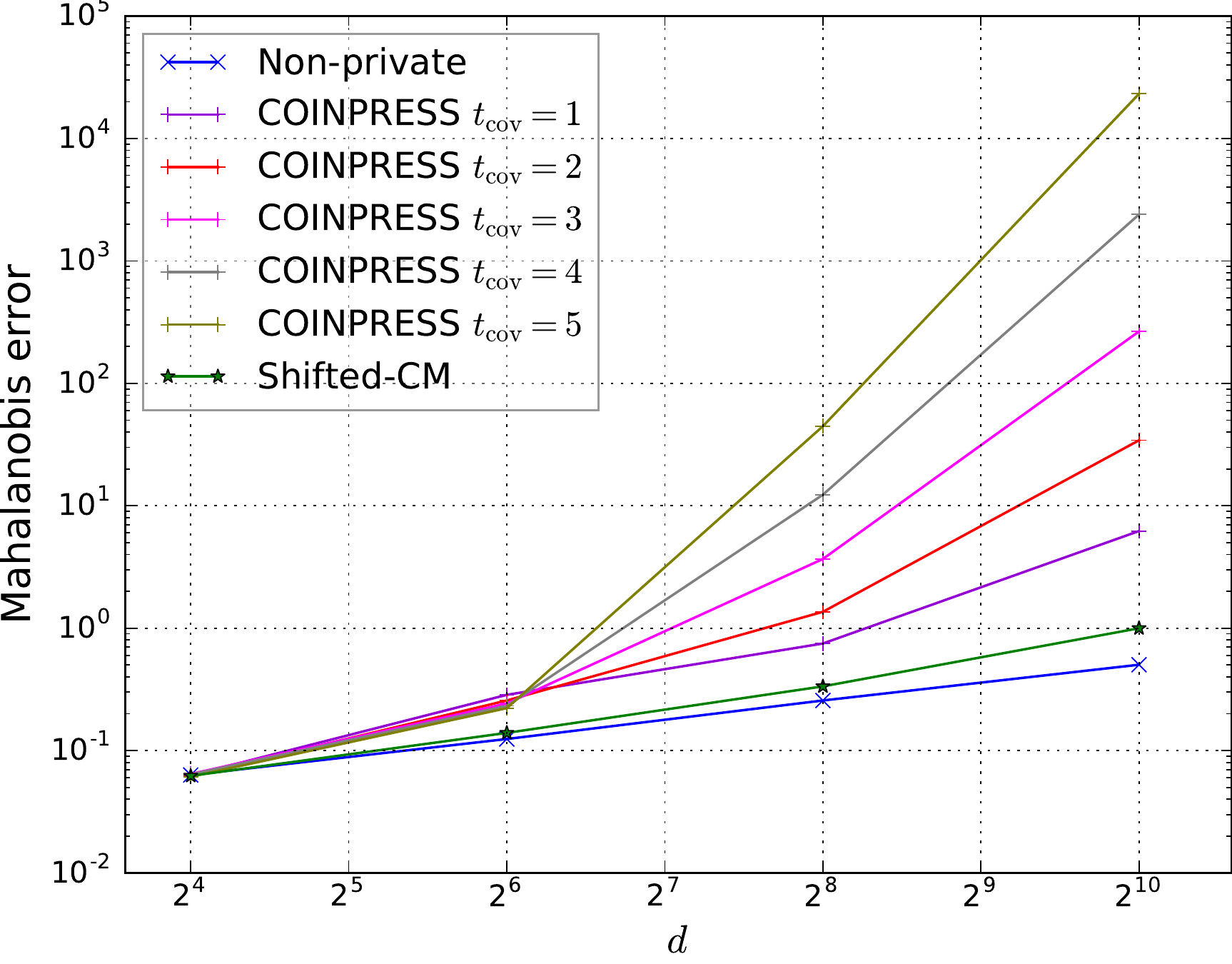}
        \vskip -.01in
        \subcaption{$\kappa = 10$.}
     \end{minipage}
     \hfill
     \begin{minipage}[t]{0.3\textwidth}
         \centering
         \includegraphics[width=\textwidth]{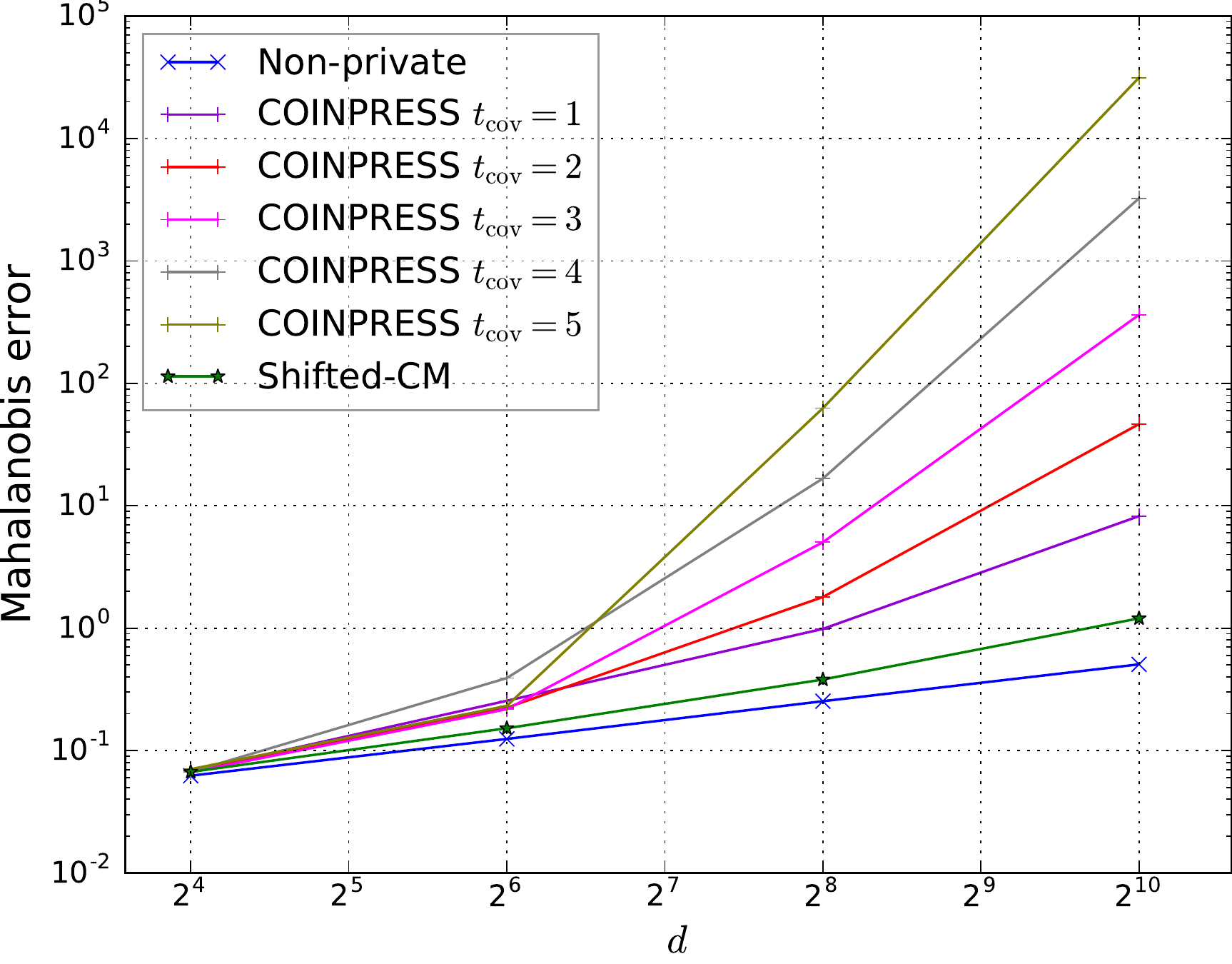}
         \vskip -.01in
         \subcaption{$\kappa = 100$.}
     \end{minipage}
     \hfill
     \begin{minipage}[t]{0.3\textwidth}
        \centering
        \includegraphics[width=\textwidth]{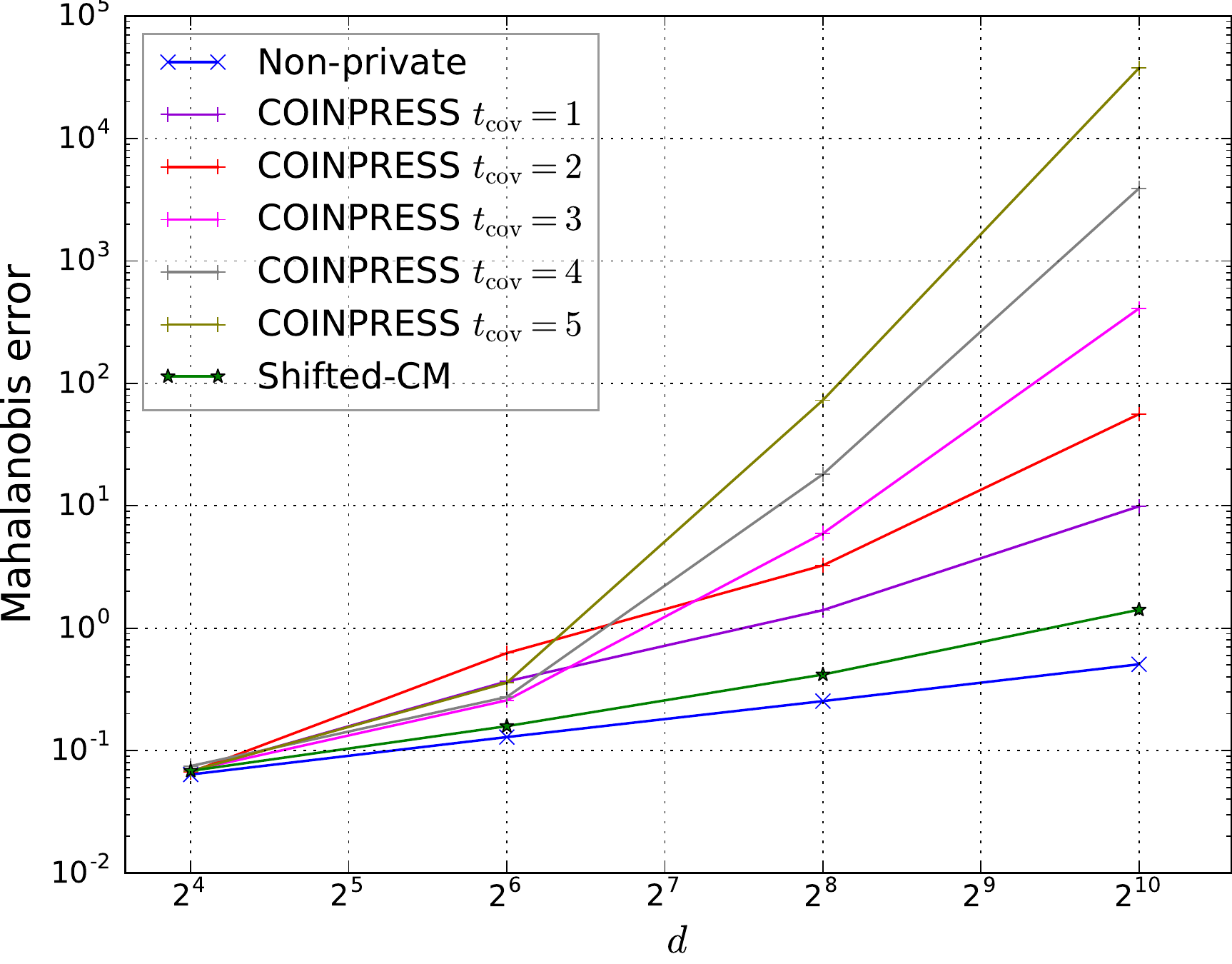}
        \vskip -.01in
        \subcaption{$\kappa = 1000$.}
     \end{minipage}
     \vskip -.01in
     \caption{Comparison with \texttt{MVCRec-MVMRec}. Mahalanobis error vs. $d$ for $\mathcal{N}(0, \Sigma(\kappa))$, where $n = 4000, \rho = 0.5, R = 100\sqrt{d}$.}
     \label{fig:err_d_vary_sigma_mdist}
\vskip -.01in
\end{figure}

The results for statistical mean estimation are shown in Fig.~\ref{fig:err_d_sigma1} to \ref{fig:err_R}, where the detailed parameter settings are given in the captions. The covariance matrix is $\Sigma(\kappa) = A \Lambda(\kappa) A^T$, where $A$ is a random rotation matrix and $\Lambda(\kappa)$ is the diagonal matrix $\text{diag}([\sigma_1^2, \sigma_2^2, \ldots, \sigma_d^2])$, $\sigma^2_j \sim_{\mathrm{u.a.r.}} [1, \kappa]$ for all $j \in [d]$.  From these results we can make the following observations. (1) The best choice of $t$ is sensitive to the unknown $\Sigma$ (see Fig. \ref{fig:err_d_vary_sigma} and \ref{fig:err_d_vary_sigma_cov}), 
making it difficult to tune in practice. (2) The error of our method (\texttt{Shifted-CM}) is always at least as good as \texttt{COINPRESS} with the best $t$ across a variety of settings. (3) When $\Sigma$ is not identity, our method outperforms \texttt{MVMRec} significantly (Fig. \ref{fig:err_d_vary_sigma}), and yields errors close to the non-private estimator. (4) Estimating $\Sigma$ first (i.e., \texttt{MVCRec-MVMRec}) reduces the error in low dimensions, but the gain diminishes as $d$ gets larger (Fig. \ref{fig:err_d_vary_sigma_cov}), which agrees with the analysis in \cite{kamath2019privately} that $\Sigma$ estimation needs $\tilde{\Omega}(d^{3/2}/\sqrt{\rho})$ samples. (5) \texttt{MVCRec-MVMRec} does not really do better in terms of the Mahalanobis error (Fig.~\ref{fig:err_d_vary_sigma_mdist}), again because $\Sigma$ cannot be estimated well in high dimensions. (6) Both our method and \texttt{COINPRESS} are translation-invariant.  This can be verified from Fig.~\ref{fig:err_d_sigma1}, \ref{fig:err_rho_sigma1}, and \ref{fig:err_R}, where the results are not effected by $\mu$. However, the approaches taken are different: \texttt{COINPRESS} uses an iterative process, while we shift the dataset to be centered around an approximate center point.  In Fig.~\ref{fig:err_d_sigma1} and \ref{fig:err_rho_sigma1}, we also include \texttt{CM}, our estimator without this shift operation, which is indeed affected by $\mu$.

\begin{figure}[htbp]
     \centering
     \begin{minipage}[t]{0.3\textwidth}
        \centering
        \includegraphics[width=\textwidth]{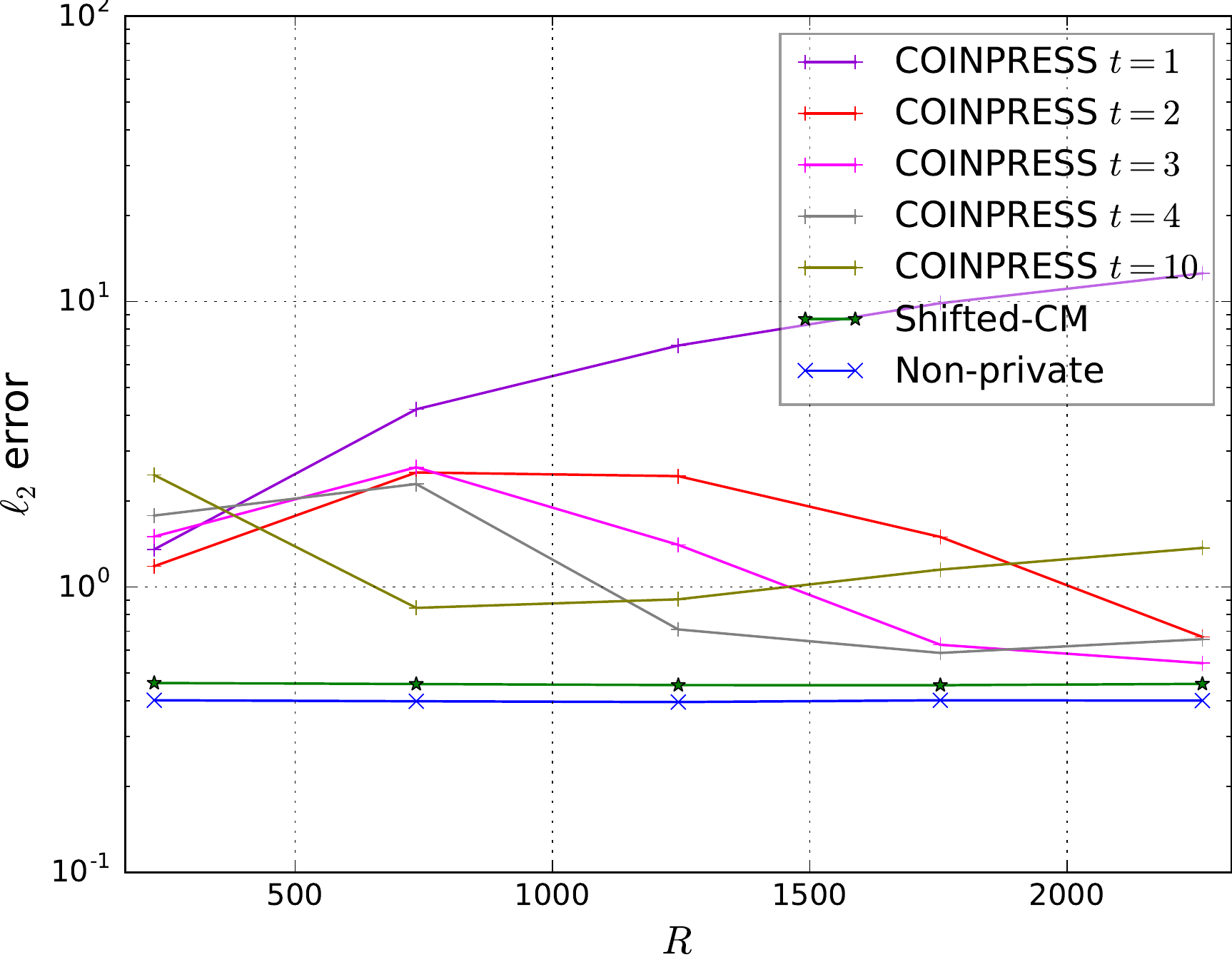}
        \vskip -.01in
        \subcaption{$\mu = 0 \cdot \mathbf{1}_d$.}
     \end{minipage}
     \hfill
     \begin{minipage}[t]{0.3\textwidth}
         \centering
         \includegraphics[width=\textwidth]{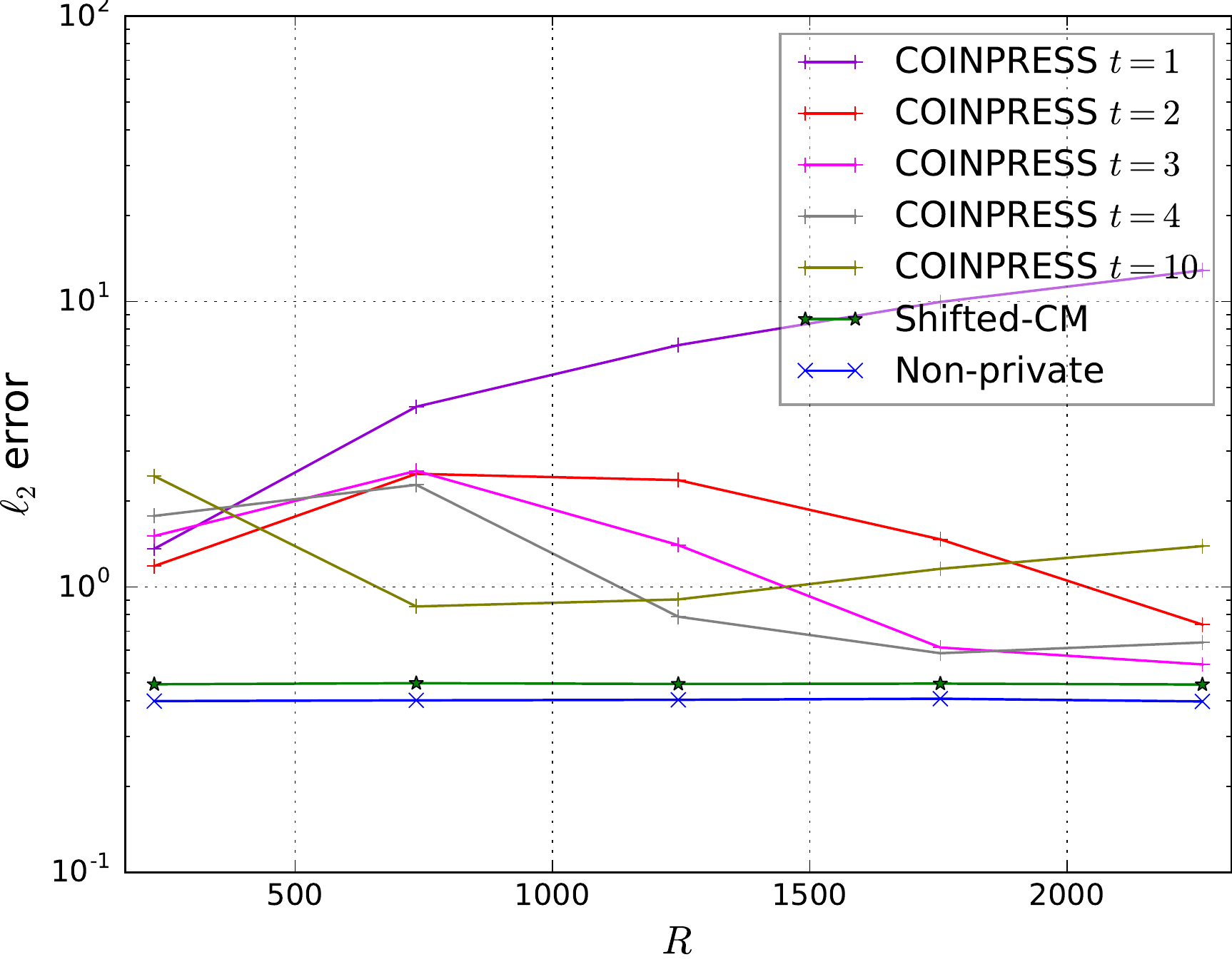}
         \vskip -.01in
         \subcaption{$\mu = 5 \cdot \mathbf{1}_d$.}
     \end{minipage}
     \hfill
     \begin{minipage}[t]{0.3\textwidth}
        \centering
        \includegraphics[width=\textwidth]{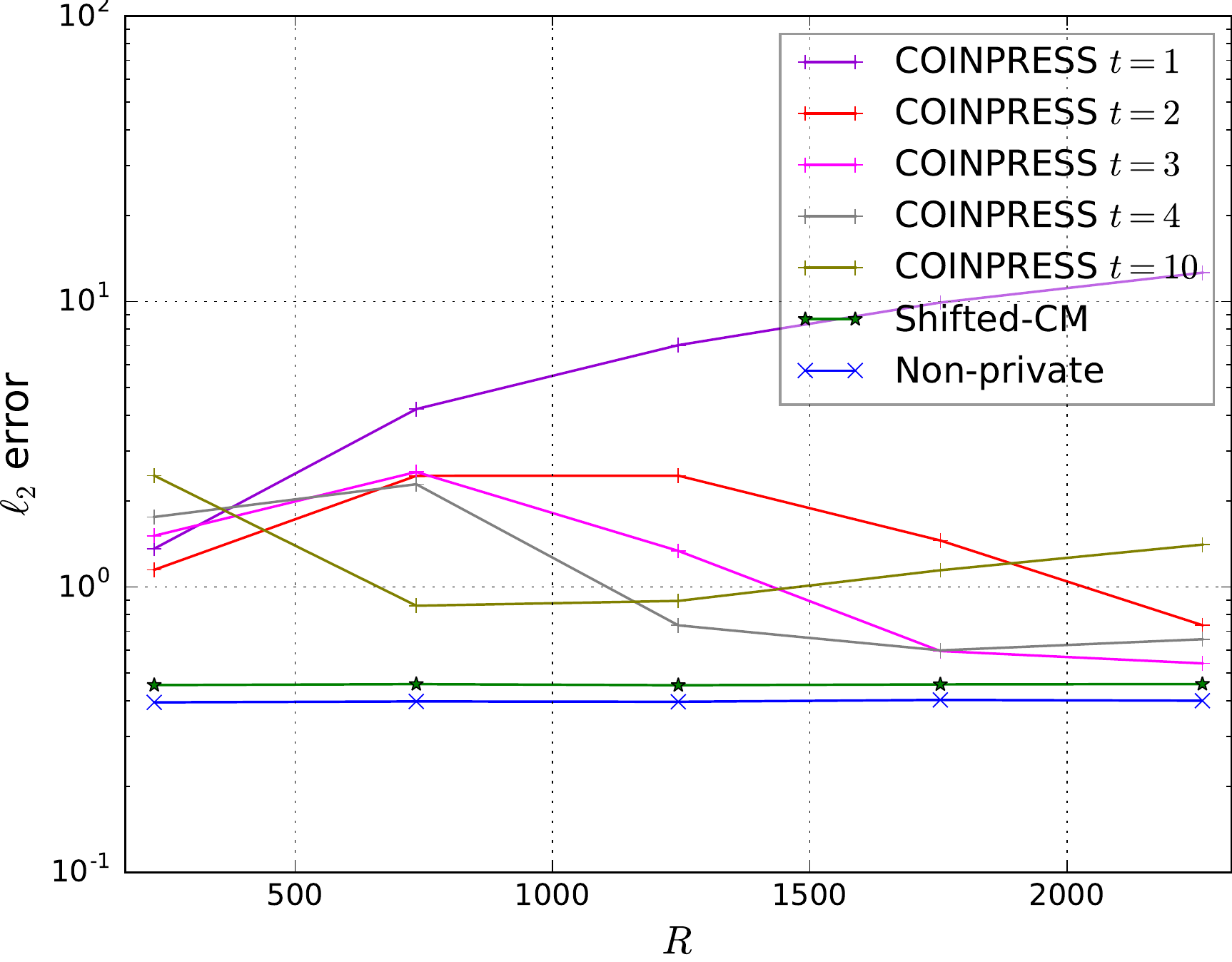}
        \vskip -.01in
        \subcaption{$\mu = 10 \cdot \mathbf{1}_d$.}
     \end{minipage}
     \vskip -.01in
     \caption{$\ell_2$ error vs. $R$ for $\mathcal{N}(\mu, \Sigma(10))$, where $n = 4000, \rho = 0.5, d = 128$.}
     \label{fig:err_R}
\vskip -.01in
\end{figure}

\begin{figure}[htbp]
     \centering
     \begin{minipage}[t]{0.3\textwidth}
        \centering
        \includegraphics[width=\textwidth]{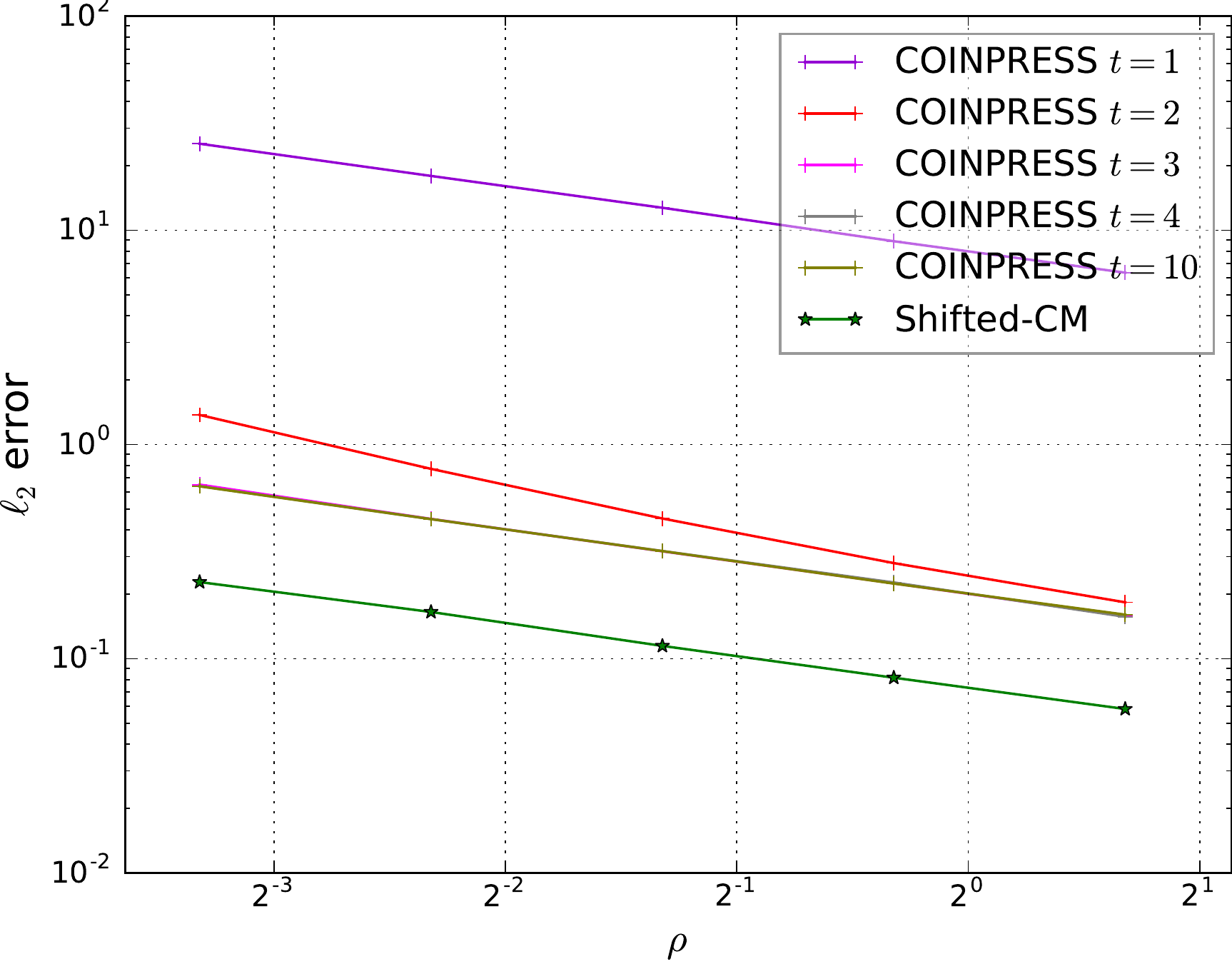}
        \vskip -.01in
        \subcaption{digit $0$.}
     \end{minipage}
     \hfill
     \begin{minipage}[t]{0.3\textwidth}
         \centering
         \includegraphics[width=\textwidth]{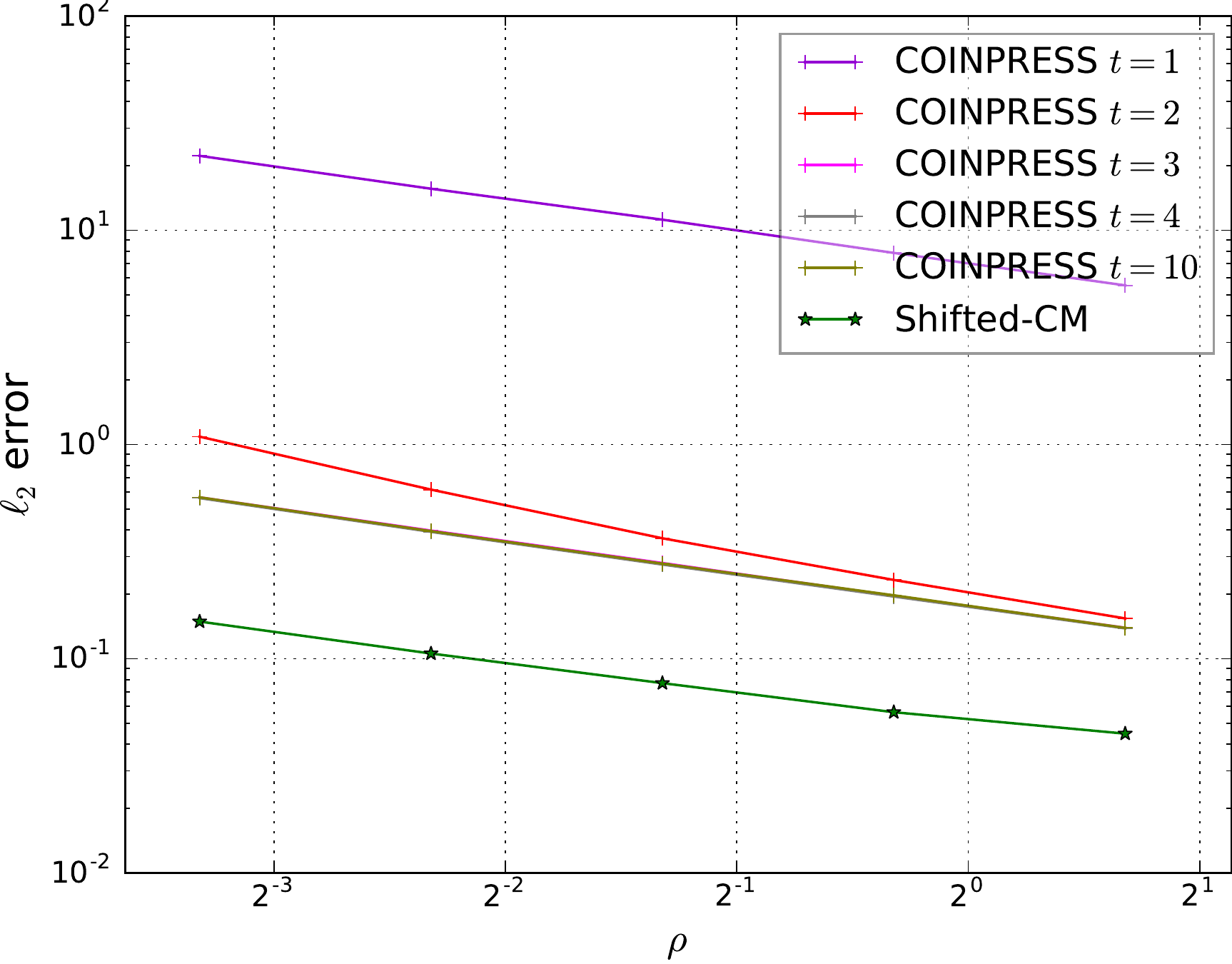}
         \vskip -.01in
         \subcaption{digit $1$.}
     \end{minipage}
     \hfill
     \begin{minipage}[t]{0.3\textwidth}
        \centering
        \includegraphics[width=\textwidth]{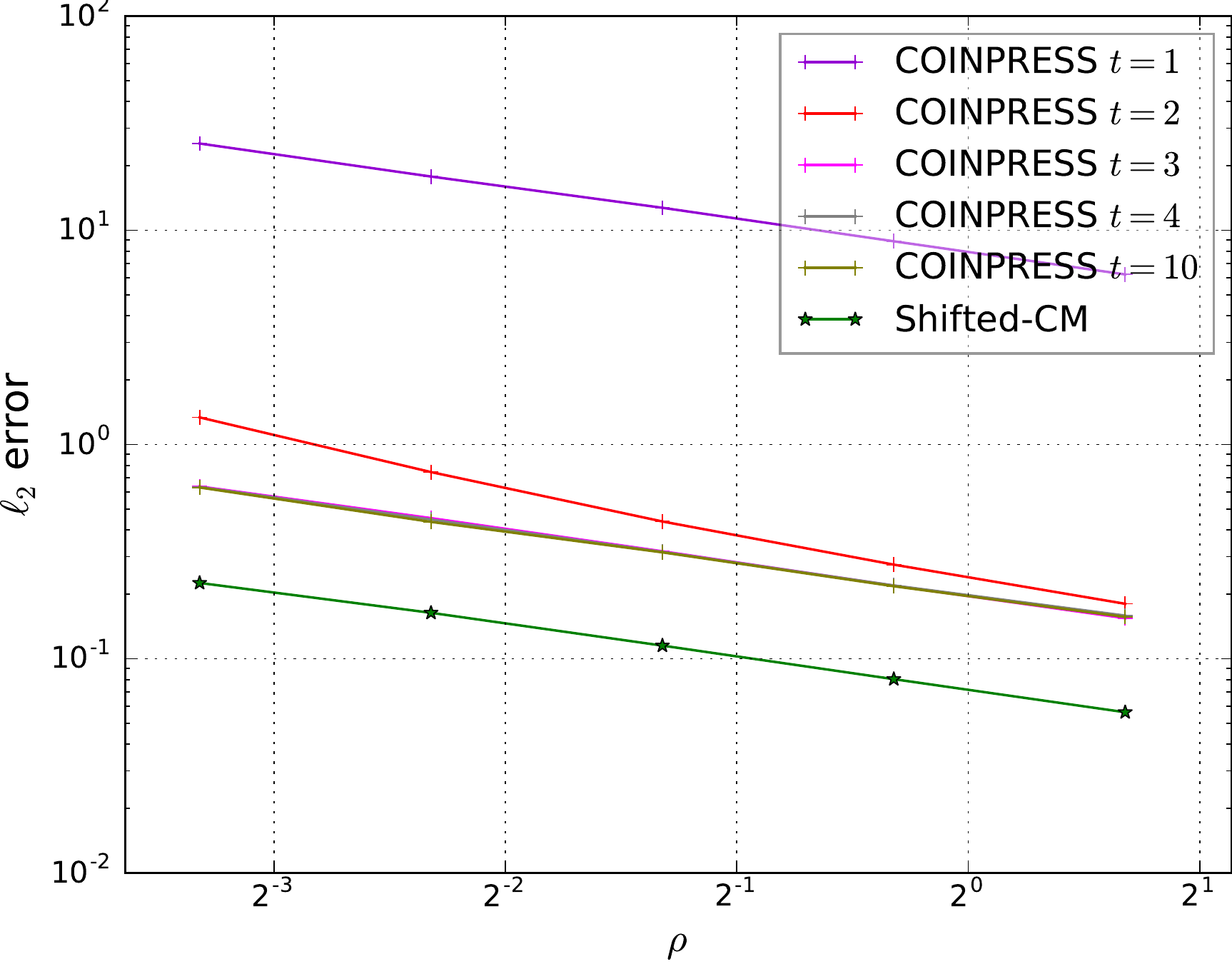}
        \vskip -.01in
        \subcaption{digit $2$.}
     \end{minipage}
     
     \vskip -.01in
     \caption{Comparison with \texttt{MVMRec}. $\ell_2$ error vs. $\rho$ for various digits in MNIST, where $d = 784, R = 50\sqrt{d}$. 
     }
     \label{fig:err_rho_mnist}
\vskip -.01in
\end{figure}

\begin{figure}[htbp]
     \centering
     \begin{minipage}[t]{0.3\textwidth}
        \centering
        \includegraphics[width=\textwidth]{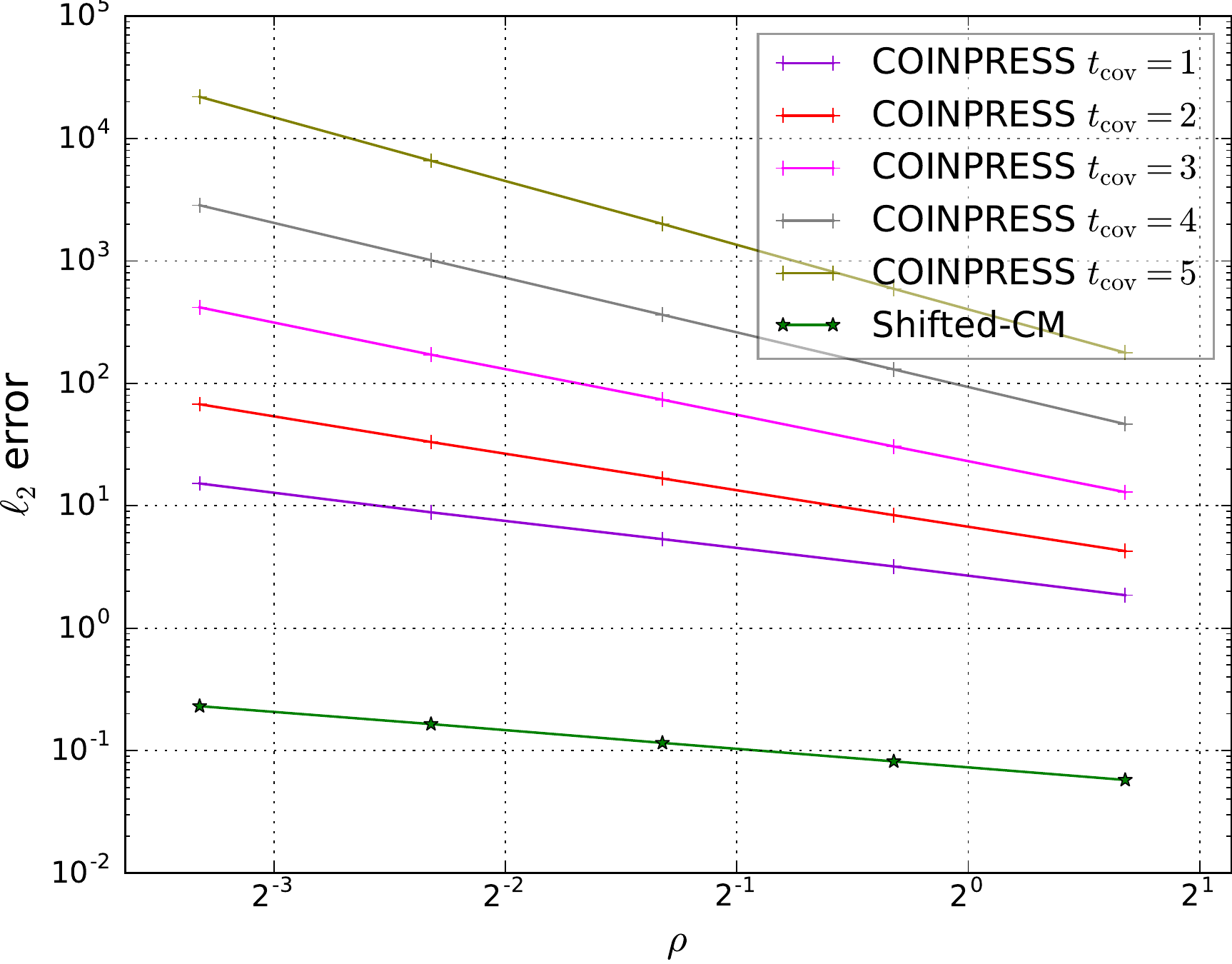}
        \vskip -.01in
        \subcaption{digit $0$.}
     \end{minipage}
     \hfill
     \begin{minipage}[t]{0.3\textwidth}
         \centering
         \includegraphics[width=\textwidth]{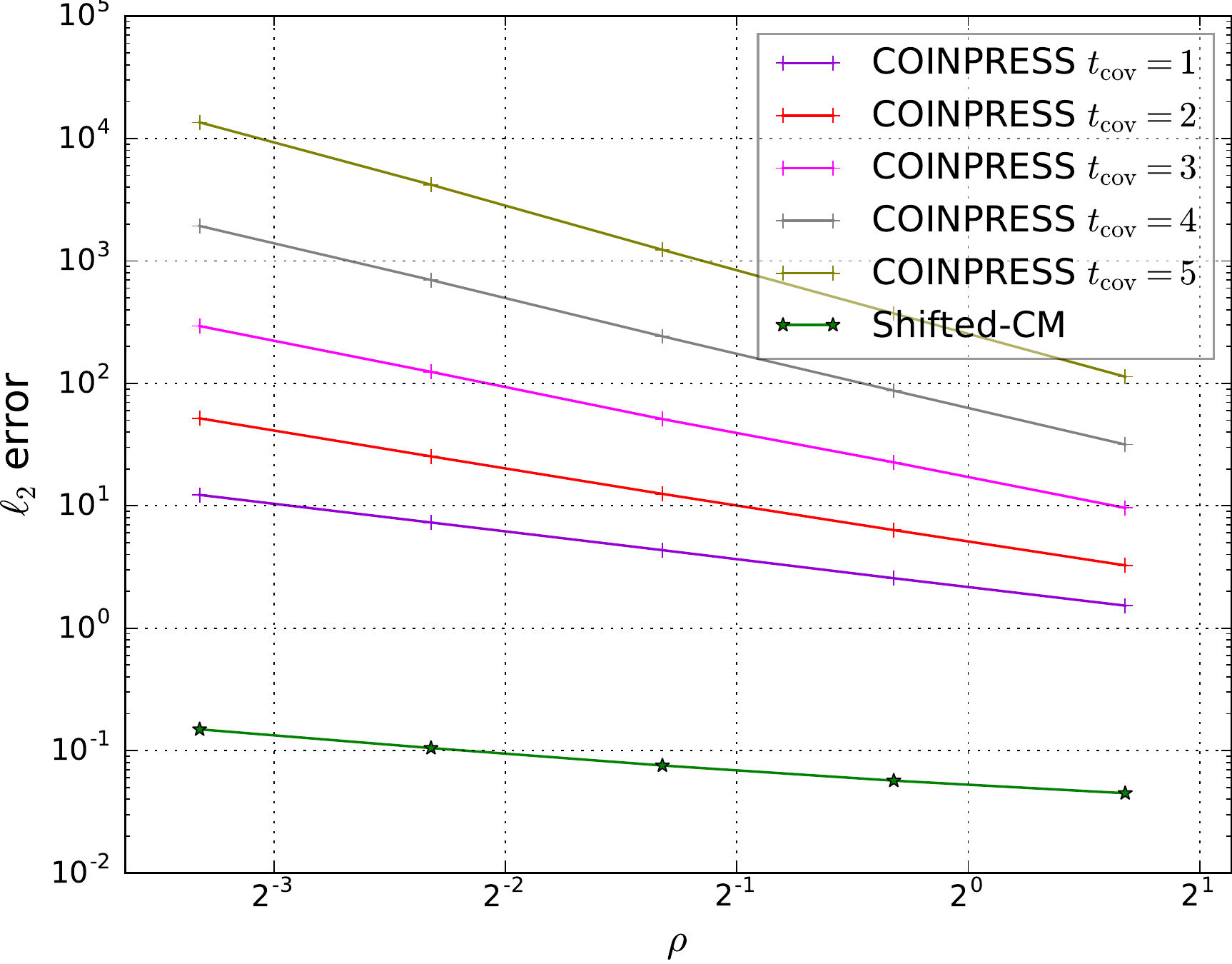}
         \vskip -.01in
         \subcaption{digit $1$.}
     \end{minipage}
     \hfill
     \begin{minipage}[t]{0.3\textwidth}
        \centering
        \includegraphics[width=\textwidth]{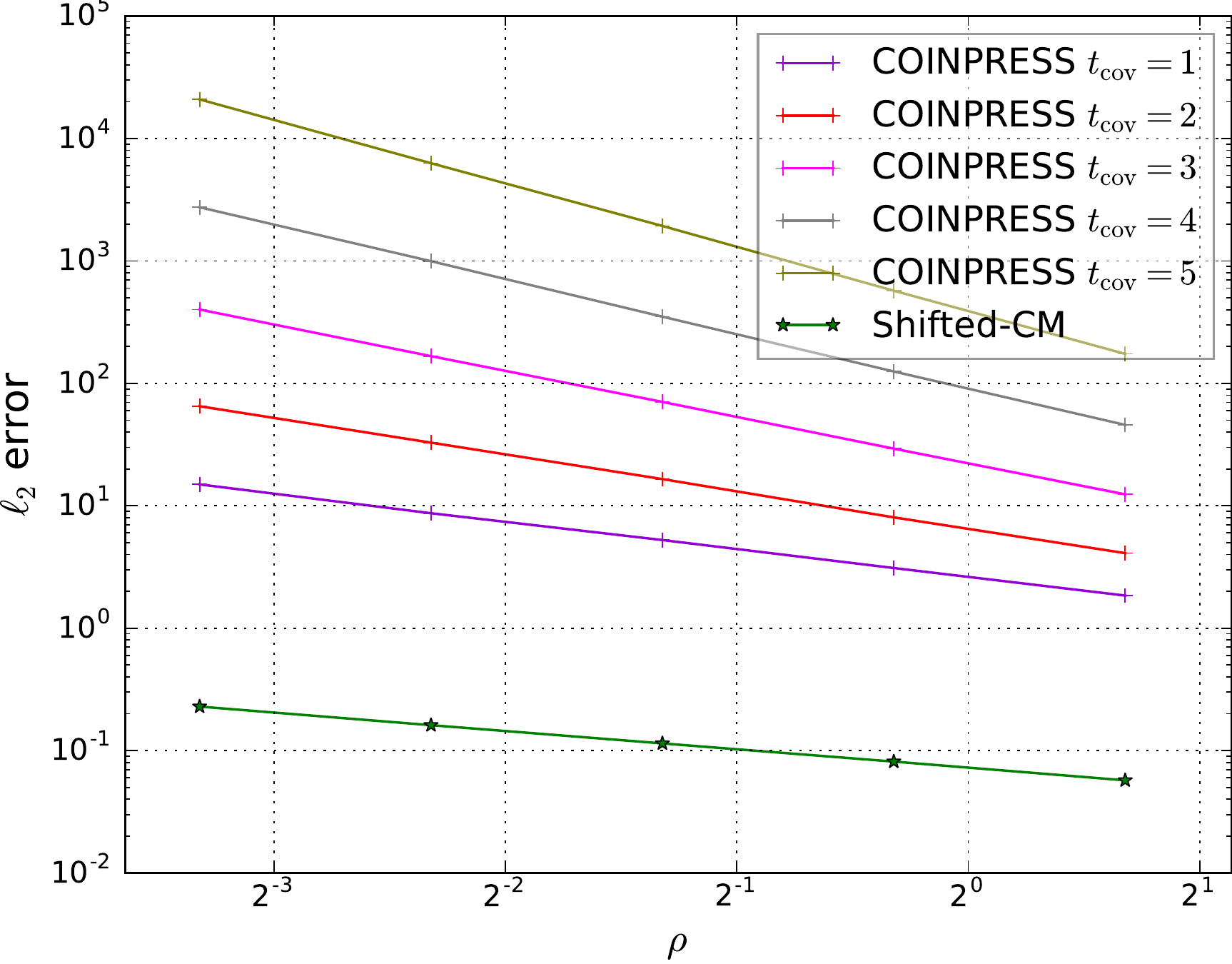}
        \vskip -.01in
        \subcaption{digit $2$.}
     \end{minipage}
     \vskip -.01in
     \caption{Comparison with \texttt{MVCRec-MVMRec}. $\ell_2$ error vs. $\rho$ for various digits in MNIST.
     }
     \label{fig:err_rho_mnist_cov}
\vskip -.01in
\end{figure}

For empirical mean estimation on the MNIST dataset, we see in Fig. \ref{fig:err_rho_mnist} and \ref{fig:err_rho_mnist_cov} that our method outperforms \texttt{COINPRESS} for various privacy levels.  This means that this dataset, as with most real-world datasets, does not follow Gaussian distribution with an identity $\Sigma$.  The instance-optimality of our method is precisely the reason behind its robustness to different distributional assumptions, or the lack of.
 

\begin{figure}[htbp]
     \centering
     \begin{minipage}[t]{0.3\textwidth}
        \centering
        \includegraphics[width=\textwidth]{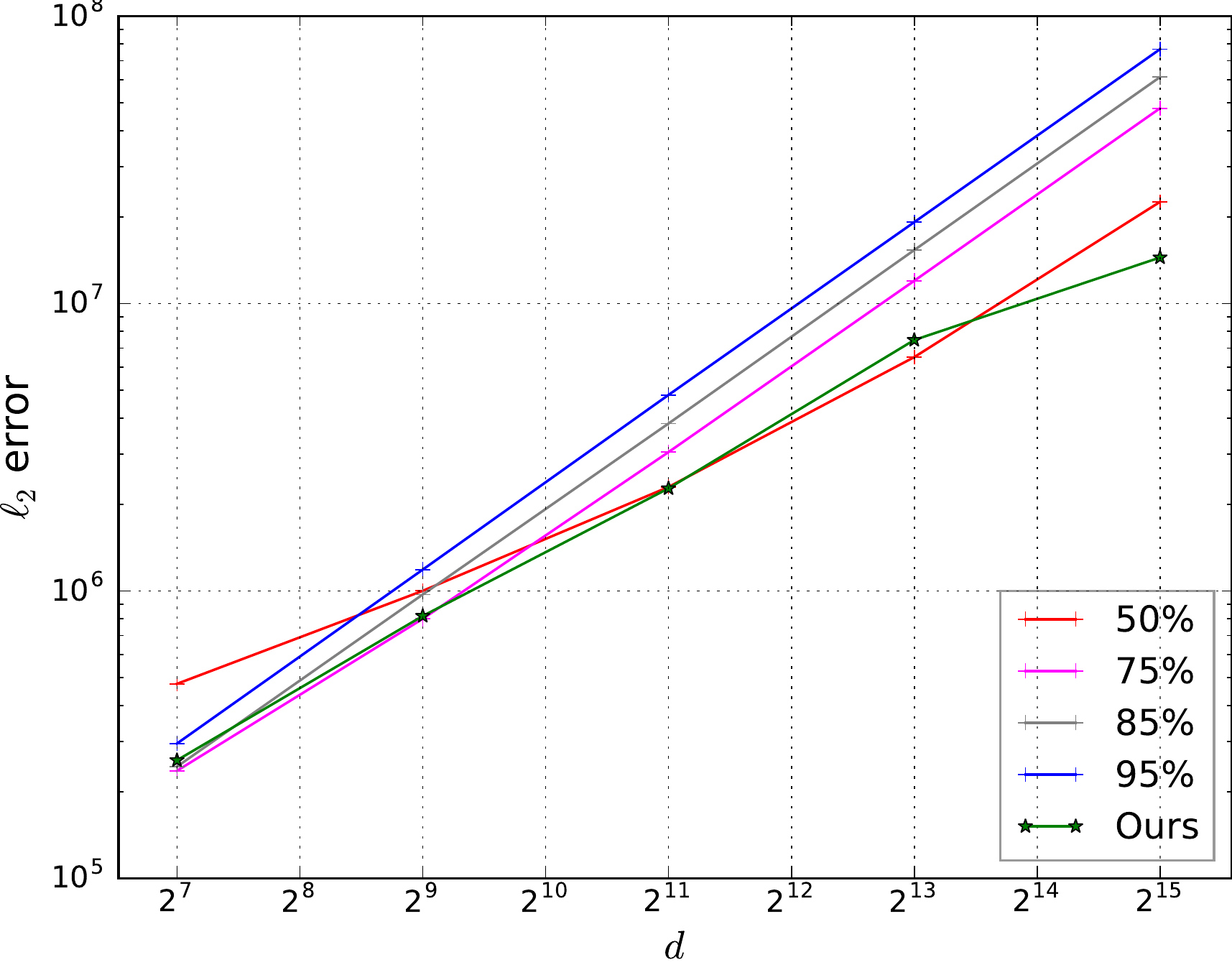}
        \vskip -.01in
        \subcaption{$\rho = 0.1$.}
     \end{minipage}
     \hfill
     \begin{minipage}[t]{0.3\textwidth}
         \centering
         \includegraphics[width=\textwidth]{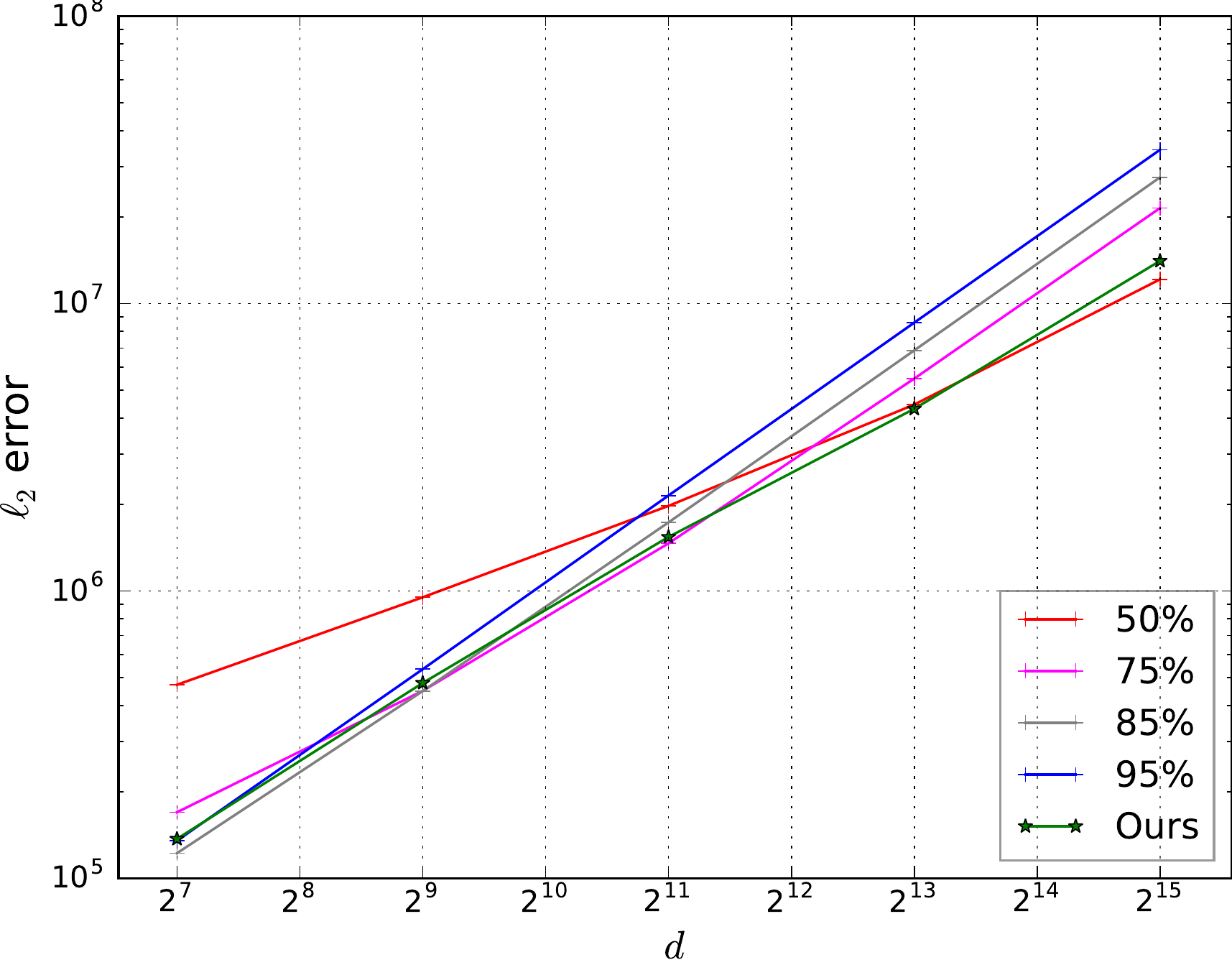}
         \vskip -.01in
         \subcaption{$\rho = 0.5$.}
     \end{minipage}
     \hfill
     \begin{minipage}[t]{0.3\textwidth}
        \centering
        \includegraphics[width=\textwidth]{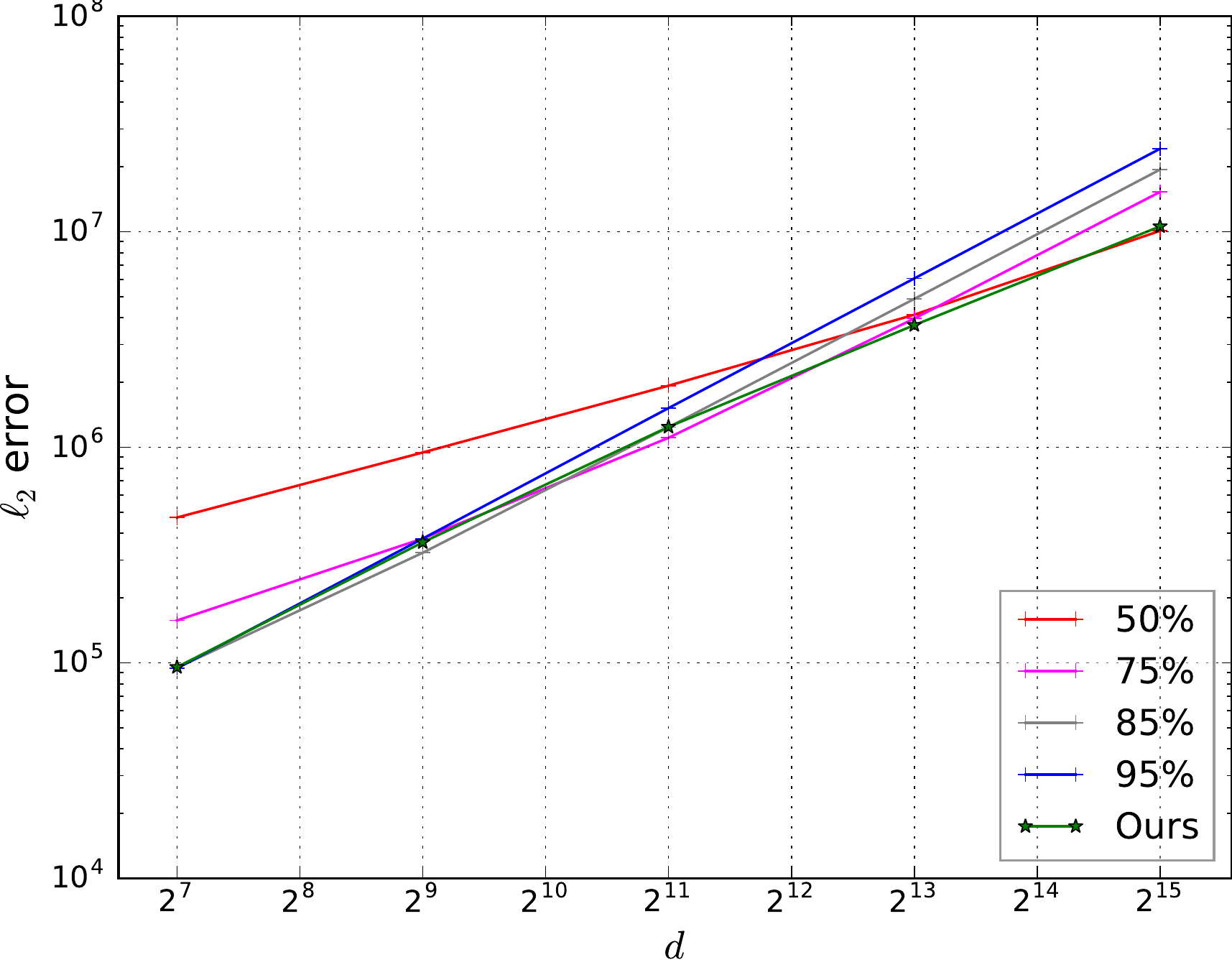}
        \vskip -.01in
        \subcaption{$\rho = 1$.}
     \end{minipage}
     \vskip -.01in
     \caption{$\ell_2$ error vs. $d$ for clipping at various quantiles of the norms on a synthetic dataset.}
     \label{fig:err_d_qt}
\vskip -.01in
\end{figure}

One important component of our method is the optimal clipping threshold $C$, which is the $(n-\sqrt{2d/\rho})$-th quantile of the norms.  Note that this depends on $d$ and $\rho$. Prior work \cite{andrew2019differentially} used a fixed quantile (e.g., the median).  To better see the relationship between the optimal $C$ and $d,\rho$, we used a synthetic dataset $\mathcal{D} = \{ i \cdot \mathbf{1}_d \}_{i \in [n]}$ with $n = 500$ and tried different quantiles as the clipping threshold.  The results in Fig. \ref{fig:err_d_qt} confirm our theoretical analysis: The optimal $C$ indeed depends on $d$ and $\rho$, while our choice attains the optimum.

\ifthenelse{\boolean{long}}{
\subsection{Results in the Local Model}
In the local model, there is no prior work on high-dimensional mean estimation.  The existing method \cite{gaboardi2019locally} works only for 1D data.  To avoid disadvantaging their method, we used Gaussian distribution with an identity $\Sigma$ (a $\Sigma$ with different components would make their method even worse), and apply their method coordinate-wise.  They adopt $(\varepsilon, \delta)$-DP, so we compose across all dimensions by the advanced composition theorem.  We also convert our $\rho$-zCDP guarantee to $(\varepsilon, \delta)$-DP for a fair comparison and set $\delta = 10^{-9}$ in all the experiments.

Their method has two versions: \texttt{KnownVar} and \texttt{UnkVar}.
The former requires $\sigma$ to be known, while the latter only requires $\sigma\in [\sigma_{\min} ,\sigma_{\max}]$, the same as our method. The results are reported in Fig. \ref{fig:err_local_d} to \ref{fig:err_local_R}, where we vary $d$, $\varepsilon$,  and $R$, respectively, on a set of $n=10^5$ samples generated from $\mathcal{N}(\mu, I_{d \times d})$.
Our method even outperforms \texttt{KnownVar}, which is actually not a fair comparison. 

However, compared with the central model, there is still a large gap from our LDP algorithm to the non-private mean, due to the inherent hardness of the local model.  By our theoretical analysis, this gap can be greatly reduced in the shuffle model, once a practical implementation of the range counting protocol from \cite{ghazi2019power} is available.

\begin{figure}[htbp]
     \centering
     \begin{minipage}[t]{0.3\textwidth}
        \centering
        \includegraphics[width=\textwidth]{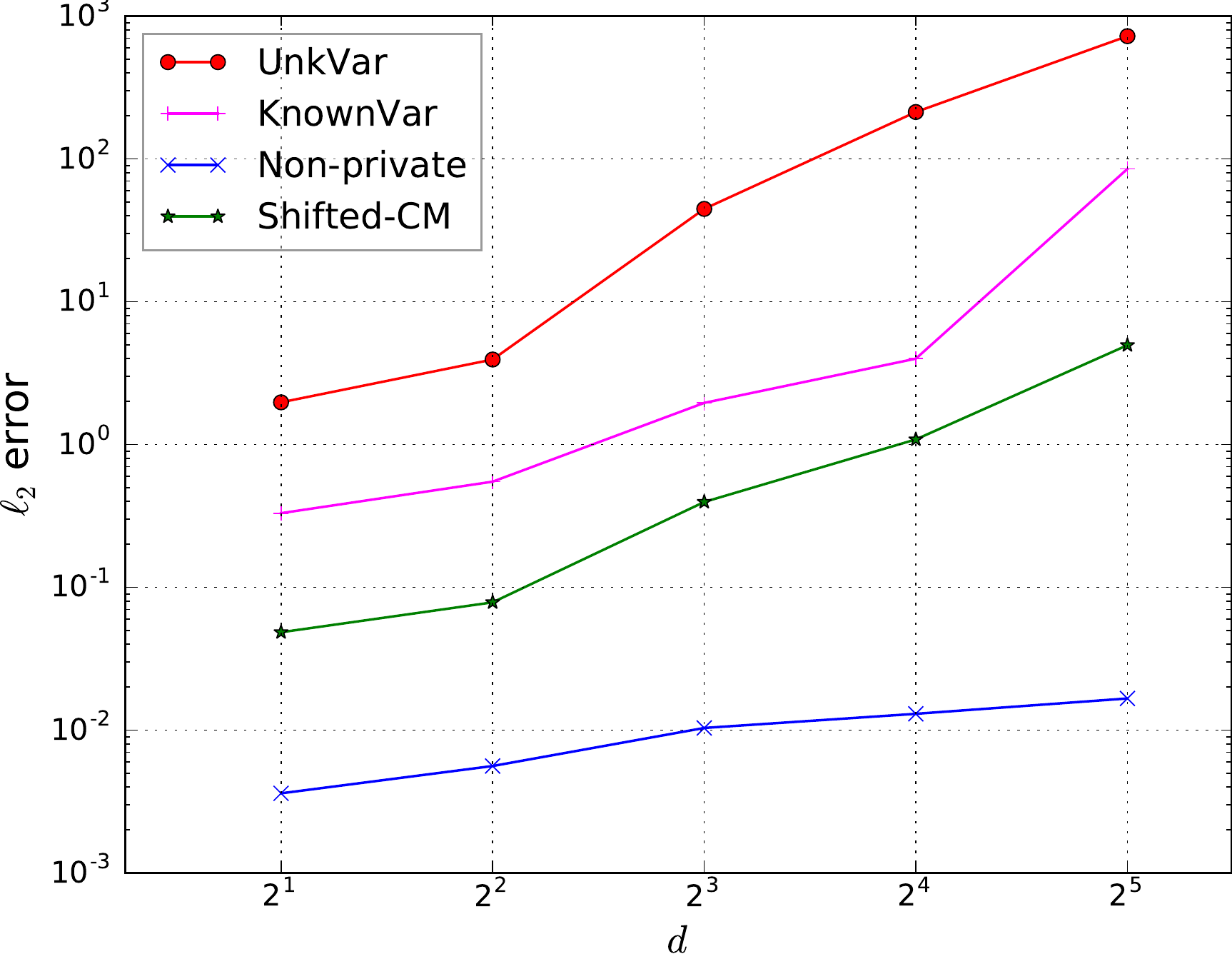}
        \subcaption{$\mu = 0 \cdot \mathbf{1}_d$.}
     \end{minipage}
     \hfill
     \begin{minipage}[t]{0.3\textwidth}
         \centering
         \includegraphics[width=\textwidth]{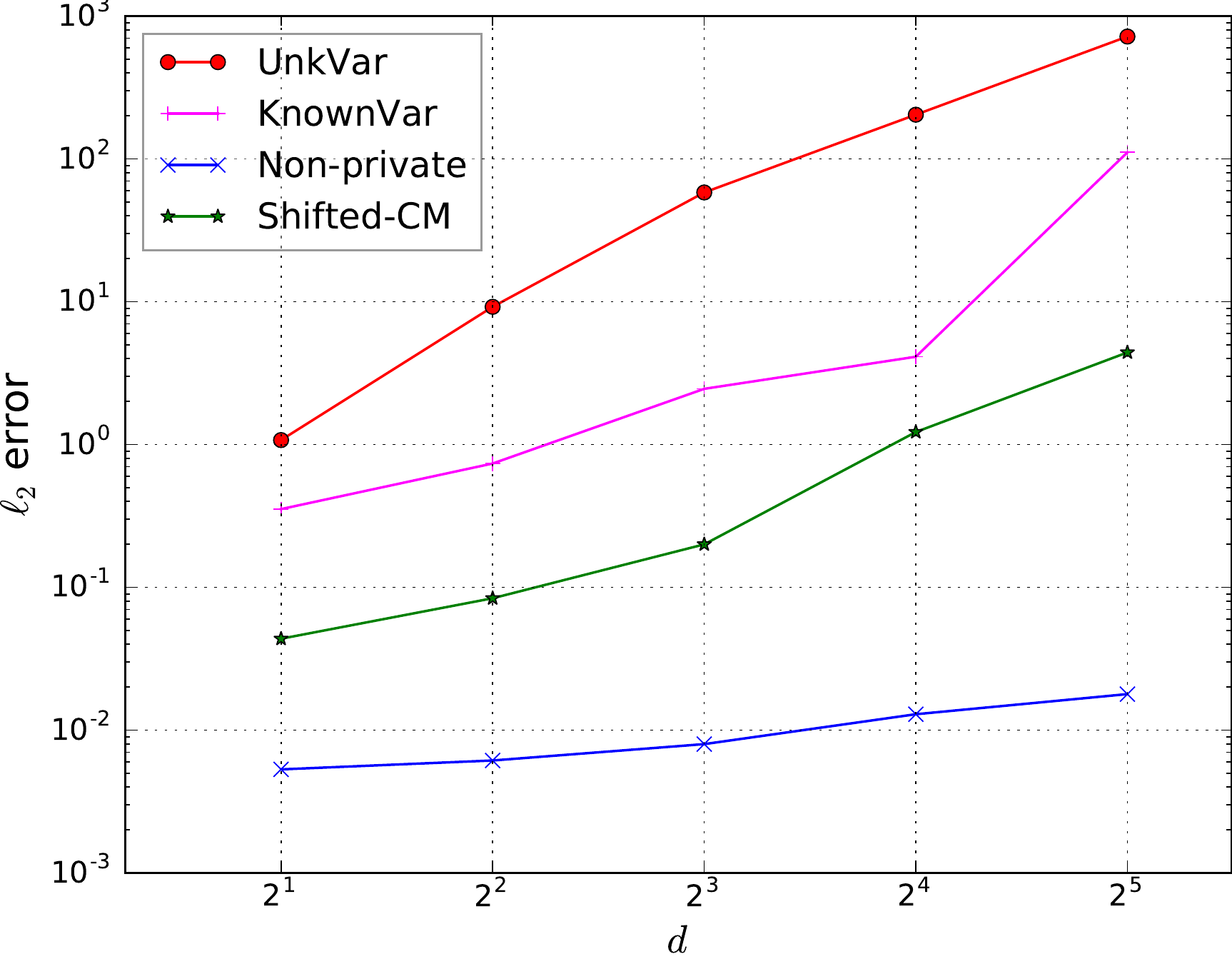}
         \subcaption{$\mu = 5 \cdot \mathbf{1}_d$.}
     \end{minipage}
     \hfill
     \begin{minipage}[t]{0.3\textwidth}
        \centering
        \includegraphics[width=\textwidth]{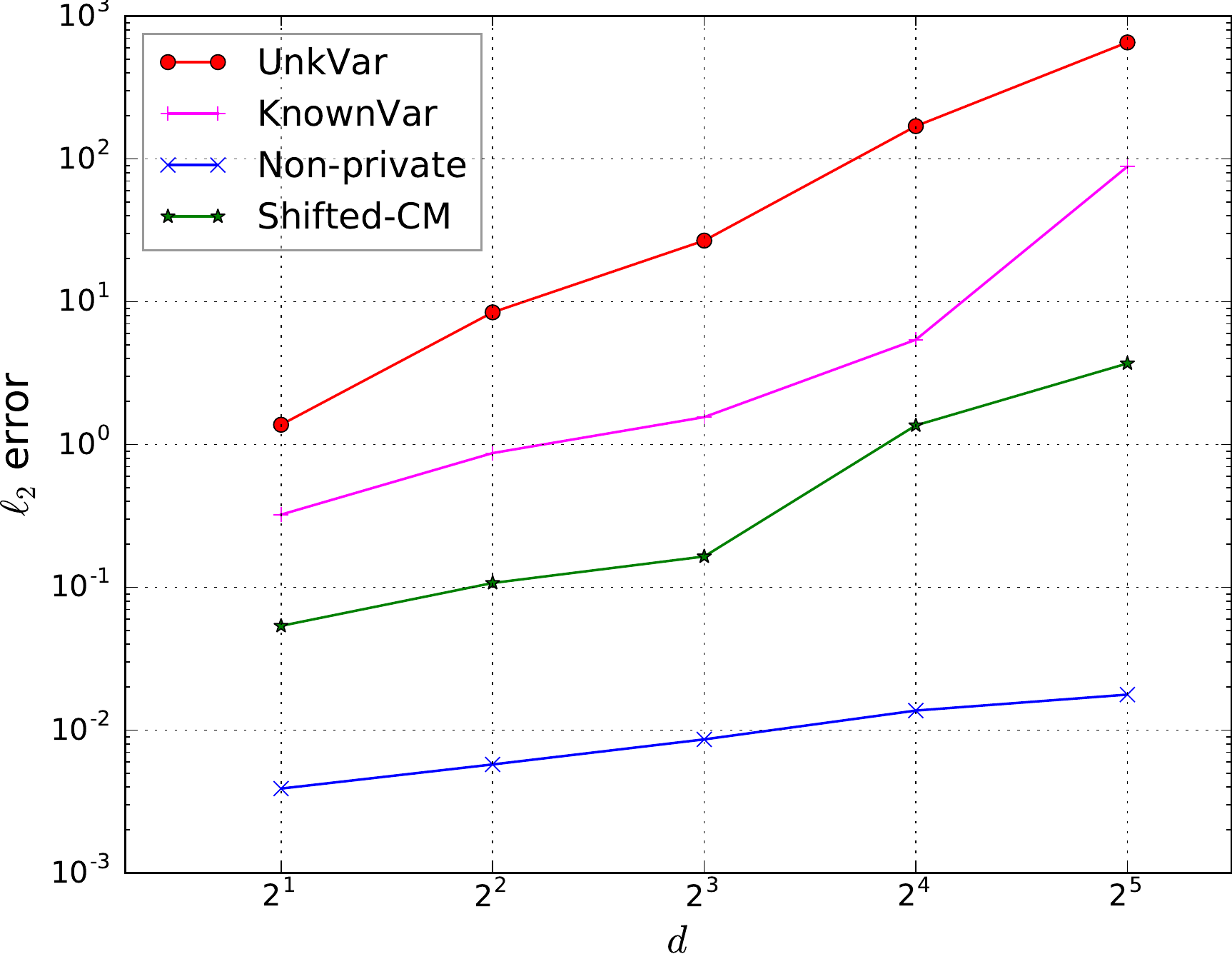}
        \subcaption{$\mu = 10 \cdot \mathbf{1}_d$.}
     \end{minipage}
     \caption{$\ell_2$ error vs. $d$  for $\mathcal{N}(\mu, I_{d \times d})$ in the local model, where $n = 10^5, \varepsilon = 1, R = 20\sqrt{d}$.}
     \label{fig:err_local_d}
\end{figure}

\begin{figure}[htbp]
     \centering
     \begin{minipage}[t]{0.3\textwidth}
        \centering
        \includegraphics[width=\textwidth]{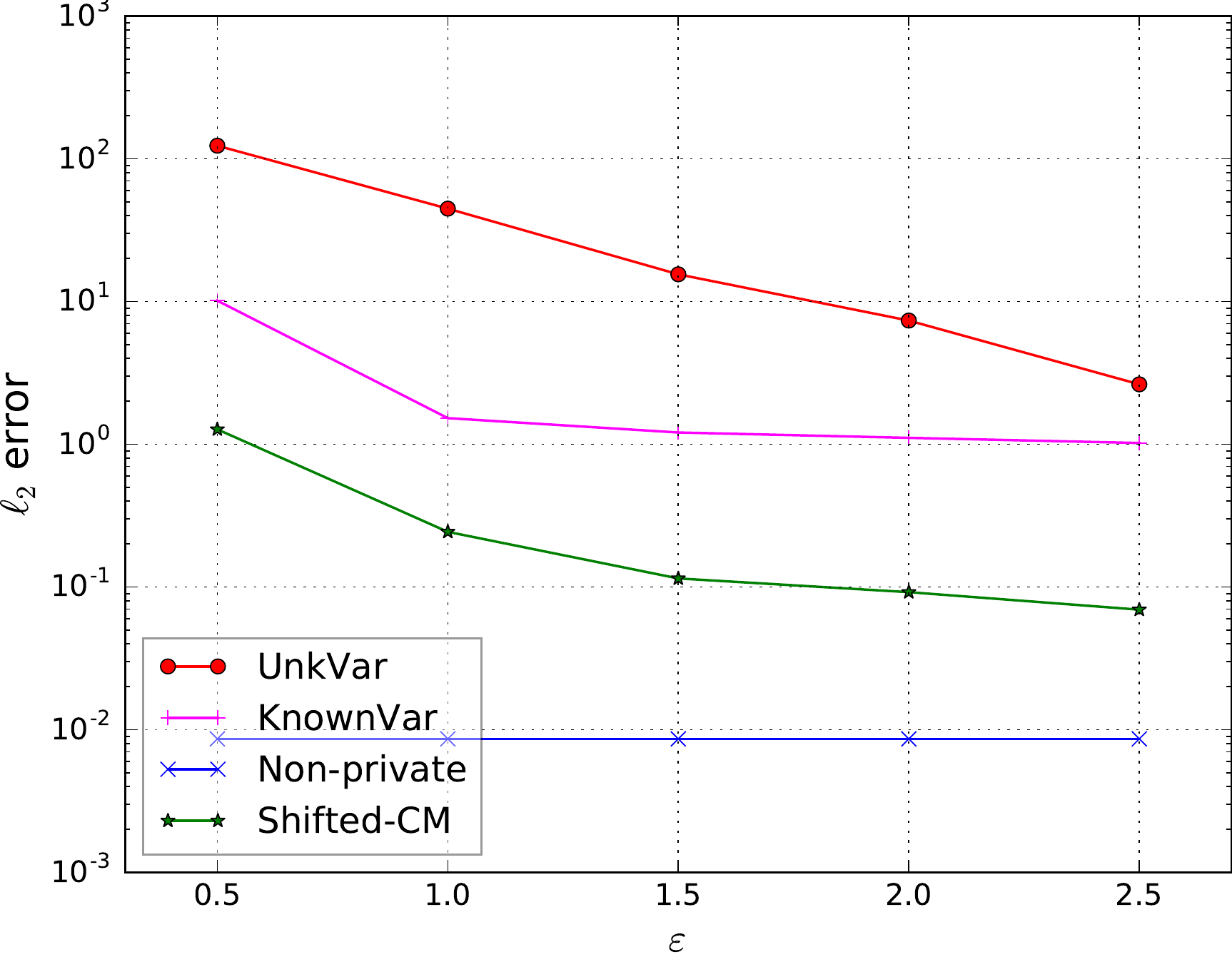}
        \subcaption{$\mu = 0 \cdot \mathbf{1}_d$.}
     \end{minipage}
     \hfill
     \begin{minipage}[t]{0.3\textwidth}
         \centering
         \includegraphics[width=\textwidth]{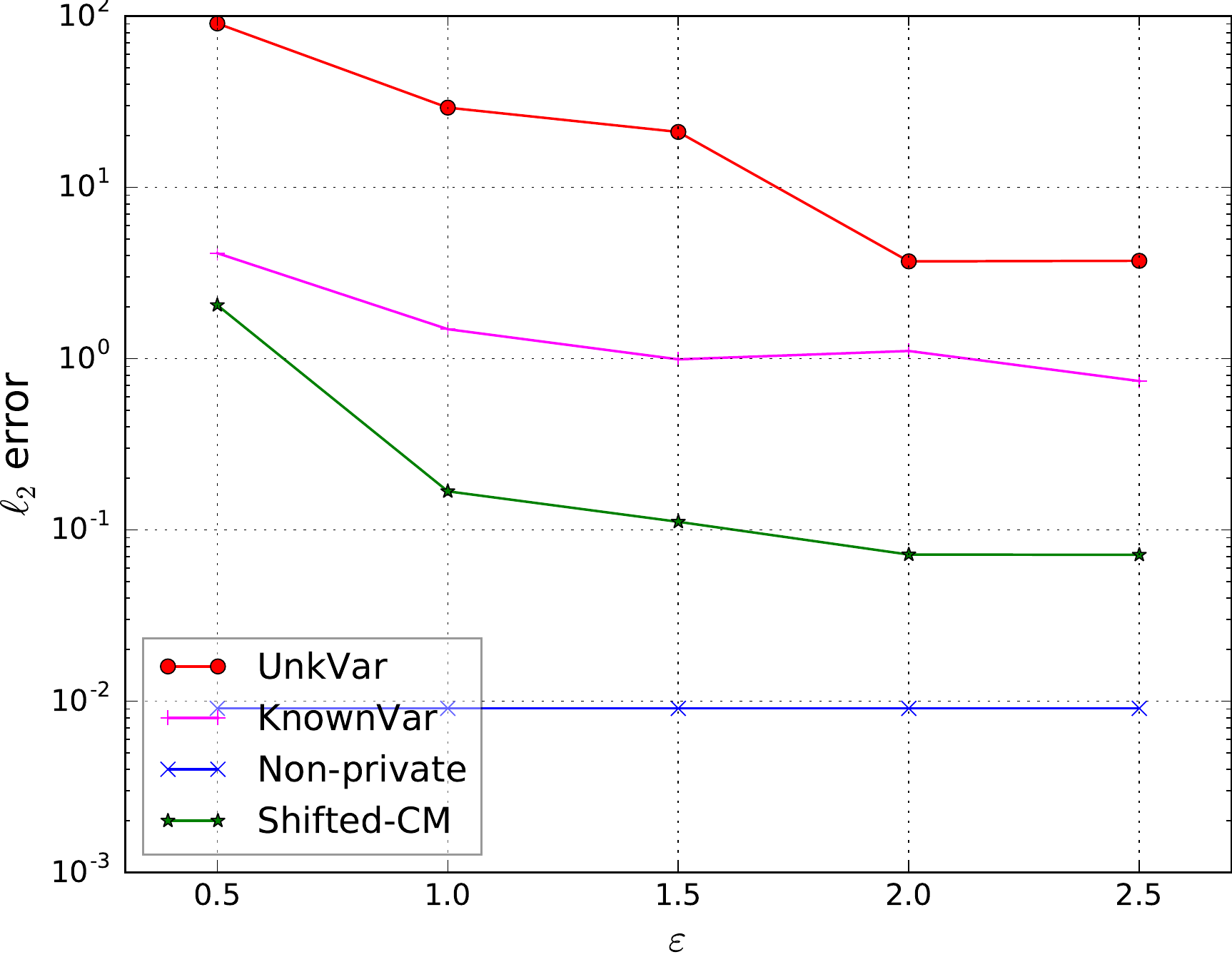}
         \subcaption{$\mu = 5 \cdot \mathbf{1}_d$.}
     \end{minipage}
     \hfill
     \begin{minipage}[t]{0.3\textwidth}
        \centering
        \includegraphics[width=\textwidth]{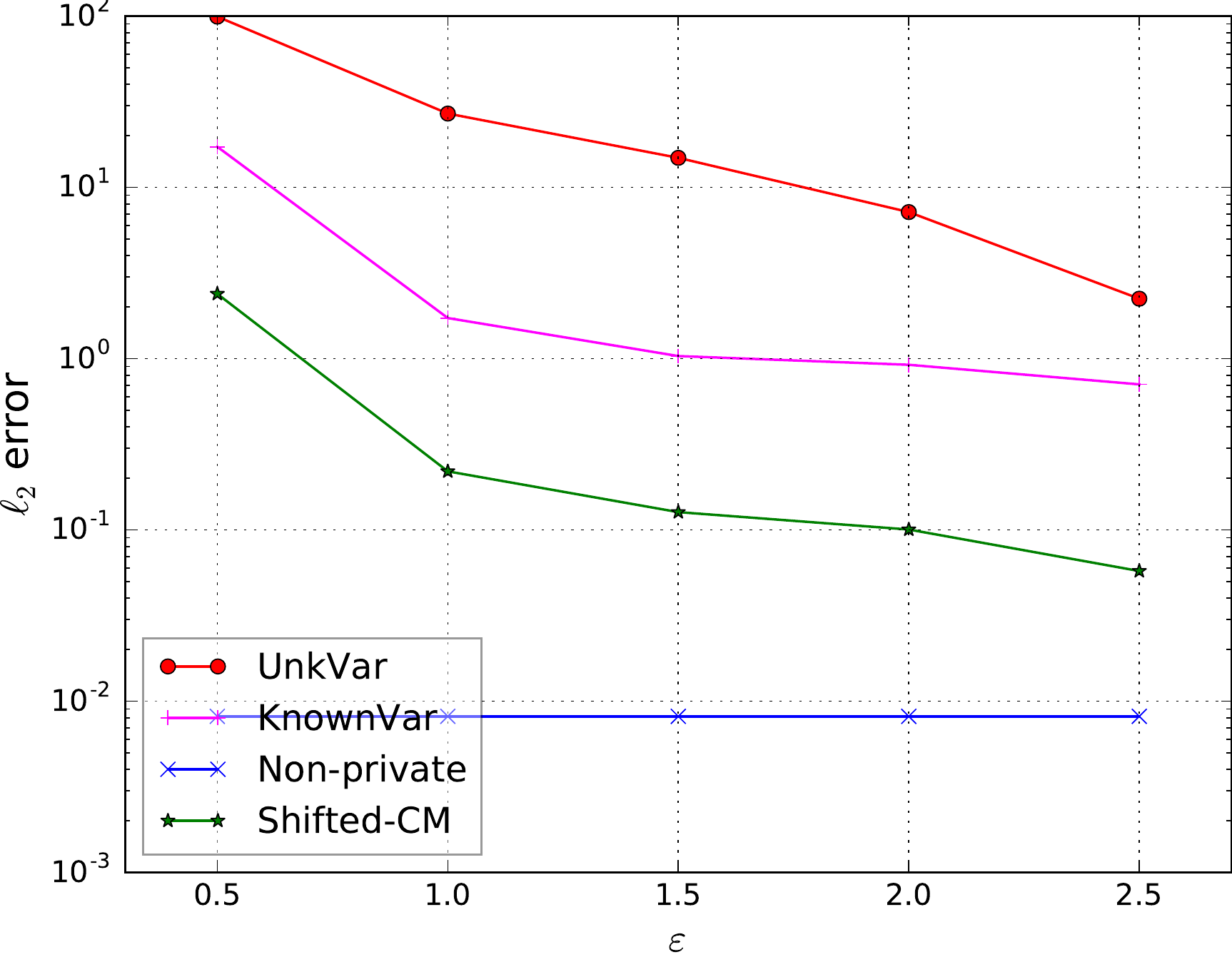}
        \subcaption{$\mu = 10 \cdot \mathbf{1}_d$.}
     \end{minipage}
     \caption{$\ell_2$ error vs. $\varepsilon$ for $\mathcal{N}(\mu, I_{d \times d})$ in the local model, where $n = 10^5, d = 8, R = 20\sqrt{d}$.}
     \label{fig:err_local_eps}
\end{figure}

\begin{figure}[htbp]
     \centering
     \begin{minipage}[t]{0.3\textwidth}
        \centering
        \includegraphics[width=\textwidth]{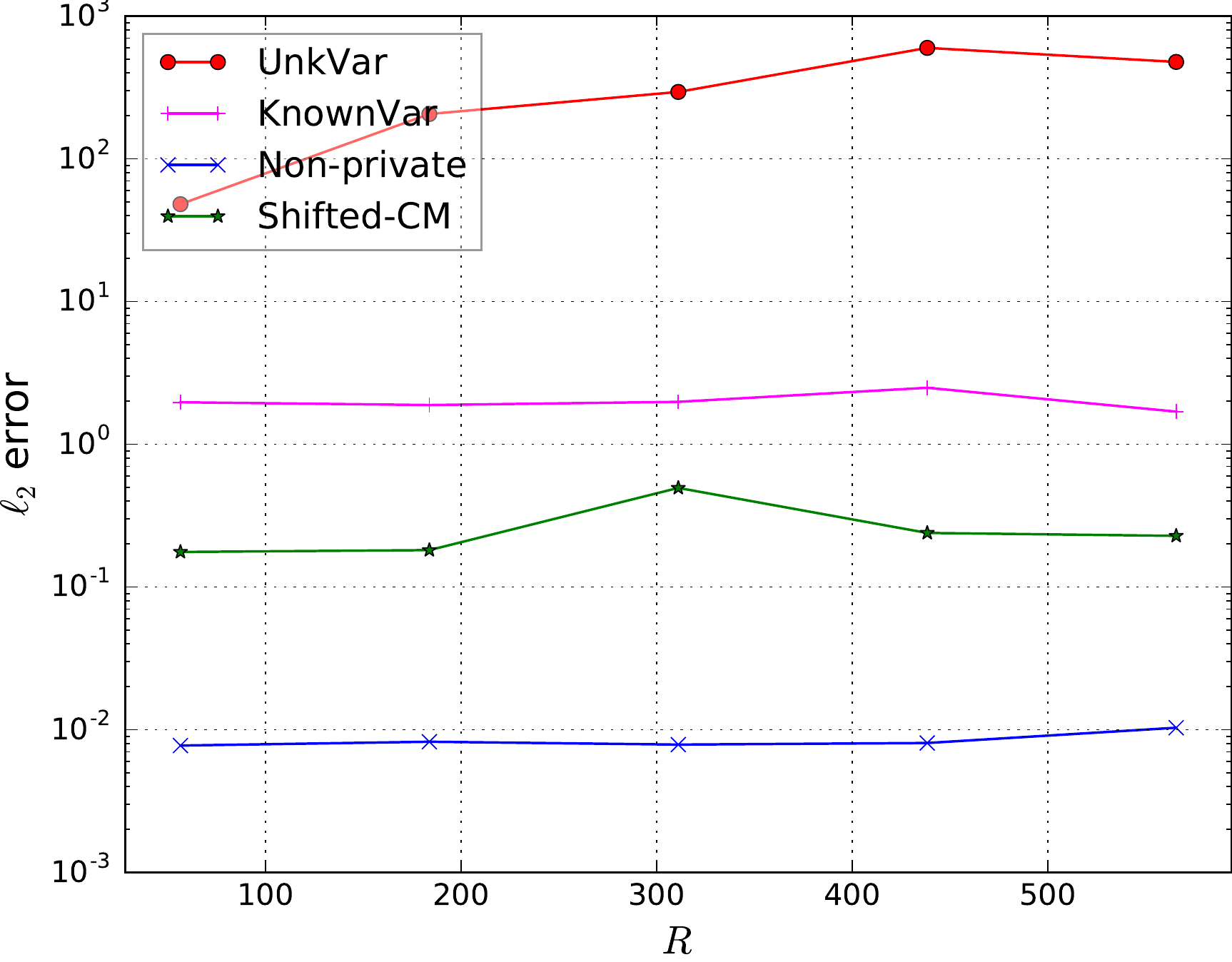}
        \subcaption{$\mu = 0 \cdot \mathbf{1}_d$.}
     \end{minipage}
     \hfill
     \begin{minipage}[t]{0.3\textwidth}
         \centering
         \includegraphics[width=\textwidth]{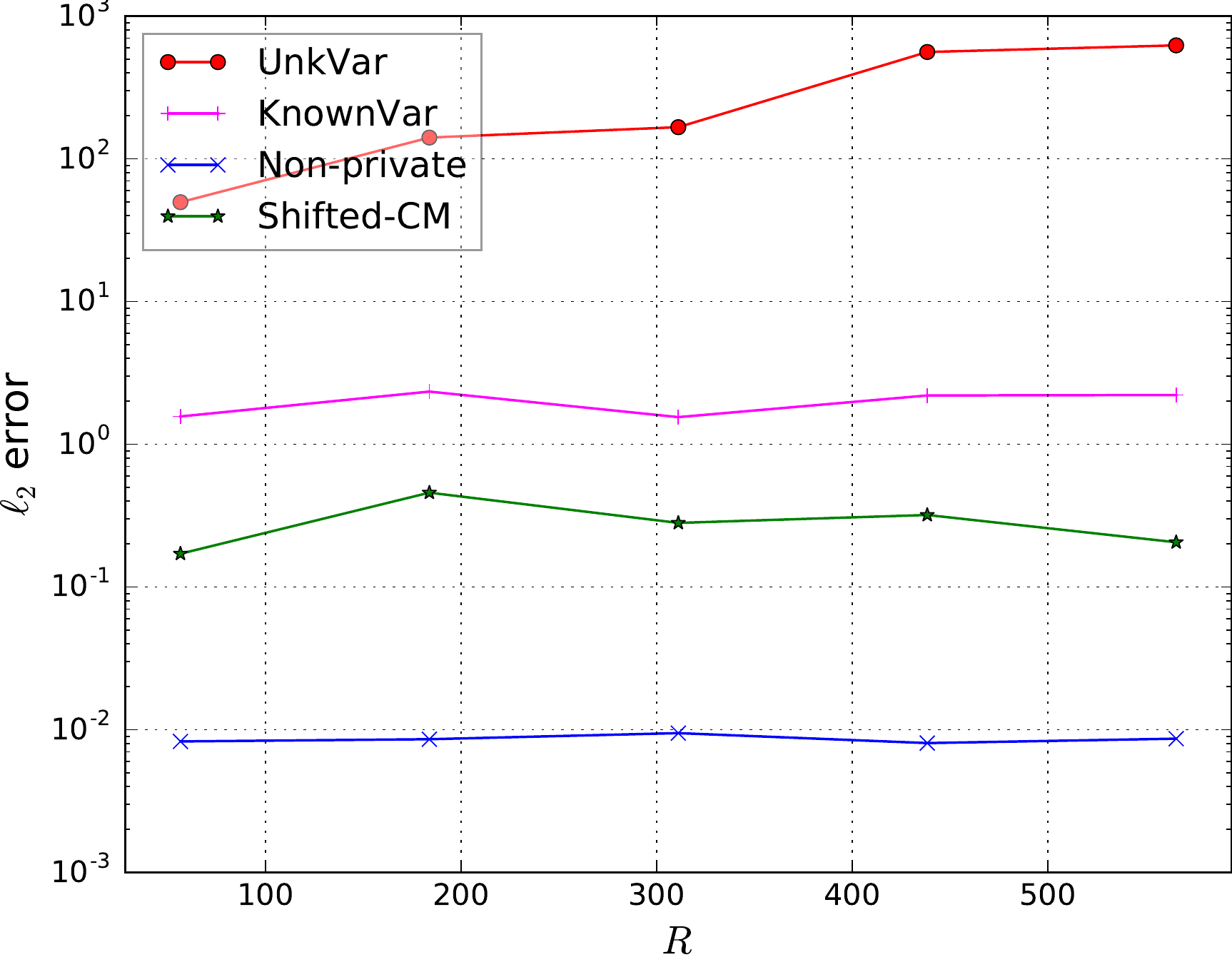}
         \subcaption{$\mu = 5 \cdot \mathbf{1}_d$.}
     \end{minipage}
     \hfill
     \begin{minipage}[t]{0.3\textwidth}
        \centering
        \includegraphics[width=\textwidth]{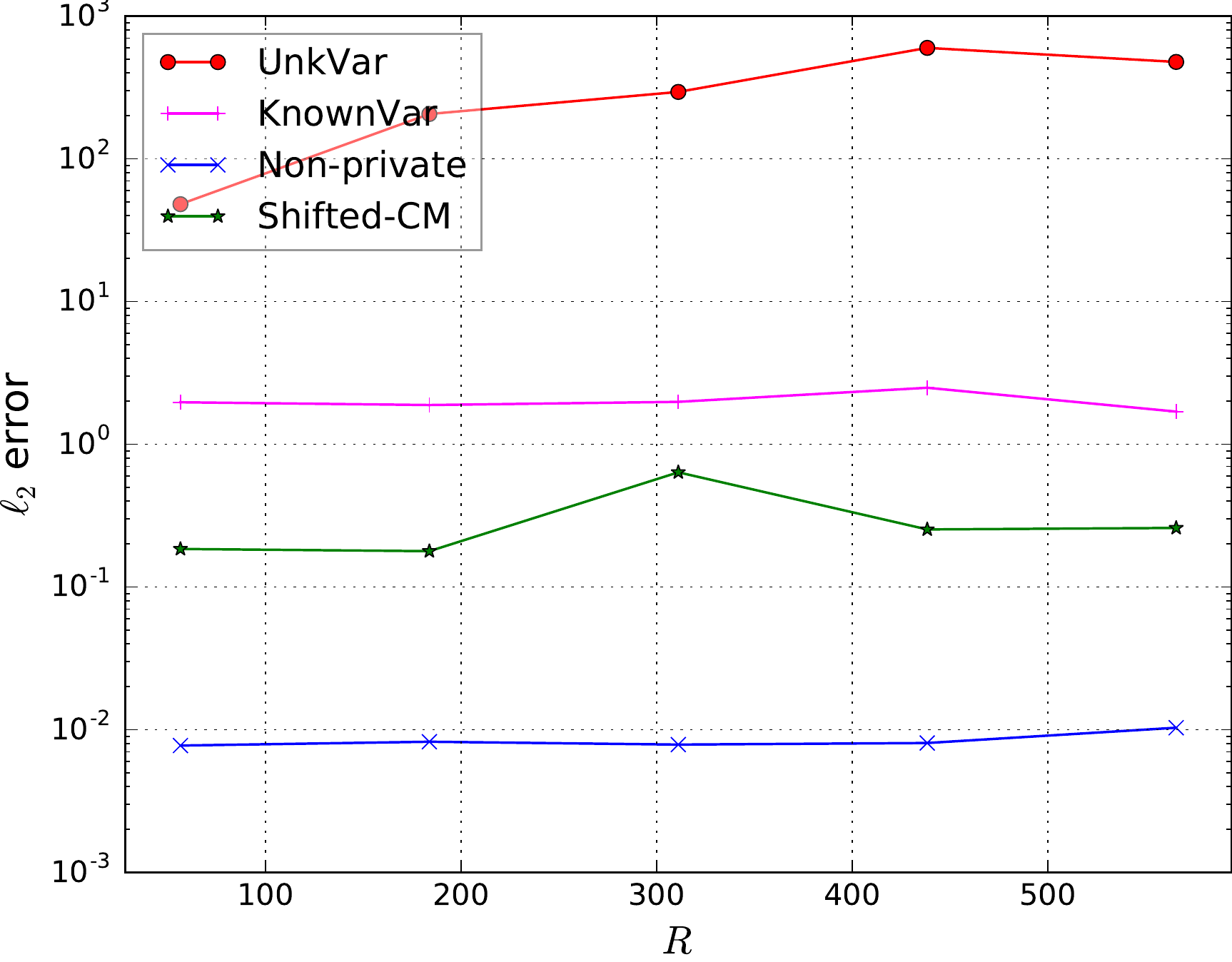}
        \subcaption{$\mu = 10 \cdot \mathbf{1}_d$.}
     \end{minipage}
     \caption{$\ell_2$ error vs. $R$ for $\mathcal{N}(\mu, I_{d \times d})$ in the local model, where $n = 10^5, \varepsilon = 1, d = 8$.}
     \label{fig:err_local_R}
\end{figure}
}{We also evaluated our algorithm in the local model and compare it with \cite{gaboardi2019locally}, and the results are presented in the supplementary material.}

\section*{Acknowledgments}
We would like to thank Gautam Kamath and Jonathan Ullman for helpful discussions about CoinPress~\cite{biswas2020coinpress}.

This work is supported by HKRGC under grants 16201318, 16201819, and 16205420.

\bibliography{paper}

\begin{thebibliography}{10}

\bibitem{abadi2016deep}
Martin Abadi, Andy Chu, Ian Goodfellow, H~Brendan McMahan, Ilya Mironov, Kunal
  Talwar, and Li~Zhang.
\newblock Deep learning with differential privacy.
\newblock In {\em Proceedings of the 2016 ACM SIGSAC conference on computer and
  communications security}, pages 308--318, 2016.

\bibitem{ailon2009fast}
Nir Ailon and Bernard Chazelle.
\newblock The fast johnson--lindenstrauss transform and approximate nearest
  neighbors.
\newblock {\em SIAM Journal on computing}, 39(1):302--322, 2009.

\bibitem{amin2019bounding}
Kareem Amin, Alex Kulesza, Andres Munoz, and Sergei Vassilvtiskii.
\newblock Bounding user contributions: A bias-variance trade-off in
  differential privacy.
\newblock In {\em International Conference on Machine Learning}, pages
  263--271. PMLR, 2019.

\bibitem{andrew2019differentially}
Galen Andrew, Om~Thakkar, H~Brendan McMahan, and Swaroop Ramaswamy.
\newblock Differentially private learning with adaptive clipping.
\newblock {\em arXiv preprint arXiv:1905.03871}, 2019.

\bibitem{asi2020instance}
Hilal Asi and John~C Duchi.
\newblock Instance-optimality in differential privacy via approximate inverse
  sensitivity mechanisms.
\newblock {\em Advances in Neural Information Processing Systems}, 33, 2020.

\bibitem{balle2020private}
Borja Balle, James Bell, Adria Gasc{\'o}n, and Kobbi Nissim.
\newblock Private summation in the multi-message shuffle model.
\newblock In {\em Proceedings of the 2020 ACM SIGSAC Conference on Computer and
  Communications Security}, pages 657--676, 2020.

\bibitem{bassily2015local}
Raef Bassily and Adam Smith.
\newblock Local, private, efficient protocols for succinct histograms.
\newblock In {\em Proceedings of the forty-seventh annual ACM symposium on
  Theory of computing}, pages 127--135, 2015.

\bibitem{bhowmick2018protection}
Abhishek Bhowmick, John Duchi, Julien Freudiger, Gaurav Kapoor, and Ryan
  Rogers.
\newblock Protection against reconstruction and its applications in private
  federated learning.
\newblock {\em arXiv preprint arXiv:1812.00984}, 2018.

\bibitem{biswas2020coinpress}
Sourav Biswas, Yihe Dong, Gautam Kamath, and Jonathan Ullman.
\newblock Coinpress: Practical private mean and covariance estimation.
\newblock {\em Advances in Neural and Information Processing Systems}, 2020.

\bibitem{bun2015differentially}
Mark Bun, Kobbi Nissim, Uri Stemmer, and Salil Vadhan.
\newblock Differentially private release and learning of threshold functions.
\newblock In {\em 2015 IEEE 56th Annual Symposium on Foundations of Computer
  Science}, pages 634--649. IEEE, 2015.

\bibitem{bun2016concentrated}
Mark Bun and Thomas Steinke.
\newblock Concentrated differential privacy: Simplifications, extensions, and
  lower bounds.
\newblock In {\em Theory of Cryptography Conference}, pages 635--658. Springer,
  2016.

\bibitem{chan11continual}
T.-H.~Hubert Chan, Elaine Shi, and Dawn Song.
\newblock Private and continual release of statistics.
\newblock {\em ACM Transactions on Information and System Security}, 2011.

\bibitem{chan2012optimal}
TH~Hubert Chan, Elaine Shi, and Dawn Song.
\newblock Optimal lower bound for differentially private multi-party
  aggregation.
\newblock In {\em European Symposium on Algorithms}, pages 277--288. Springer,
  2012.

\bibitem{cormode2019answering}
Graham Cormode, Tejas Kulkarni, and Divesh Srivastava.
\newblock Answering range queries under local differential privacy.
\newblock {\em Proceedings of the VLDB Endowment}, 12(10):1126--1138, 2019.

\bibitem{cormode2005improved}
Graham Cormode and Shan Muthukrishnan.
\newblock An improved data stream summary: the count-min sketch and its
  applications.
\newblock {\em Journal of Algorithms}, 55(1):58--75, 2005.

\bibitem{davies2021new}
Peter Davies, Vijaykrishna Gurunanthan, Niusha Moshrefi, Saleh Ashkboos, and
  Dan Alistarh.
\newblock New bounds for distributed mean estimation and variance reduction.
\newblock In {\em International Conference on Learning Representations}, 2021.

\bibitem{app17}
Apple Differential Privacy~Team.
\newblock Learning with privacy at scale.
\newblock
  \url{https://machinelearning.apple.com/docs/learning-with-privacy-at-scale/appledifferentialprivacysystem.pdf}.
\newblock December 2017.

\bibitem{ding2017collecting}
Bolin Ding, Janardhan Kulkarni, and Sergey Yekhanin.
\newblock Collecting telemetry data privately.
\newblock In {\em NIPS}, 2017.

\bibitem{duchi2019lower}
John Duchi and Ryan Rogers.
\newblock Lower bounds for locally private estimation via communication
  complexity.
\newblock In {\em Conference on Learning Theory}, pages 1161--1191. PMLR, 2019.

\bibitem{duchi2013local}
John~C Duchi, Michael~I Jordan, and Martin~J Wainwright.
\newblock Local privacy, data processing inequalities, and minimax rates.
\newblock {\em arXiv preprint arXiv:1302.3203}, 2013.

\bibitem{dwork2015pure}
Cynthia Dwork, Moni Naor, Omer Reingold, and Guy~N Rothblum.
\newblock Pure differential privacy for rectangle queries via private
  partitions.
\newblock In {\em International Conference on the Theory and Application of
  Cryptology and Information Security}, pages 735--751. Springer, 2015.

\bibitem{dwork2014algorithmic}
Cynthia Dwork, Aaron Roth, et~al.
\newblock The algorithmic foundations of differential privacy.
\newblock {\em Foundations and Trends in Theoretical Computer Science},
  9(3-4):211--407, 2014.

\bibitem{erlingsson2014rappor}
{\'U}lfar Erlingsson, Vasyl Pihur, and Aleksandra Korolova.
\newblock Rappor: Randomized aggregatable privacy-preserving ordinal response.
\newblock In {\em Proceedings of the 2014 ACM SIGSAC conference on computer and
  communications security}, pages 1054--1067, 2014.

\bibitem{gaboardi2019locally}
Marco Gaboardi, Ryan Rogers, and Or~Sheffet.
\newblock Locally private mean estimation: $ z $-test and tight confidence
  intervals.
\newblock In {\em The 22nd International Conference on Artificial Intelligence
  and Statistics}, pages 2545--2554. PMLR, 2019.

\bibitem{ghazi2019power}
Badih Ghazi, Noah Golowich, Ravi Kumar, Rasmus Pagh, and Ameya Velingker.
\newblock On the power of multiple anonymous messages.
\newblock In {\em EuroCRYPT}, 2021.

\bibitem{gilbert2002summarize}
Anna~C Gilbert, Yannis Kotidis, S~Muthukrishnan, and Martin~J Strauss.
\newblock How to summarize the universe: Dynamic maintenance of quantiles.
\newblock In {\em VLDB'02: Proceedings of the 28th International Conference on
  Very Large Databases}, pages 454--465. Elsevier, 2002.

\bibitem{kamath2019privately}
Gautam Kamath, Jerry Li, Vikrant Singhal, and Jonathan Ullman.
\newblock Privately learning high-dimensional distributions.
\newblock In {\em Conference on Learning Theory}, pages 1853--1902. PMLR, 2019.

\bibitem{kamath2020private}
Gautam Kamath, Vikrant Singhal, and Jonathan Ullman.
\newblock Private mean estimation of heavy-tailed distributions.
\newblock In {\em Conference on Learning Theory}, pages 2204--2235. PMLR, 2020.

\bibitem{kamath2020primer}
Gautam Kamath and Jonathan Ullman.
\newblock A primer on private statistics.
\newblock {\em arXiv preprint arXiv:2005.00010}, 2020.

\bibitem{karwa2018finite}
Vishesh Karwa and Salil Vadhan.
\newblock Finite sample differentially private confidence intervals.
\newblock In {\em 9th Innovations in Theoretical Computer Science Conference
  (ITCS 2018)}. Schloss Dagstuhl-Leibniz-Zentrum fuer Informatik, 2018.

\bibitem{laurent2000adaptive}
Beatrice Laurent and Pascal Massart.
\newblock Adaptive estimation of a quadratic functional by model selection.
\newblock {\em Annals of Statistics}, pages 1302--1338, 2000.

\bibitem{levy2021learning}
Daniel Levy, Ziteng Sun, Kareem Amin, Satyen Kale, Alex Kulesza, Mehryar Mohri,
  and Ananda~Theertha Suresh.
\newblock Learning with user-level privacy.
\newblock {\em arXiv preprint arXiv:2102.11845}, 2021.

\bibitem{mcgregor2010limits}
Andrew McGregor, Ilya Mironov, Toniann Pitassi, Omer Reingold, Kunal Talwar,
  and Salil Vadhan.
\newblock The limits of two-party differential privacy.
\newblock In {\em 2010 IEEE 51st Annual Symposium on Foundations of Computer
  Science}, pages 81--90. IEEE, 2010.

\bibitem{mcmahan2017learning}
H~Brendan McMahan, Daniel Ramage, Kunal Talwar, and Li~Zhang.
\newblock Learning differentially private recurrent language models.
\newblock {\em arXiv preprint arXiv:1710.06963}, 2017.

\bibitem{nissim2007smooth}
Kobbi Nissim, Sofya Raskhodnikova, and Adam Smith.
\newblock Smooth sensitivity and sampling in private data analysis.
\newblock In {\em Proceedings of the thirty-ninth annual ACM symposium on
  Theory of computing}, pages 75--84, 2007.

\bibitem{pichapati2019adaclip}
Venkatadheeraj Pichapati, Ananda~Theertha Suresh, Felix~X Yu, Sashank~J Reddi,
  and Sanjiv Kumar.
\newblock Adaclip: Adaptive clipping for private sgd.
\newblock {\em arXiv preprint arXiv:1908.07643}, 2019.

\bibitem{vadhan2017complexity}
Salil Vadhan.
\newblock The complexity of differential privacy.
\newblock In {\em Tutorials on the Foundations of Cryptography}, pages
  347--450. Springer, 2017.

\end{thebibliography}
\bibliographystyle{plain}

\end{document}